\documentclass[reqno]{amsart}
\usepackage{amsmath,amssymb}
\usepackage{graphics,epsfig,color}
\usepackage[mathscr]{euscript}
\usepackage{mathtools} 
\newtheorem{theorem}{Theorem}[section]
\newtheorem{proposition}[theorem]{Proposition}
\newtheorem{lemma}[theorem]{Lemma}

\theoremstyle{definition}
\newtheorem{definition}[theorem]{Definition}
\newtheorem{assumption}[theorem]{Assumption}
\newtheorem{remark}[theorem]{Remark}
\numberwithin{equation}{section}
\numberwithin{theorem}{section}
\renewcommand{\epsilon}{\varepsilon}
\newcommand{\ve}{\varepsilon}
\newcommand{\mc}[1]{{\mathcal #1}}
\newcommand{\mf}[1]{{\mathfrak #1}}
\newcommand{\bb}[1]{{\mathbb #1}}
\newcommand{\ms}[1]{{\mathscr #1}}
\newcommand{\bs}[1]{{\boldsymbol #1}}
\newcommand{\R}{\mathbb{R}}
\newcommand{\eps}{\varepsilon}

\newcommand{\ee}{\mathrm{e}}

\newcommand{\id}{{1 \mskip -5mu {\rm I}}}
\newcommand{\opid}{\mathop  {\rm id}\nolimits}


\newcommand{\Ent}{\mathop{\rm Ent}\nolimits}

\newcommand{\de}{\mathop{}\!\mathrm{d}}

\title[LDP  associated to the homogeneous Boltzmann equation]
{Asymptotic  probability of energy increasing
  solutions to the homogeneous Boltzmann equation}
\author[G.\ Basile]{Giada Basile}
\address{Giada Basile \hfill\break \indent
   Dipartimento di Matematica, Universit\`a di Roma `La Sapienza'
   \hfill\break \indent
   P.le Aldo Moro 2, 00185 Roma, Italy}
 \email{basile@mat.uniroma1.it}
 \author[D.\ Benedetto]{Dario Benedetto}
 \address{Dario Benedetto \hfill\break \indent
   Dipartimento di Matematica, Universit\`a di Roma `La Sapienza'
   \hfill\break \indent
   P.le Aldo Moro 2, 00185 Roma, Italy}
 \email{benedetto@mat.uniroma1.it}
\author[L.\ Bertini]{Lorenzo Bertini}
\address{Lorenzo Bertini \hfill\break \indent
   Dipartimento di Matematica, Universit\`a di Roma `La Sapienza'
   \hfill\break \indent
   P.le Aldo Moro 2, 00185 Roma, Italy}
 \email{bertini@mat.uniroma1.it}
  \author[E.\ Caglioti]{Emanuele Caglioti}
 \address{Emanuele Caglioti \hfill\break \indent
	  Dipartimento di Matematica, Universit\`a di Roma `La Sapienza' 
	\hfill\break \indent
	P.le Aldo Moro 2, 00185 Roma, Italy}
 \email{caglioti@mat.uniroma1.it}

\begin{document}
\begin{abstract}
  Weak solutions to the homogeneous Boltzmann equation with increasing
  energy have been constructed by Lu and Wennberg.  We consider an
  underlying microscopic stochastic model with binary collisions
  (Kac's model) and show that these solutions are atypical. More
  precisely, we prove that the probability of observing these paths is
  exponentially small in the number of particles and compute the
  exponential rate.  This result is obtained by improving the
  established large deviation estimates in the canonical setting. 
  Key ingredients are  the extension of Sanov's theorem
  to the microcanonical ensemble and large deviations for the Kac's
  model in the microcanonical setting.
\end{abstract}

\keywords{Kac model, Boltzmann equation,  Large deviations, Lu and Wennberg solutions}
\subjclass[2010]{
  35Q20 
  60F10 
  82C40 
}

\maketitle
\thispagestyle{empty}

\section{Introduction}
\label{sez:0}   

The derivation of the Boltzmann equation from an underlying
microscopic dynamics of $N$ interacting
particles is a paradigmatic problem in non-equilibrium
statistical mechanics.
It is based on the validity of the \emph{Stosszahlansatz}
with probability one 
in the limit $N\to +\infty$.
At a more refined level,
it is possible to 
analyze the corresponding large deviations, whose
derivation is related to the validity
of  the \emph{Stosszahlansatz} with 
probability super-exponentially close to one for $N$ large.

In this perspective, 
the most challenging
case of Newtonian dynamics
of hard spheres in the
Boltzmann-Grad limit has
been recently 
discussed in
\cite{BGSS2}.
Nevertheless, also the case of stochastic dynamics presents interesting features.
The first result in this setting has been  obtained in \cite{Le}, where
a large deviation upper bound is derived in the space homogeneous
case. A complete large deviation principle has been
obtained in \cite{Re}
for a space inhomogeneous model with a finite set of  velocities.
In \cite{BBBO}
a large deviation upper bound is achieved for 
 a homogeneous model which conserves momentum but not
energy, 
while the matching lower bound is obtained for a restricted class of paths.
A similar  result, in the case of energy and momentum conservation, has been  proven
in \cite{He}. In this case the upper and lower  bound match for a subset of paths for which energy is conserved.

For energy preserving microscopic dynamics with unbounded velocities, a main obstacle to a complete proof of large deviations 
is the 
occurrence of macroscopic paths with finite rate function
that violate the conservation of the energy.  In particular,
as discussed in  \cite{He},
a class of such paths is given by the solutions to the
homogeneous Boltzmann equations constructed by Lu and Wennberg in
\cite{LuW}, for which the energy is increasing.
Another example of large deviation asymptotic for non-conserving
energy path has been constructed in \cite{BBBC},
for a Kac-like microscopic dynamics with discrete energies.
More precisely, as proven in \cite{He}, the upper bound rate function derived
in \cite{Le} vanishes on Lu and Wennberg solutions,
while their asymptotic probability is 
$\ee^{-cN}$, which implies the upper bound rate function
in \cite{Le} is not optimal.

The homogeneous Boltzmann equation with hard sphere  cross-section
reads as
\begin{equation}
  \label{eq:hb}
  \partial_t f_t(v) =
  \frac 12\int_{\bb R^d}  \de v_* \! \int_{S_{d-1}} \de \omega \,
  |(v-v_*)\cdot \omega| \big( f_t(v') f_t(v_*')  -
  f_t(v) f_t(v_*) \big).
\end{equation}  
where $S_{d-1}$ is the sphere in $\bb R^d$.
The associate Cauchy problem has a unique solution in the class
of function with constant energy \cite{MW}. 
Let us discuss how the Lu and Wennberg solutions, with increasing
energy, can be constructed
in the special case in which the energy has a unique jump at time zero.
Consider a sequence of initial densities $f_0^n$  such that
$f_0^n$ converges weakly to $f_0$ but
$e \coloneqq \lim_{n} \int f_0^n(v) v^2 \de v > \int f_0 (v) v^2 \de v$,
namely a fraction of energy evaporates at infinity.
Denoting by $f_t^n$ the unique energy conserving solution
of the homogeneous Boltzmann equation
with initial datum $f_0^n$,
we then have that $f^n_t$, $t\ge 0$,  converges to
a solution to the homogeneous Boltzmann equation,
with initial datum $f_0$, but the energy has a positive jump
at time $0$.
Observe that this construction does not yield a jump
in the energy if the total cross section
is bounded, in fact in this case there is uniqueness of the
solution without the requirement of energy conservation.
A model with this feature has been analyzed in \cite{Er2}.

A main result of this paper is the proposal of
a rate function that improves the one in \cite{Le}, 
being strictly positive on Lu and Wennberg solutions.
In particular we consider a Kac walk with the hard sphere cross section,
and prove the large deviation upper bound with such rate function.
The matching large deviation lower bound is achieved for both
Lu and Wennberg solutions and the same restricted
class of path as in \cite{BBBO,He}.

As it is clear from the previous construction, Lu and Wennberg solutions
can be produced from a microscopic model only 
if there exists a fluctuation of the initial energy
and then following the typical behavior.
To introduce the improved rate function
we consider first the case in which the initial
velocities are sampled from the microcanonical ensembles,
namely the total energy and momentum are not random, and
given by $(eN,uN)$.
After \cite{BBBC,BBBO}, we consider as empirical observable 
the pair $(\pi^N,Q^N)$ where $\pi^N$ is the empirical distribution of
velocities, while the empirical flux $Q^N$ records the collision
times together the incoming and outgoing velocities.
The microcanonical rate function  reads
\begin{equation}
  \label{IHJ}
  I_{e,u}(\pi,Q) = H_{e,u}(\pi_0) + J_{e,u}(\pi,Q),
\end{equation}
where $H_{e,u}$ takes into account the fluctuation of the
initial data, while $J_{e,u}$ is the dynamical contribution, that is defined as follows.
Set $\de Q^\pi\coloneqq
\frac 1 2  \de \pi\otimes \de \pi\,  B \de\omega \de t$, with $B=B(v-v_*, \omega)=\frac 1 2 |(v-v_*)\cdot\omega|$,
and let $J(\pi, Q)$ be  the relative entropy of $Q$ with respect to $Q^\pi$, namely 
\begin{equation}\label{Jint}
J(\pi, Q)=\int\Big\{ \de Q \log \frac{\de Q}{\de Q^\pi} -\de Q +\de Q^\pi\Big\}.
\end{equation}
Then, by the microcanonical constraint,
$J_{e,u}(\pi, Q)$ is equal to $J(\pi, Q)$ if the energy of $\pi$ does not exceed $e$ and its momentum is equal to $u$,
while $J_{e,u}(\pi, Q)=+\infty$ otherwise.
The functional $H_{e,u}$
will be derived by extending Sanov's theorem to
the microcanonical  ensemble. In particular, $H_{e,u}(\pi_0)$
is infinite when the energy of $\pi_0$ exceed  $e$,
but it can be finite when the energy is below $e$.
Namely, loss of energy at time $0$ occurs with exponentially
small probability.
According to \eqref{IHJ},
the  asymptotic probability of Lu and Wennberg solutions
is then $\exp (-NH_{e,u}(\pi_0))$.

We then analyze the case in which the initial velocities are sampled
from the canonical ensemble, namely are i.i.d. $m$-distributed  random variables.
The canonical rate function can then be obtained from
\eqref{IHJ} as follows
\begin{equation}\label{Iint}
  I(\pi,Q) = \inf_{e,u} \big( A(e,u) + I_{e,u} (\pi,Q) \big),
\end{equation}
where $A$ 
is the rate function
for the energy and momentum of  the sum of i.i.d. $m$-distributed random variables,
given by Cram\'er's theorem.
The rate function introduced in \cite{Le} and further analyzed in \cite{He} is given by
$$\mc I (\pi, Q)= \Ent (\pi_0|m)+ J(\pi, Q),$$
where $\Ent(\pi_0|m)$ is the relative entropy. In particular, $\mc I$ vanishes on  Lu and Wennberg solutions. We show that 
$I$ defined in \eqref{Iint} is larger than $\mc I$ and vanishes only on the unique energy conserving solution to
\eqref{eq:hb}. Moreover, we compute explicitly its value 
on the Lu and Wennberg solutions, which is given by $c \Delta \mc E$,
where $c$ is a strictly positive constant depending on the tail of initial distribution  $m$
and $\Delta \mc E$ is the total gain of the energy.
Hence, the asymptotic probability of Lu and Wennberg solutions is 
$\ee^{-cN \Delta \mc E}$.

The present work is organized as follows.
In Section \ref{sez:1} we consider the static case,
by analyzing the large deviations of the empirical measure
when the velocities are sampled from the microcanonical ensemble.
As discussed before, we show that the large deviation functional
is finite on probability measures with energy evaporation.
In Section \ref{sez:2}
we state the large deviation principle for the Kac model with  hard sphere cross section
and microcanonical initial data. The corresponding
proof is carried out in Sections \ref{sez:3}, \ref{sez:4}.
In Section \ref{sez:6} we derive the large deviation
asymptotic for the Kac model with canonical
initial distribution.
Section \ref{sez:5} is finally devoted to the asymptotic probability  of
Lu and Wennberg solutions.

\section{Sanov theorem for microcanonical ensemble}
\label{sez:1}

Sanov's theorem, that describes the asymptotic behavior of the empirical
measures associated to a sequence of $N$ i.i.d.\ random variables, is a
basic result in the theory of large deviations. A natural question is
to replace the independence assumption by some dependency structure.
For instance, the case of the empirical measure associated to Markov
chains is the content of the classical Donsker-Varadhan theorem.
We here analyze
the case in which the
underlying sequence of random variables is sampled according to a
microcanonical ensemble, that can be realized by conditioning i.i.d.\ 
random variables to the sum of their squares, i.e.\ to the total
kinetic energy in physical interpretation.
A particular case of this
situation has been previously discussed in \cite{KR};
there  it is in fact analyzed the
case where $N$ real random variables are sampled according to the
uniform measure on the sphere of radius $\sqrt{N}$ on $\bb R^{N}$ and
corresponds to the microcanonical ensemble associated to i.i.d.\
Gaussians. A peculiar feature of this setting is the possibility of
observing -- at the large deviations level -- probabilities that violate
the microcanonical constraint. More precisely, while for each $N$ the
law of the  empirical measure is supported by the probabilities with
fixed second moment, the large deviations rate function is finite also
for probabilities with second moment strictly smaller than the
prescribed value. In view of the application to homogeneous Boltzmann
equations, we shall next consider microcanonical ensembles that are
obtained by conditioning both to the total energy and to the total
momentum.

Fix hereafter $d\ge 2$ and denote by $\ms P(\bb R^d)$ the set of
probability measures on $\bb R^d$ equipped with the topology induced
by the weak convergence and the associated Borel $\sigma$-algebra.
Let $\bs \zeta\colon \bb R^d\mapsto [0, +\infty)\times \bb R^d$ be the map
given by $\bs \zeta =(\zeta_0, \zeta)(v)=(|v|^2/2, v)$.  We shall
consider probabilities $m\in \ms P(\bb R^d)$ satisfying the following
conditions.

\begin{assumption}\label{ass:2}
  There exists $\gamma_0^*\in (0, +\infty]$ such that
  \begin{itemize}
  \item[(i)] $m$ is absolutely continuous with respect to the Lebesgue
    measure and $m$ is strictly positive on open sets;
  \item[(ii)] $m(\ee^{\gamma_0 \zeta_0}) < +\infty$ for any
    $\gamma_0\in (-\infty, \gamma_0^*)$, and
    $\lim_{\gamma_0\uparrow \gamma_0^*} m(\ee^{\gamma_0
      \zeta_0})=+\infty$;
  \item[(iii)] for each
    $\bs \gamma=(\gamma_0, \gamma)\in (-\infty, \gamma_0^*)\times \bb
    R^d$ the Fourier transform of
    $\frac {\de m}{\de v}\ee^{\bs\gamma \cdot \bs\zeta}$ belongs to
    $L^1(\bb R^d)$;
  \item[(iv)]
    there exists $c>0$ such that
    $\frac {\de m}{\de v} \ge \frac 1c\exp\{-c  |v|^2\}$.
  \end{itemize}
\end{assumption}
Condition (iv) is mainly technical, and will be used only to derive
the lower bound for the dynamical rate function.

We observe that the map
$(-\infty,\gamma_0^*)\times\bb R^d\ni\bs\gamma\mapsto \log
m(\ee^{\bs\gamma\cdot\bs\zeta})$ is strictly convex.
Set  $Z= \{ (e,u)\in (0,+\infty) \times \bb R^d:\,
e > |u|^2/2\}$, then 
$\nabla \log m(\ee^{\bs\gamma\cdot\bs\zeta})$ is a bijection from
$ (-\infty, \gamma_0^*)\times \bb R^d$ to $Z$.
We denote by
$(e, u)\mapsto \bs \gamma(e,u)$ the inverse map and by $m_{e,u}$ the
probability on $\bb R^d$ defined by
\begin{equation}\label{meu}
  m_{e,u} (\de v)
  \coloneqq \frac {\ee^{\bs\gamma(e,u)\cdot\bs\zeta (v)}}
  {m(\ee^{\bs\gamma(e,u)\cdot\bs\zeta})} 
  m(\de v).
\end{equation}
In words, $m_{e,u}$ is the \emph{exponential tilt} of $m$ such that
$m_{e,u}(\bs \zeta) =(e,u)$.
Namely, $u$ and $e$ are the average values of velocity and 
total energy, respectively. 
Note that $m_{e,u}$ satisfies the
conditions in Assumption~\ref{ass:2} with $\gamma_0^*$ replaced by
$\gamma_0^*-\gamma_0(e,u)$.
We denote by $U$ the internal
energy defined by the relation $e=U+|u|^2/2$, so that
$U$ is the expected value
of $|v-u|^2/2$.

Let $\Sigma^N\coloneqq \big(\bb R^d\big)^N$ be the configuration space for $N$
velocities in $\bb R^d$. Given $(e,u)\in Z$,
we denote by
\begin{equation}\label{sig}
  \Sigma^N_{e,u}\coloneqq \Big\{\bs v \in (\bb R^d)^N :\,
  \frac 1 N\sum_{i=1}^N \bs \zeta(v_i) = (e,u)
  \Big\} 
\end{equation}
the set of configurations with total momentum $N u$ and total energy
$N e$.

Let $\mu^N$,
be the
probability on $\Sigma^N$ given by $m^{\otimes N}$,
interpreted as the canonical ensemble.
Let also
$(e,u)\mapsto\nu^N_{e,u}$ be a regular version of the probability
$\mu^N$ conditioned to
$\frac 1 N\sum_{i=1}^N \bs \zeta(v_i)$.
In particular, $\nu^N_{e,u}$, interpreted as the
microcanonical ensemble, is the probability supported by
$\Sigma_{e,u}^N$ informally given by
$\nu^N_{e,u} = \mu^{N}(\ \cdot\ \vert \Sigma^N_{e,u} )$.
As $N\to \infty$ the one-marginal of
$\{\nu^N_{e,u}\}$ converge to $m_{e,u}$ (equivalence of ensembles in
the thermodynamic limit),
see 
\cite[\S 1.5]{CCL-RLV}
and \cite{Nam}.
Our aim is to describe the corresponding
large deviations asymptotic. In order to apply this result to Kac's
walk with a canonical initial distribution of the velocities, the
large deviation principle will be proven uniformly for $(e,u)$ in
compact subsets of $Z$.

We define the \emph{empirical measure} as the map
$\pi^N \colon \Sigma^N \to \ms P(\bb R^d)$ given by
\begin{equation}
  \label{1}
  \pi^N(\bs v)=\frac 1 N \sum_i \delta_{v_i}.
\end{equation}
Given two probabilities $m_1, m_2$, recall that the relative entropy
$\Ent(m_2\vert m_1)$ is defined as
$\Ent(m_2\vert m_1)=\int \de m_1 \rho\log\rho$, where
$\de m_2=\rho\, \de m_1$, understanding that
$\Ent(m_2|m_1)=+\infty$ if $m_2$ is not absolutely continuous
with respect to $m_1$.

Given $(e,u)\in Z$ set
\begin{equation}
  \label{ceu}
  C_{e,u} \coloneqq\big\{\pi\in \ms P(\bb R^d) \colon \pi(\zeta)=u,\,
  \pi(\zeta_0)\le e\big\}
\end{equation}
that is a compact and convex subset of $\ms P(\bb R^d)$. Note 
that $ C_{e,u}$ is the closure in $\ms P(\bb R^d)$ of the set of
probabilities $\pi$ satisfying the microcanonical constraint
$\pi(\bs \zeta) =(e,u)$.

\begin{theorem}\label{sld}
  Fix $(e,u)\in Z$ and a sequence
  $(e_N,u_N)\to (e,u)$.
  If $m$ satisfies item (i)--(iii)
  in Assumption \ref{ass:2} then the family of probabilities
  $\{ \nu^N_{e_N,u_N}\circ (\pi^N)^{-1}\}$ on $\ms P(\bb R^d)$
  satisfies a large deviation principle with good and convex rate
  function $H_{e,u}\colon \ms P(\bb R^d)\to [0, +\infty]$ given by
  \begin{equation}
    \label{Heu}
      H_{e,u}(\pi)= \begin{cases}
      \Ent(\pi\vert m_{e,u} ) + \big[\gamma_0^* -
      \gamma_0(e,u)\big]\, \big[e-\pi\big(\zeta_0\big)\big] & \textrm{if }\,
      \pi \in C_{e,u},\\
        +\infty & \textrm{otherwise}.
      \end{cases}
    \end{equation}  
\end{theorem}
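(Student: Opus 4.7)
The starting point is the identity
\[
\nu^N_{e,u}(\,\cdot\,)=\frac{m_{e,u}^{\otimes N}(\,\cdot\,\cap \Sigma^N_{e,u})}{m_{e,u}^{\otimes N}(\Sigma^N_{e,u})},
\]
which holds because the Radon--Nikodym density $\de m_{e,u}^{\otimes N}/\de m^{\otimes N}$ is constant on the shell $\Sigma^N_{e,u}$. This reduces the problem to a conditional LDP for i.i.d.\ samples from a probability whose mean of $\bs\zeta$ is exactly $(e,u)$. First I would establish, via Fourier inversion using Assumption~\ref{ass:2}(iii), a local central limit theorem giving $m_{e_N,u_N}^{\otimes N}(\Sigma^N_{e_N,u_N})\geq c\,N^{-(d+1)/2}$ uniformly for $(e_N,u_N)$ in compact subsets of $Z$, so that the denominator is negligible at the exponential scale.

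For the upper bound I would use that $\bs\gamma\cdot[(e,u)-\pi^N(\bs\zeta)]\equiv 0$ on $\Sigma^N_{e,u}$ combined with tilted Chebyshev: for every $\bs\gamma=(\gamma_0,\gamma)$ with $\gamma_0>-(\gamma_0^*-\gamma_0(e,u))$,
\[
m_{e,u}^{\otimes N}\bigl(\{\pi^N\in B_\delta(\pi)\}\cap\Sigma^N_{e,u}\bigr)\leq \ee^{N\bs\gamma\cdot(e,u)}\int_{\{\pi^N\in B_\delta(\pi)\}}\!\prod_i \ee^{-\bs\gamma\cdot\bs\zeta(v_i)}\,\de m_{e,u}^{\otimes N}.
\]
The right-hand integral equals $Z(\bs\gamma)^N\,\tilde m^{\otimes N}(\pi^N\in B_\delta(\pi))$ where $\tilde m\propto \ee^{-\bs\gamma\cdot\bs\zeta}m_{e,u}$; the restriction on $\gamma_0$ is exactly what Assumption~\ref{ass:2}(ii) requires for $Z(\bs\gamma)<\infty$. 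The standard Sanov upper bound for $\tilde m^{\otimes N}$, the polynomial denominator, and the identity $\Ent(\pi|\tilde m)=\Ent(\pi|m_{e,u})+\pi(\bs\gamma\cdot\bs\zeta)+\log Z(\bs\gamma)$ produce, after $\delta\downarrow 0$ and supremum over admissible $\bs\gamma$, the bound $\Ent(\pi|m_{e,u})+\sup_{\bs\gamma}\bs\gamma\cdot[\pi(\bs\zeta)-(e,u)]$. Elementary case analysis of the sup recovers $H_{e,u}(\pi)$: $+\infty$ when $\pi(\zeta)\neq u$ or $\pi(\zeta_0)>e$, and $(\gamma_0^*-\gamma_0(e,u))(e-\pi(\zeta_0))$ on $C_{e,u}$, the critical tilt being $\gamma_0\uparrow -(\gamma_0^*-\gamma_0(e,u))$.

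For the lower bound two regimes arise. If $\pi(\zeta_0)=e$, so $\pi$ already satisfies the microcanonical constraint, the unconstrained Sanov lower bound for $m_{e,u}^{\otimes N}$, combined with the polynomial denominator, yields $\Ent(\pi|m_{e,u})=H_{e,u}(\pi)$. The substantive case is energy escape, $\pi(\zeta_0)=e-\Delta$ with $\Delta>0$: one constructs configurations in $\Sigma^N_{e,u}$ with empirical measure close to $\pi$ by partitioning the $N$ particles into $N-M_N$ bulk samples near $\pi$ and $M_N$ escape particles placed on a thin spherical shell of radius $V_N=\sqrt{2N\Delta/M_N}$, absorbing the missing energy and cancelling the residual momentum, with $M_N\to\infty$ and $M_N/N\to 0$. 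The bulk cost under $m_{e,u}^{\otimes(N-M_N)}$ is $N\Ent(\pi|m_{e,u})+o(N)$ by Sanov; each escape particle contributes $\exp(-(\gamma_0^*-\gamma_0(e,u))V_N^2/2+o(V_N^2))$ by the sharp tail decay implicit in Assumption~\ref{ass:2}(ii), and the product over $M_N$ particles is $\exp(-N(\gamma_0^*-\gamma_0(e,u))\Delta+o(N))$. The main obstacle is enforcing the exact microcanonical constraint on the escape block while preserving the bulk Sanov estimate: this requires a local CLT for the $M_N$ far particles to guarantee that total momentum and energy can be matched with only a sub-exponential volume loss, and one must verify the construction is robust under the perturbation $(e_N,u_N)\to (e,u)$. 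Convexity and goodness of $H_{e,u}$ follow from its representation as a supremum of affine lsc functionals of $\pi$, together with exponential tightness from Assumption~\ref{ass:2}(ii).
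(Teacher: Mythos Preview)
Your route is genuinely different from the paper's. The paper does not split into separate upper/lower bound arguments but instead applies the abstract G\"artner--Ellis theorem: it computes the limiting log-moment generating function
\[
\Lambda_{e,u}(\phi)=\lim_N\tfrac1N\log\nu^N_{e_N,u_N}\bigl(\ee^{N\pi^N(\phi)}\bigr)
\]
via the same local CLT you invoke for the denominator (applied now to both a tilted numerator and the denominator, with the tilt chosen so that the tilted law has mean $\bs\zeta$ equal to $(e_N,u_N)$), then identifies $\Lambda_{e,u}^*=H_{e,u}$ by a short computation, and finally observes that strict convexity of $H_{e,u}$ makes every point of $C_{e,u}$ exposed, so G\"artner--Ellis yields the full LDP in one stroke. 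Your upper bound by tilted Chebyshev plus Sanov is essentially a hands-on unwinding of the G\"artner--Ellis upper bound and is fine.

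The substantive divergence is the lower bound. Your bulk--plus--escape construction is the right physical picture, but it carries two real costs that the paper's approach avoids entirely. First, the claim that each escape particle costs $\exp\{-(\gamma_0^*-\gamma_0(e,u))V_N^2/2+o(V_N^2)\}$ is a \emph{lower} bound on the tail of $m_{e,u}$, and Assumption~\ref{ass:2}(ii) only directly gives the matching \emph{upper} bound via Chebyshev; extracting the lower bound requires a steepness argument (tilting $m_{e,u}$ so its mean energy equals $V_N^2/2$ and controlling the normalizer), which works here but is not ``implicit''. Second, and more seriously, you must hit the constraint $\Sigma^N_{e_N,u_N}$ exactly, not approximately: the shell has surface measure zero, so you need a local CLT for the $(d{+}1)$-dimensional vector $\sum_i\bs\zeta(v_i)$ restricted to the $M_N$ escape particles (or some equivalent conditioning estimate) that is uniform as their typical speed diverges, and you must check Assumption~\ref{ass:2}(iii) survives the construction at that scale. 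This is doable but is exactly the kind of delicate surgery that the G\"artner--Ellis route sidesteps: once $\Lambda_{e,u}$ is computed and strict convexity noted, the lower bound for the energy-escape regime $\pi(\zeta_0)<e$ comes for free, with no explicit configuration ever built.
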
  
The rate function $H_{e,u}$ can be understood
as the canonical rate function $\Ent(\,\cdot\, \vert m_{e,u})$
with an extra penalization for violations of the energy constraint.
When $m$ is the standard Gaussian on $\bb R$ and the
momentum constraint is dropped this result reduces to the one
obtained in \cite{KR}. 

While the arguments in \cite{KR} rely on the representation of the
uniform measure on the spheres in terms of i.i.d.\ Gaussian, 
the proof of the above theorem will be achieved by applying the
G\"artner-Ellis theorem, which provides the large deviation
rate function as the Legendre transform of the log-moment generating
function.
To this end, for  $\phi\in C_{\mathrm b}(\bb R^d)$ set
\begin{equation}\label{meuphi}
  \de m_{e,u}^\phi \coloneqq \frac {\de m_{e,u} \ee^\phi}{m_{e,u}(\ee^\phi)},
\end{equation}
and
\begin{equation}
  \Lambda_{e,u}(\phi) \coloneqq -\bs \gamma \cdot (e,u)
  + \log m_{e,u}(\ee^{\phi +\bs\gamma\cdot\zeta}), 
\end{equation}
where $\bs\gamma=\bs \gamma(\phi)$ is chosen so that
\begin{equation}\label{gammaphi}
  \frac{m_{e,u}^{\phi}\big(\exp\{\bs\gamma\cdot\bs\zeta\}\,\bs\zeta\big)}{m_{e,u}^{\phi}\big(\exp\{\bs\gamma\cdot\bs\zeta\}\big)}=(e,u),
  \end{equation}
  namely, it is chosen in order that the exponential tilt of $m^\phi$
  has average energy and momentum
  $(e,u)$.
  
  \begin{lemma}\label{lemma1}
    For each $\phi\in C_{\mathrm b}(\bb R^d)$,
    \begin{equation}
      \lim_{N\to\infty}\frac 1 N\log
      \nu^N_{e_N,u_N}\big(\ee^{N\pi^N(\phi)}\big)=\Lambda_{e,u}(\phi). 
    \end{equation}
  \end{lemma}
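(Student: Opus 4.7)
\emph{Setup: reduce to a ratio of densities.} The plan is to express the conditional expectation as a ratio of densities of the empirical mean of $\bs\zeta$ and apply a sharp local large-deviation asymptotic. Introduce the exponential tilt $\tilde m(\de v) \coloneqq \ee^{\phi(v)}\, m(\de v)/m(\ee^\phi)$. Since $\nu^N_{e_N, u_N}$ is the regular conditional distribution of $m^{\otimes N}$ given $\frac 1 N \sum_{i} \bs\zeta(v_i) = (e_N, u_N)$, disintegrating against the density of this empirical mean yields
\[
\nu^N_{e_N, u_N}\big(\ee^{N \pi^N(\phi)}\big) \;=\; m(\ee^\phi)^N \, \frac{p_N^{\tilde m}(e_N, u_N)}{p_N^m(e_N, u_N)},
\]
where $p_N^\mu(e,u)$ denotes the density at $(e,u)\in Z$ of $\frac 1 N \sum_i \bs\zeta(v_i)$ under $\mu^{\otimes N}$. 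The task is thus reduced to sharp asymptotics for these two densities.

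\emph{Local central limit theorem and identification.} Tilt $m$ by $\bs\gamma(e,u)\cdot\bs\zeta$ so as to pass to $m_{e,u}^{\otimes N}$, under which $\frac 1 N\sum_i\bs\zeta(v_i)$ has mean $(e,u)$ and nondegenerate covariance given by the Hessian at $\bs\gamma(e,u)$ of the strictly convex function $\bs\gamma\mapsto\log m(\ee^{\bs\gamma\cdot\bs\zeta})$. Combined with the absolute continuity (i) and the Fourier integrability (iii) of Assumption~\ref{ass:2}, the resulting $(d+1)$-variate CLT upgrades to a local CLT and gives
\[
\tfrac 1 N \log p_N^m(e_N,u_N) \longrightarrow -A(e,u), \qquad A(e,u) \coloneqq \bs\gamma(e,u)\cdot(e,u) - \log m\big(\ee^{\bs\gamma(e,u)\cdot\bs\zeta}\big),
\]
uniformly for $(e_N,u_N)$ in compact subsets of $Z$. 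The same argument applied with $\tilde m$ in place of $m$, the saddle point being now $\bs\gamma(e,u)+\bs\gamma(\phi)$ with $\bs\gamma(\phi)$ the correction selected by \eqref{gammaphi}, gives the analogous limit $\tfrac 1 N\log p_N^{\tilde m}(e_N, u_N)\to -A^{\tilde m}(e,u)$. Summing the two contributions and expanding $\log\tilde m(\ee^{\bs\gamma\cdot\bs\zeta}) = \log m(\ee^{\phi+\bs\gamma\cdot\bs\zeta})-\log m(\ee^\phi)$ at the maximizer $\bs\gamma(e,u)+\bs\gamma(\phi)$, the two appearances of $\log m(\ee^\phi)$ cancel; rewriting the remainder in terms of $m_{e,u}$ via \eqref{meu} produces exactly $-\bs\gamma(\phi)\cdot(e,u)+\log m_{e,u}(\ee^{\phi+\bs\gamma(\phi)\cdot\bs\zeta}) = \Lambda_{e,u}(\phi)$.

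\emph{Main obstacle.} The principal technical point is the local CLT itself. Because the pushforward $\bs\zeta_* m$ is supported on the paraboloid $\{\zeta_0 = |\zeta|^2/2\}$, it is singular in $\bb R^{d+1}$, so $(\bs\zeta_* m)^{*N}$ becomes absolutely continuous only once $N$ is large enough; one further needs $L^1$-integrability of its characteristic function in the conjugate variables $(\xi_0,\xi)\in\bb R^{d+1}$. For fixed $\xi_0$ the characteristic function of $\bs\zeta(v)$ under the tilted measure is, up to a unimodular factor, the $d$-variate Fourier transform of $\tfrac{\de m}{\de v}\,\ee^{\bs\gamma\cdot\bs\zeta}$, which is controlled pointwise by (iii); decay in $\xi_0$ is then supplied by the super-Gaussian moments from (ii), and convolution smoothing of $N$ copies yields the required joint integrability. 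A short continuity argument, based on the smoothness of $(e,u)\mapsto\bs\gamma(e,u)$ on $Z$, handles the passage from $(e_N, u_N)$ to $(e,u)$ in the limit.
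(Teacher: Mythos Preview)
Your approach is essentially the paper's---express the conditional expectation as a ratio of densities of the empirical mean of $\bs\zeta$ and invoke a local CLT after exponential tilting---so the architecture is correct. The paper organizes the tilting slightly differently (it passes first to $m_{e_N,u_N}$ and chooses $\bs\gamma$ so that \emph{both} densities in the ratio are evaluated at their common mean, whence $\tfrac1N\log$ of the ratio vanishes directly rather than via two separately computed rates $A(e,u)$ and $A^{\tilde m}(e,u)$), but that is cosmetic.

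There is, however, a real gap. You apply the local CLT to $\tilde m=\ee^\phi m/m(\ee^\phi)$ for arbitrary $\phi\in C_{\mathrm b}(\bb R^d)$, but Assumption~\ref{ass:2}(iii) need not survive multiplication by $\ee^\phi$: for merely continuous bounded $\phi$ the density $\ee^\phi\,\tfrac{\de m}{\de v}\,\ee^{\bs\gamma\cdot\bs\zeta}$ can easily lose the $L^1$-Fourier property, and then the local CLT for the $\phi$-tilted summands is unjustified. The paper closes this with a one-line reduction you omit: the map $\phi\mapsto\tfrac1N\log\nu^N_{e_N,u_N}(\ee^{N\pi^N(\phi)})$ is $1$-Lipschitz in the uniform norm, so by density it suffices to prove the limit for smooth $\phi$. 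Separately, the reasoning in your ``Main obstacle'' paragraph is faulty: the factor $\ee^{i\xi_0|v|^2/2}$ is unimodular in $v$, but this does \emph{not} make $\chi(\xi_0,\xi)=\widehat{\ee^{i\xi_0|\cdot|^2/2}h}(\xi)$ pointwise comparable to $\hat h(\xi)$---it is a Fresnel-type convolution of $\hat h$ in Fourier space, not a phase times $\hat h$; and the exponential moments in (ii) give analyticity of the characteristic function in a strip but no decay along real $\xi_0$. Drop these claims, insert the Lipschitz/density reduction to smooth $\phi$, and then appeal to the multivariate local CLT as the paper does.
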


  \begin{proof}
    As simple to check,
    \begin{equation*}
      \Big|
      \frac 1 N\log \nu^N_{e_N,u_N}\big(\ee^{N\pi^N(\phi_2)}\big)
      -\frac 1 N\log \nu^N_{e_N,u_N}\big(\ee^{N\pi^N(\phi_1)}\big)
      \Big| \le  \sup_{v\in \bb R^d} |\phi_2(v) -\phi_1(v)|.
    \end{equation*}
    By a density argument, it is therefore enough to prove the
    statement for smooth $\phi$.

    Observing that
    $m^{\otimes N} (\cdot\vert \Sigma_{e,u}^N )=m_{e,u}^{\otimes N}
    (\cdot\vert \Sigma_{e,u}^N )$, for
    $\bs\gamma \in (-\infty,\gamma_0^*)\times \bb R^d$ we write
  \begin{equation*}
    \frac 1 N\log \nu^N_{e_N,u_N}\big(\ee^{N\pi^N(\phi)}\big)=
     - \bs \gamma \cdot (e_N,u_N) 
     +\frac 1 N \log m_{e_N,u_N}^{\otimes N}\big(
    \ee^{N\pi^N(\phi+\bs\gamma\cdot\bs\zeta)}\big). 
  \end{equation*} 
  By a direct computation (cfr.\ Lemma~3.5 in \cite{BBBO}) for any
  $\psi\in C_{\mathrm b}(\bb R^d)$ 
  \begin{equation*}
    m_{e_N,u_N}^{\otimes N}\big( \ee^{N\pi^N(\psi)}\vert \Sigma^N_{e_N,u_N}\big)
    = \big(m_{e_N,u_N}(\ee^\psi)\big)^N \frac {f_N^\psi(e_N,u_N)}{f_N(e_N,u_N)}, 
  \end{equation*}  
  where $f_N^\psi$, $f_N$ are the densities of the random vector
  $\frac 1 N\sum_{i=1}^N \bs\zeta(v_i)$, in which $\{v_i\}$ are
  i.i.d.\ with law $m_{e_N,u_N}^\psi$, $m_{e_N,u_N}$ respectively.
  Observe that, as we assumed that $m$ is strictly positive on open set,
  the law of $\frac 1N \sum_i \zeta(v_i)$ is absolutely continuous
  for $N>2$.
  Choosing
  $\psi=\phi+\bs\gamma\cdot\bs\zeta$, with $\bs\gamma=\bs\gamma_N(\phi)$
  such that \eqref{gammaphi} holds
  with $(e,u)$ replaced by $(e_N,u_N)$, by the local central limit theorem
  (see e.g. \cite{Pe})
  we deduce
  \begin{equation*}
    \lim_{N\to\infty} \frac 1 N\log \frac
    {f_N^{\phi+\bs\gamma\cdot\bs\zeta}(e_N,u_N)}{f_N(e_N,u_N)}=0.      
  \end{equation*}
  Note indeed that the local central limit holds in view of Assumption
  \ref{ass:2} and the smoothness of $\phi$.  Gathering the above
  computations, the statement follows.
\end{proof}  

\begin{lemma}\label{lemma2}
  Let
  $\Lambda_{e,u}^*(\pi)\coloneqq \sup_{\phi}\{\pi(\phi)-\Lambda_{e,u}(\phi)\}$
  be the Legendre transform of $\Lambda_{e,u}$. Then
  $\Lambda_{e,u}^*=H_{e,u}$.
\end{lemma}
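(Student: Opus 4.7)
The plan is to compute $\Lambda^*_{e,u}(\pi)$ by first rewriting $\Lambda_{e,u}(\phi)$ as an infimum over tilt parameters, and then evaluating the resulting double supremum in closed form.

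First I would re-express $\Lambda_{e,u}$ purely in terms of $m$. Setting $\bs\beta=\bs\gamma+\bs\gamma(e,u)$ and using \eqref{meu}, the constraint \eqref{gammaphi} is exactly the first-order optimality condition for the strictly convex map
\begin{equation*}
  g_\phi(\bs\beta)\coloneqq -\bs\beta\cdot(e,u)+\log m\bigl(\ee^{\phi+\bs\beta\cdot\bs\zeta}\bigr)
\end{equation*}
on $(-\infty,\gamma_0^*)\times\bb R^d$, so that $\Lambda_{e,u}(\phi)=a(e,u)+\inf_{\bs\beta} g_\phi(\bs\beta)$, where $a(e,u)\coloneqq \bs\gamma(e,u)\cdot(e,u)-\log m(\ee^{\bs\gamma(e,u)\cdot\bs\zeta})$ equals $\Ent(m_{e,u}|m)$ by Cramér. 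This turns $\Lambda_{e,u}$ into an infimum, and hence $\pi(\phi)-\Lambda_{e,u}(\phi)$ into a supremum in $\bs\beta$.

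Taking the Legendre transform and freely interchanging the two suprema yields
\begin{equation*}
  \Lambda^*_{e,u}(\pi)+a(e,u)=\sup_{\bs\beta}\Bigl\{\bs\beta\cdot(e,u)+\sup_{\phi\in C_{\mathrm b}(\bb R^d)}\bigl[\pi(\phi)-\log m\bigl(\ee^{\phi+\bs\beta\cdot\bs\zeta}\bigr)\bigr]\Bigr\}.
\end{equation*}
For admissible $\bs\beta$ the probability $\tilde m_{\bs\beta}\coloneqq \ee^{\bs\beta\cdot\bs\zeta}m/m(\ee^{\bs\beta\cdot\bs\zeta})$ is well defined with exponential moments; factoring the normalizing constant and invoking the Donsker--Varadhan variational formula $\Ent(\pi|\tilde m_{\bs\beta})=\sup_\phi[\pi(\phi)-\log\tilde m_{\bs\beta}(\ee^\phi)]$, together with the tilt identity $\Ent(\pi|\tilde m_{\bs\beta})=\Ent(\pi|m)-\bs\beta\cdot\pi(\bs\zeta)+\log m(\ee^{\bs\beta\cdot\bs\zeta})$, the inner supremum collapses to $\Ent(\pi|m)-\bs\beta\cdot\pi(\bs\zeta)$.

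It then remains to evaluate $\sup_{\bs\beta\in(-\infty,\gamma_0^*)\times\bb R^d}\bs\beta\cdot[(e,u)-\pi(\bs\zeta)]$. Since the last $d$ coordinates of $\bs\beta$ range over all of $\bb R^d$, this is $+\infty$ unless $\pi(\zeta)=u$; once that holds the remaining half-line contribution is $+\infty$ if $\pi(\zeta_0)>e$ and is $\gamma_0^*[e-\pi(\zeta_0)]$ (with the convention $0\cdot\infty=0$) when $\pi\in C_{e,u}$. Combining with the identity $\Ent(\pi|m)=\Ent(\pi|m_{e,u})+\bs\gamma(e,u)\cdot\pi(\bs\zeta)-\log m(\ee^{\bs\gamma(e,u)\cdot\bs\zeta})$, the $a(e,u)$ and tilt bookkeeping cancel, the momentum term vanishes using $\pi(\zeta)=u$, and one is left with $\Ent(\pi|m_{e,u})+[\gamma_0^*-\gamma_0(e,u)][e-\pi(\zeta_0)]$, which is $H_{e,u}(\pi)$.

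The main obstacle is making the two reductions rigorous: one must check that the supremum over bounded continuous $\phi$ alone suffices in Donsker--Varadhan applied to $\tilde m_{\bs\beta}$ and that the outer supremum over $\bs\beta$, which is only approached as $\beta_0\uparrow\gamma_0^*$, produces the correct value both when $\gamma_0^*<\infty$ and when $\gamma_0^*=+\infty$ (in the latter case $\Lambda^*_{e,u}(\pi)=+\infty$ as soon as $\pi(\zeta_0)<e$, consistently with \eqref{Heu}). A case split according to whether $\pi(\bs\zeta)$ lies in the interior, on the boundary, or outside $C_{e,u}$ disposes of the remaining edge cases, and lower semicontinuity of $\Lambda^*_{e,u}$ takes care of the closure.
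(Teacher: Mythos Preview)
Your proposal is correct and follows essentially the same route as the paper. The key observation in both is that the constraint \eqref{gammaphi} is the first-order optimality condition for a strictly convex function of the tilt parameter, so $\Lambda_{e,u}(\phi)$ is an infimum; taking the Legendre transform then produces a double supremum, the inner one yields the relative entropy via Donsker--Varadhan, and the outer linear supremum over the tilt gives the $[\gamma_0^*-\gamma_0(e,u)][e-\pi(\zeta_0)]$ correction together with the constraints defining $C_{e,u}$.

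The only difference is cosmetic: the paper works directly with $m_{e,u}$ and the variable $\bs\gamma$ ranging over $(-\infty,\gamma_0^*-\gamma_0(e,u))\times\bb R^d$, so that the inner Donsker--Varadhan step gives $\Ent(\pi|m_{e,u})$ immediately and no additive constant appears. You instead substitute $\bs\beta=\bs\gamma+\bs\gamma(e,u)$ and work with the untilted $m$, which introduces the bookkeeping term $a(e,u)$ and requires the tilt identities at the end to convert $\Ent(\pi|m)$ back to $\Ent(\pi|m_{e,u})$; this is more circuitous but entirely equivalent. One small point the paper makes explicit and you leave implicit: the reduction to $\pi(\zeta_0)<+\infty$ (needed for the tilt identity and for the change of variable $\phi\mapsto\phi+\bs\gamma\cdot\bs\zeta$) follows from noting that $\Lambda^*_{e,u}(\pi)<+\infty$ already forces finite second moment.
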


\begin{proof}
  Note that $\Lambda^*_{e,u}(\pi)<+\infty$ implies
  $\pi(\zeta_0)<+\infty$. For such $\pi$, recalling \eqref{gammaphi},
  \begin{equation*}\begin{split}
      &\Lambda^*_{e,u}(\pi) =\sup_{\phi}\{\pi(\phi) +
      \bs \gamma \cdot (e,u)
      - \log m_{e,u}(\ee^{\phi +\bs\gamma\cdot\zeta})
      \}\\
      &=\sup_{\bs\gamma:\gamma_0<\gamma_0^*-\gamma_0(e,u)}\sup_{\phi}\{\pi(\phi)
      +
      \bs \gamma \cdot (e,u)
      - \log m_{e,u}(\ee^{\phi
        +\bs\gamma\cdot\zeta})
      \}\\
      &=\sup_{\bs\gamma:\gamma_0<\gamma_0^*-\gamma_0(e,u)}\sup_{\phi}\{\pi(\phi+\bs\gamma\cdot\bs\zeta)
      +\bs \gamma \cdot\big(e-\pi(\zeta_0), u -\pi(\zeta)\big) - \log
      m_{e,u}(\ee^{\phi +\bs\gamma\cdot\zeta})
      \}\\
      &=\Ent(\pi\vert m_{e,u})+
      \sup_{\bs\gamma:\gamma_0<\gamma_0^*-\gamma_0(e,u)}\{ \bs \gamma
      \cdot\big(e-\pi(\zeta_0), u -\pi(\zeta)\big) \}=H_{e,u}(\pi)
    \end{split}\end{equation*}
  that concludes the proof.
\end{proof}  

\begin{proof}[Proof of Theorem \ref{sld}]
  For $\delta>0$ let $C^\delta_{e,u}$ be the compact subset of
  $\ms P(\bb R^d)$ given by
  $C^\delta_{e,u} \coloneqq \{\pi \in \ms P(\bb \R^d) \colon
  \pi(\zeta_0)\le e + \delta,\, |\pi(\zeta) - u|\le \delta \}$.  By
  the very definition of $\nu_{e,u}^N$, definitely in $N$ we have
  $\nu_{e_N,u_N}^N(\pi^N\in C^\delta_{e,u} )=1$, which implies the
  exponential tightness of the sequence
  $\nu_{e_N,u_N}^N\circ (\pi^N)^{-1}$.

  Since the map $\pi\mapsto H_{e,u}(\pi)$ is strictly convex, in the
  terminology of convex analysis used in
  \cite[Thm.~4.5.20]{DZ},
  every $\pi\in C_{e,u}$ is an \emph{exposed} point
  of $H_{e,u}$. Therefore the 
  statement follows from Lemmata \ref{lemma1} and \ref{lemma2} by the
  abstract G\"artner-Ellis theorem.
\end{proof}

\subsection*{Large deviations from total probability formula}

We next show how the Sanov's theorem for i.i.d.\
random variables can be recovered from Theorem \ref{sld}. While this
route is overcomplicated in the present context, it will be crucial to
deduce the large deviations for Kac's walks with canonical initial
distribution of the velocities.

We first state a general argument to deduce the large deviation principle from
the total probability formula.
Let $\ms X$ be a Hausdorff topological space and $\mu^n$ be a sequence
of probabilities on $\ms X$. Let also $\ms Y$ be a locally compact Polish
space, $Y$ be a $\ms Y$-valued random variable on $\ms X$ and
denote by $p_n$ its law.  Letting $y\mapsto \nu^n_y$ be a regular
version of the conditional probability of $\mu^n$ given $Y$ we have
the disintegration
\begin{equation}
  \label{dis}
\mu^n=\int \!p_n(\de y)\,\nu^n_y.
\end{equation}  
We will deduce the large deviation of $\mu^n$ from the large
deviations of $p_n$ and the large deviations on $\nu^n_y$, that will
be assumed to hold uniformly for $y$ in compact subsets of $\ms Y$.

\begin{proposition}
  \label{t:ldtp}
  Assume: 
  \begin{itemize}
  \item[(i)]
    the family $\{p_n\}$ is exponentially tight and satisfies a large
    deviation principle with good rate function
    $A\colon \ms Y \to [0,+\infty]$;
  \item[(ii)]
    for each compact $K\subset\subset \ms Y$ there exists a sequence
    of compacts $H_\ell\subset \subset \ms X$ such that
    $\sup_{y\in K} \nu_y^n\big(H_\ell^\mathrm{c}\big) \leq
    \ee^{-n\ell}$;
  \item[(iii)]
    for each $y\in \ms Y$ and each sequence $y_n\to y$ the
    family $\{\nu^n_{y_n}\}$ satisfies a large deviation principle with
    good rate function $F_y\colon \ms X \to [0,+\infty]$.
  \end{itemize}
  Then the family $\{\mu^n\}$
  is exponentially tight and satisfies a large deviation
  principle with good rate function $I\colon \ms X \to [0,+\infty]$ given by
  \begin{equation}
    \label{I=}
  I(x)=\inf_{y\in\ms Y}\big\{ A(y) + F_y(x)\big\}.
  \end{equation}
\end{proposition}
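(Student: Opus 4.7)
The plan is to lift the disintegration \eqref{dis} to the product space $\ms X \times \ms Y$ and then invoke the contraction principle. Let $P_n$ be the joint law on $\ms X \times \ms Y$ defined by $P_n(\de x, \de y) = \nu^n_y(\de x)\, p_n(\de y)$, so that $\mu^n$ is the first marginal of $P_n$, and the prescribed rate \eqref{I=} equals $\inf_y J(x,y)$ for $J(x,y) \coloneqq A(y) + F_y(x)$, which is precisely the rate supplied by the contraction principle along $\pi_1 \colon \ms X \times \ms Y \to \ms X$. It therefore suffices to establish a full LDP for $\{P_n\}$ with good rate $J$; goodness of $I$ and exponential tightness of $\{\mu^n\}$ then follow for free.

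Joint exponential tightness is immediate from (i)--(ii): given $\ell>0$, pick $K \subset\subset \ms Y$ with $p_n(K^\mathrm{c}) \leq \ee^{-n\ell}$ (from exponential tightness of $\{p_n\}$) and let $H_\ell \subset\subset \ms X$ be associated to $K$ by (ii); then $H_\ell \times K$ is compact and $P_n((H_\ell \times K)^\mathrm{c}) \leq p_n(K^\mathrm{c}) + \sup_{y \in K} \nu^n_y(H_\ell^\mathrm{c}) \leq 2\ee^{-n\ell}$.

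The crux is a uniform-in-$y$ strengthening of (iii), extracted by a diagonal contradiction argument: for every $y_0 \in \ms Y$, every closed $C$ and open $U$ in $\ms X$, and every $\delta>0$, there exist an open neighborhood $V$ of $y_0$ and an integer $N_0$ such that, for $n \geq N_0$ and $y \in V$,
$\nu^n_y(C) \leq \ee^{-n(\inf_{C}F_{y_0} - \delta)}$ and $\nu^n_y(U) \geq \ee^{-n(\inf_{U}F_{y_0} + \delta)}$. Indeed, if either estimate failed, fixing a nested basis $V_m \downarrow \{y_0\}$ (available since $\ms Y$ is metrizable) we would extract bad pairs $(n_m, y_m) \in \mathbb N \times V_m$ with $n_m \uparrow \infty$ and interpolate them into a sequence $z_n \to y_0$, by setting $z_{n_m} = y_m$ and $z_n = y_0$ otherwise, along which the LDP promised by (iii) for $\{\nu^n_{z_n}\}$ is violated at the times $n_m$.

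Granted this uniformity, the joint upper bound on a compact $\Gamma \subset \ms X \times \ms Y$ follows by covering $\Gamma$ with finitely many product neighborhoods $U_i \times V_i$ of points $(x_i, y_i) \in \Gamma$ chosen small enough that $\inf_{\upbar{U}_i} F_{y_i} \geq F_{y_i}(x_i) - \delta$ and $\inf_{\upbar{V}_i} A \geq A(y_i) - \delta$, which is possible by lower semicontinuity; on each piece, $P_n(U_i \times V_i) \leq p_n(V_i)\sup_{y \in V_i}\nu^n_y(\upbar{U}_i) \leq \ee^{-n(J(x_i, y_i) - O(\delta))}$. The joint lower bound on an open $O$ is handled analogously, picking a product subset $U \times V \subset O$ around a near-minimizer of $J$ in $O$ and combining the uniform lower estimate on $\nu^n_y(U)$ for $y\in V$ with the LDP lower bound for $p_n(V)$. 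The weak LDP thereby obtained is promoted to a full LDP by the joint exponential tightness, and the contraction principle along $\pi_1$ closes the argument. The main technical obstacle is the diagonal uniformity claim; what makes it go through is that (iii) provides the \emph{same} rate function $F_{y_0}$ for every sequence $y_n \to y_0$, so a single interpolated bad sequence produces the desired contradiction with a single LDP.
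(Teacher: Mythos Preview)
Your argument is correct and takes a genuinely different route from the paper's.  The paper works directly on $\ms X$: it first proves, via a diagonal interpolation very close to yours, that $(x,y)\mapsto F_y(x)$ is jointly lower semicontinuous (Step~2), and then establishes the lower bound and the upper bound for compacts on $\ms X$ separately by disintegrating and covering the $\ms Y$-variable only (Steps~3--4).  You instead lift to $\ms X\times\ms Y$, prove a full joint LDP for $P_n$ with rate $J$, and contract.  The engine in both proofs is the same diagonal trick---interpolating a bad pair $(n_m,y_m)$ into a sequence $z_n\to y_0$ so that assumption~(iii) yields a contradiction---but you package it as a locally-uniform LDP estimate, which is slightly stronger than the paper's joint lower semicontinuity (indeed the latter follows from your upper estimate by comparing it with the LDP lower bound for the constant sequence $y_n\equiv y$).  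Your route is cleaner for the abstract statement; the paper's decomposition into separate upper/lower steps has the advantage that it can be reused when the conditional LDP in (iii) is only partial, with non-matching upper and lower rates---exactly the situation the authors face later in Theorem~\ref{canonico}, where they explicitly recycle Steps~3 and~4 individually.

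Two minor points to tighten.  First, your uniform upper estimate $\nu^n_y(C)\le \ee^{-n(\inf_C F_{y_0}-\delta)}$ should be read as ``for every finite $L<\inf_C F_{y_0}$ there exist $V$ and $N_0$ with $\nu^n_y(C)\le \ee^{-nL}$'': when $\inf_C F_{y_0}=+\infty$ the literal inequality is too strong, and the diagonal argument only delivers super-exponential decay, not vanishing.  Second, choosing $U_i$ so that $\inf_{\upbar{U}_i}F_{y_i}\ge F_{y_i}(x_i)-\delta$ tacitly uses regularity of $\ms X$ (to nest an open set inside the lsc super-level set with its closure); the paper makes the same implicit use in its Step~2, and in the intended application $\ms X$ is Polish, so this is harmless.
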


\begin{proof}$~$
  
  \smallskip\noindent\emph{Step 1. Exponential tightness}.
  As follows from \eqref{dis}, for each compact $K\subset\subset \ms Y$ and each
  compact $H\subset\subset \ms X$
  \begin{equation*}
    \mu^n(H^\mathrm{c})\leq \sup_{y\in K} \nu^n_y(H^\mathrm{c}) + p_n
    (K^\mathrm{c}). 
  \end{equation*}  
  The assumptions on $\{p_n\}$ and $\{\nu_y^n\}$ thus yield
  the exponential tightness of $\{\mu^n\}$.

  \smallskip\noindent\emph{Step 2. Lower semicontinuity of the rate function.}
  Since $A$ is lower semicontinuous, 
  the lower semicontinuity of $I$ in \eqref{I=} is implied by the
  (joint) lower semicontinuity of the map
  $\ms X\times \ms Y \ni (x, y)\mapsto F_y(x)$ that we next deduce.
  Since $\ms Y$ is Polish, the joint lower semicontinuity of $F$ is in
  fact equivalent to the following statement.  For each
  $(x,y)\in \ms X\times \ms Y$, each sequence $y_k\to y$, and each
  $\delta>0$ there exists an open neighborhood $\mathcal N \ni x$
  such that
  \begin{equation}
    \label{jlsc}
    \varliminf_{k} \inf_{x' \in \mc N } F_{y_k} (x') \ge F_y(x) - \delta.
  \end{equation}
  Fix $(x,y)\in \ms X\times \ms Y$, a sequence $y_k\to y$, and
  $\delta>0$.  By the lower semicontinuity of
  $\ms X \ni x \mapsto F_y(x)$, there exists an open neighborhood
  ${\mathcal N}' \ni x$ such that
  \begin{equation}
    \label{lsc1}
    \inf_{x' \in \mc N' } F_{y} (x') \ge F_y(x) - \delta.
  \end{equation}
  Denoting by an over-line the closure, let now $\mathcal N$
  be a open neighborhood such that
  $x\in {\mathcal N} \subset \overline{{\mathcal N}} \subset{\mathcal
    N}'$. We then claim that the bound \eqref{jlsc} holds. In order to
  show it, by passing to a not relabeled subsequence, we may assume
  that
  $\varliminf_{k} \inf_{x' \in \mc N } F_{y_k} (x')= \lim_{k} \inf_{x'
    \in \mc N } F_{y_k} (x')$.  For $k$ fixed, by the lower bound for
  the sequence $\{\nu^n_{y_k}\}$,
  \begin{equation*}
    \varliminf_{n} \frac 1n \log \nu^n_{y_k} \big(\mathcal N \big) \ge  -
    \inf_{x'\in\mathcal N}  F_{y_k}(x') 
  \end{equation*}
  which, by taking the inferior limit in $k$, implies 
  \begin{equation*}
    \varliminf_k\varliminf_{n} \frac 1n
    \log \nu^n_{y_k} \big(\mathcal N \big) \ge  -
    \lim_k \inf_{x'\in\mathcal N}  F_{y_k}(x').
  \end{equation*}
  By a diagonal argument, there exists a sequence $n_k\uparrow +\infty$ such that
  \begin{equation*}
    \begin{split}
      &\varliminf_k\varliminf_{n} \frac 1n \log \nu^n_{y_k}
      \big(\mathcal N \big) 
      = \varliminf_k\frac 1{n_k} \log \nu^{n_k}_{y_k} \big(\mathcal N \big)
      \le \varlimsup_k\frac 1{n_k} \log \nu^{n_k}_{y_k}
      \big(\overline{\mathcal N} \big)
      \\
      &\qquad \le - \inf_{x' \in \overline{\mathcal N}} F_y(x')
      \le - \inf_{x' \in {\mathcal N}'} F_y(x')
      \le -   \big[ F_y(x) -\delta\big]   
    \end{split}
  \end{equation*}
  where we used the large deviations upper bound for the sequence
  $\{\nu^{n_k}_{y_k}\}$ and \eqref{lsc1} in the last step.
  Comparing the two last displayed equations the bound \eqref{jlsc} follows.

  \smallskip\noindent\emph{Step 3. Lower bound.}
  It is enough to show that for each $x\in \ms X$ and each  open
  neighborhood $\mathcal N \ni x$
  \begin{equation}
    \label{enlb}
    \varliminf_n \frac 1n \log \mu^n\big(\mathcal N\big) \ge - I(x).
  \end{equation}

  Fix a metric inducing the topology of $\ms Y$ and, given $y\in \ms
  Y$ and $\delta>0$, let $B_\delta(y)$ the corresponding open ball of
  radius $\delta$ centered in $y$. 
  In order to show \eqref{enlb}, fix $y \in \ms Y$.
  By the large deviations lower bound of the sequence $\{p_n\}$,
  for each $\delta>0$ we then have
  \begin{equation*}
    \varliminf_n \frac 1n \log p_n\big(B_\delta(y) \big) \ge - A(y).
  \end{equation*}
  Therefore, by a diagonal argument, there exists a sequence
  $\delta_n\downarrow 0$ such that 
  \begin{equation*}
    \varliminf_n \frac 1n \log p_n\big(B_{\delta_n}(y) \big) \ge - A(y).
  \end{equation*}
  From the disintegration \eqref{dis} we then obtain
  \begin{equation*}
    \mu^n\big(\mathcal N\big) \ge
    \int_{B_{\delta_n}(y)} \!p_n(\de y')\, \nu^n_{y'}\big(\mathcal N\big)
    \ge p_n \big(B_{\delta_n}(y) \big) \,
    \inf_{y'\in B_{\delta_n}(y)} \nu^n_{y'}\big(\mathcal N\big).
  \end{equation*}
  Whence, for a suitable sequence $y'_n \to y$,
  \begin{equation*}
    \varliminf_n \frac 1n \log\mu^n\big(\mathcal N\big)
    \ge
    \varliminf_n \frac 1n \log p_n \big(B_{\delta_n}(y) \big)
    + \varliminf_n \frac 1n \log \nu^n_{y'_n}\big(\mathcal N\big)
    \ge - \big[ A(y) + F_y(x) \big]
  \end{equation*}
  where we used the large deviations lower bound for the family 
  $\{ \nu^n_{y'_n}\}$. By optimizing over $y\in\ms Y$ and recalling
  \eqref{I=} we then deduce \eqref{enlb}.

  \smallskip\noindent\emph{Step 4. Upper bound for compacts.}
  Fix a compact set $H \subset\subset \ms X$, $\ell>0$,
  $\varepsilon>0$, and observe that, by the joint lower semicontinuity
  of $F$ proven in Step~2 above, the map
  $\ms Y \ni y \mapsto \inf_{x\in H} F_{y} (x)$ is lower
  semicontinuous. 
  By the exponential tightness of $\{p_n\}$, there exists a compact
  $K_\ell\subset\subset \ms Y$ such that
  $p_n\big(K_\ell^\mathrm{c}\big) \le \ee^{-n\ell}$. For each $y\in K_\ell$,
  by the lower semicontinuity of $A$ and the previous observation,
  there exists $\delta>0$ such 
  that $A(y')\ge A(y) - \varepsilon/2$ and
  $\inf_{x\in H} F_{y'} (x) \ge \inf_{x\in H} F_{y} (x)-\varepsilon/2$
  for any $y'\in {B}_{2 \delta}(y)$.  
  By the local compactness of $\ms Y$, possibly by decreasing
  $\delta$, we can assume that ${B}_{2 \delta}(y)$  is relatively compact.
  Furthermore, by the compactness of $K_\ell$, there exists a finite family
  $\{B_{\delta_i}(y_i)\}_{i=1,\ldots,r}$ such that
  $K_\ell\subset \bigcup_i B_{\delta_i}(y_i)$. In view of \eqref{dis},
  \begin{equation}\label{1.11}
    \begin{split}
    \mu^n\big(H \big) &\le \sum_{i=1}^r
    \int_{ B_{\delta_i}(y_i)} \!p_n(\de y') \,\nu^n_{y'}\big(H\big)
    + p_n\big(K_\ell^\mathrm{c}\big)
    \\
    &\le \sum_{i=1}^r
    p_n\big( \overline{B}_{\delta_i}(y_i)\big)
    \sup_{y'\in  {B}_{\delta_i}(y_i)} \nu^n_{y'}\big(H\big)
    + \ee^{-n\ell}.
    \end{split}
  \end{equation}
  Since the sets ${B}_{\delta_i}(y_i)$ are relatively compact,
  by passing if necessary to a not relabeled subsequence, for each
  $i=1,\ldots,r$ there exist
  $\bar y_i \in \overline{B}_{\delta_i}(y_i)$ and a sequence
  $y^n_i \to \bar y_i$ such that
  \begin{equation*}
    \varlimsup_n \sup_{y'\in {B}_{\delta_i}(y_i)} \frac 1n \log
        \nu^n_{y'}\big(H\big) 
    =\varlimsup_n \frac 1n \log \nu^n_{y^n_i}\big(H\big).   
  \end{equation*}
  Letting $a\vee b \coloneqq \max \{a,b\}$ and using the large deviation
  upper bound both for $\{p_n\}$ and $\{\nu^n_{y^n_i}\}$ in
  \eqref{1.11} we thus get
  \begin{equation*}
    \begin{split}
    \varlimsup_{n}\frac 1n \log \mu^n \big(H\big) 
    & \le
    \max_{i=1,\dots,r} \Big\{ -
    \inf_{y' \in \overline{B}_{\delta_i}(y_i)} A(y') -
    \inf_{x\in H} F_{\bar y_i} (x) \Big\} \vee (-\ell)
    \\
    & \le - \min_{i=1,\dots,r} \Big\{ A(y_i) 
    + \inf_{x\in  H} F_{y_i} (x) -\epsilon \Big\} \vee (-\ell)
    \\&
    \le -\inf_{x\in  H}\, \inf_{y\in\ms Y}
    \big\{ A(y) + F_y(x) -\varepsilon \big\} \vee (-\ell).
  \end{split}
  \end{equation*}
  Recalling \eqref{I=}, we conclude by taking the limits
  $\epsilon\downarrow 0$ and $\ell\uparrow +\infty$.
\end{proof}

Let $\bs v \in \Sigma^N$ be sampled according to the product
probability $\mu^N=m^{\otimes N}$ and denote by $p_N$ the law of
$\frac 1N \sum_i \bs \zeta(v_i)$.  
We then have the disintegration
\begin{equation*}
  \mu^N = \int p_N\big(\de (e,u)\big) \, \nu^N_{e,u}.
\end{equation*}
Moreover the sequence $\{p_N\}$
satisfies a large deviations principle with rate function
$A\colon (0,+\infty)\times \bb R^d\to [0,+\infty]$ given by
\begin{equation}
  \label{cra}
  A(e,u) = \sup_{\bs \gamma}\big\{ \bs\gamma \cdot (e,u) -
  \log m\big( \ee^{\bs \gamma\cdot\bs\zeta}\big) \big\}.
\end{equation}
This follows from the multidimensional Cram\'er's theorem
in \cite[Thm.~2.3.6]{DZ}.
Indeed,
in the terminology of convex analysis used in \cite{DZ},
the function $\bs \gamma \mapsto \log m\big(
\ee^{\bs \gamma \cdot \bs \zeta}\big)$
is \emph{steep}. Namely $|\nabla \log m\big(
\ee^{\bs \gamma \cdot \bs \zeta}\big)|$ diverges when
$\gamma_0 \to \gamma_0^*$. This follows from
item (ii) in Assumption \ref{ass:2}.

In view of Proposition~\ref{t:ldtp}
and the following remark,
Sanov's theorem for i.i.d.\ random variables
can be deduced from
Theorem~\ref{sld}.

\begin{remark}
  \label{remark1}
  We have
  \begin{equation*}
    \Ent(\pi|m) = \inf_{(e,u)} \big\{ A(e,u) + H_{e,u}(\pi)\big\}.
  \end{equation*}
  In fact, by a direct computation, the infimum is achieved for
  $(e,u) =\pi(\bs \zeta)$. 
\end{remark}


\section{Large deviations for Kac model with microcanonical initial  data}
\label{sez:2}

\subsection*{The model}
Recall that $\Sigma^N = \big(\bb R^d\big)^N$.  We consider the Kac walk given
by the Markov process on the configuration space $\Sigma^N$,
whose generator acts on bounded continuous functions
$f\colon \Sigma^N\to \bb R$ as
\begin{equation*}
\mathcal L_N f(\bs v)=\frac 1 N \sum_{\{i, j\}} L_{i,j} f (\bs v),
\end{equation*}  
where the sum is carried over the unordered pairs
$\{i, j\}\subset \{1,.., N\}$, $i\neq j$, and
\begin{equation*}
L_{i,j} f(\bs v)= \int_{\bb S_{d-1}}\!\! \de \omega \,B(v_i- v_j, \omega)\big[f \big(T^{\omega}_{i,j} \bs v\big ) -f(\bs v)   \big].
\end{equation*}  
Here $\bb S_{d-1}$ is the sphere in $\bb R^d$ and  
\begin{equation}
  \label{rules}
  \big(T^{\omega}_{i,j} \bs v\big )_k = \begin{cases}
    v_i + (\omega \cdot (v_j-v_i))\omega  & \textrm{if } k=i\\
    v_j - (\omega \cdot (v_j-v_i))\omega  & \textrm{if } k=j\\
    v_k & \textrm{otherwise},
    \end{cases}
\end{equation}  
and the collision kernel $B$ is given by
\begin{equation}\label{eq:B}
  B(v-v_*, \omega)=\frac 12 |(v-v_*)\cdot \omega|.
\end{equation}
Observe that the dynamics preserves energy and momentum, i.e. can be
restricted to the set $\Sigma^N_{e,u}$ as defined in \eqref{sig}. We
denote by $(\bs v(t))_{t\geq 0}$ the Markov process generated by
$\mathcal L_N$.

Fix hereafter $T>0$. Given a probability $\nu$ on $\Sigma^N$ we denote
by $\bb P_\nu^N$ the law of this process on the time interval $[0,T]$.
Observe that $\bb P_\nu^N$ is a probability on the Skorokhod
space $D([0,T];\Sigma^N)$.
As usual if $\nu=\delta_{\bs v}$ for some $\bs v \in \Sigma^N$, the
corresponding law is simply denoted by $\bb P_{\bs v}^N$.

\subsection*{Empirical observables}

Recall that $\ms P(\bb R^d)$ is the set of probability measures $\pi$ on
$\bb R^d$ equipped with the weak topology and the
corresponding Borel $\sigma$-algebra.
Let $D\big([0,T]; \ms P(\bb R^d)\big)$ the set of
$\ms P(\bb R^d)$-valued c{\'a}dl{\'a}g paths endowed with the
Skorokhod topology and the corresponding Borel $\sigma$-algebra.
Recalling the empirical measure $\pi^N$ defined in
\eqref{1}, with
a slight abuse of notation we denote also by $\pi^N$ the map from
$D\big([0,T]; \Sigma^N \big)$ to $D\big([0,T]; \ms P(\bb R^d)\big)$
defined by $\pi^N_t(\bs v)\coloneqq \pi^N(\bs v(t))$, $t\in [0,T]$.

We denote by $\ms M$ the  subset of the finite measures $Q$  on
$[0,T]\times \bb R^{2d}\times \bb R^{2d}$  that satisfy
$Q(\de t; \de v,\de v_*, \de v',\de v_*')=Q(\de t; \de v_*,\de v, \de v',\de v_*') =Q(\de t; \de v,\de v_*, \de v'_*,\de v')$.
We consider $\ms M$ endowed with the weak topology and
the corresponding Borel $\sigma-$algebra. By definition, the weak topology is 
the weakest topology such that the map
$Q \mapsto Q(F)$ is continuous for each $F$ in
$C_{\mathrm b}([0,T]\times \bb R^{2d}\times \bb R^{2d})$.



The \emph{empirical flow} is the map $Q^N\colon D\big([0,T]; \Sigma^N \big) \to \ms M$
defined by
\begin{equation}
  \label{2}
  Q^N(\bs v) (F) \coloneqq \frac 1N
  \sum_{\{i,j\}} \sum_{k\ge 1} F\big(\tau^{i,j}_k;
  v_i(\tau^{i,j}_k-),v_j({\tau^{i,j}_k}-),
  v_i(\tau^{i,j}_k),v_j(\tau^{i,j}_k)\big) 
  \quad 
\end{equation}
where $F\colon [0,T]\times \bb R^{2d}\times \bb R^{2d}\to \bb R$
is continuous, bounded, and satisfies $F(t; v, v_*, v',v_*')$ $=F(t;  v_*, v, v',v_*') = F(t;  v, v_*, v'_*,v')$,
while $(\tau^{i,j}_k)_{k\ge 1}$ are the
jump times of the pair $(v_i,v_j)$. Here, $v_i(t-) = \lim_{s\uparrow t} v_i(s)$.
In view of the conservation of the energy and momentum, the
measure $Q^N(\de t;\cdot)$ is supported on 
$\ms E \coloneqq \{
\bs \zeta(v)+\bs \zeta(v_*)=\bs \zeta(v')+\bs \zeta(v_*)
\}\subset \bb R^{2d}\times \bb R^{2d}$.

Let $\ms S$ be the  subset of
$D\big([0,T]; \ms P(\bb R^d)\big)\times \ms M$ given by elements
$(\pi,Q)$ that satisfies the balance equation
\begin{equation}
  \label{bal}
  \begin{split}
  &\pi_T(\phi_T)-\pi_0(\phi_0)-\int_0^T\! \de t\, 
  \pi_t(\partial_t \phi_t)\\
  &\qquad 
  +\int Q(\de t;\de v,\de v_*,\de v',\de v_*')\big[ \phi_t(v)+\phi_t(v_*)
  -\phi_t(v')-\phi_t(v_*')\big] =0
  \end{split}
\end{equation}
for each $\phi\in C_{\rm{b}}([0,T]\times \bb R^d)$
continuously differentiable in $t$, with bounded derivative.
For each $\bs v \in \Sigma^N$, with $\bb P^N_{\bs v}$ probability
one, the pair $(\pi^N,Q^N)$ belongs to $\ms S$.

\subsection*{The rate function}
Given $(e,u)\in Z$, recall
$C_{e,u} \coloneqq
\{\mu\in \ms P(\bb \R^d): \mu(\zeta_0) \le e, \, \mu(\zeta) = u\}$,
and set
\begin{equation}
  \label{ceut}
  \ms C_{e,u} \coloneqq \{ \pi \in C([0,T],\ms P(\bb R^d):
  \pi_t \in C_{e,u}, t\in [0,T]\},
\end{equation}
that is a closed subset of $C([0,T],\ms P(\bb R^d)$.

For notation convenience, let $r(v,v^*,\cdot)$ be the measure on $\bb R^{2d}$
supported on
$\{\bs \zeta(v)+\bs \zeta(v_*)=\bs \zeta(v')+\bs \zeta(v_*')\}$
such that
$$r(v,v_*,\de v', \de v_*') = \de \omega \, B(v-v_*,\omega),$$
where $v'$ and $v_*'$ are related to $\omega$ by the collision rules, as in  \eqref{rules}.
For $\pi\in D\big([0,T]; \ms P(\bb R^d)\big)$ let $Q^\pi$ be
the measure defined by
\begin{equation}
  \label{4}
  Q^\pi(\de t;\de v,\de v_*,\de v',\de v_*') \coloneqq  \frac 1 2 \de t \, \pi_t(\de v) \pi_t(\de v_*)
  \, r(v,v_*;\de v',\de v_*')
\end{equation}
and observe that $Q^\pi(\de t,\cdot)$ is supported on $\ms E$.

\begin{definition}
  \label{def:sac}
  Let $\ms S^\mathrm{ac}_{e,u}$ be the subset of $\ms S$ given by the elements
  $(\pi,Q)$ that satisfy the following conditions:
  \begin{itemize}
  \item [(i)] $\pi\in \ms C_{e,u}$; 
  \item [(ii)] $Q\ll Q^\pi$.
  \end{itemize}
\end{definition}
Observe that if $(\pi,Q)\in \ms S^\mathrm{ac}_{e,u}$ then $Q^\pi$ is a finite measure.
The dynamical rate function $J_{e,u}\colon \ms S \to [0,+\infty]$ is defined by
\begin{equation}
  \label{5}
  J_{e,u}(\pi,Q)\coloneqq 
  \begin{cases}
    {\displaystyle 
    \int  \de Q^\pi \Big[ 
    \, \frac{\de Q\phantom{^\pi}}{\de Q^\pi} \log \frac{\de Q\phantom{^\pi}}{\de Q^\pi} -
    \Big( \frac{\de Q\phantom{^\pi}}{\de Q^\pi}  -1\Big)\Big]  } & \textrm{if } (\pi,Q)\in \ms
    S^\mathrm{ac}_{e,u}\\ \\
    + \infty& \textrm{otherwise } 
  \end{cases}
\end{equation}
Recalling $H_{e,u}$ has been defined in \eqref{Heu}, the microcanonical
rate function is
\begin{equation}
  \label{I}
  I_{e,u}(\pi, Q) \coloneqq H_{e,u}(\pi_0) + J_{e,u} (\pi, Q).
\end{equation}
Let also $\hat {\ms S}$ be the subset of $\ms
S$ given by the pair $(\pi,Q)$ such that
\begin{equation*}
  \int_{[0,T]\times \bb R^{4d}} \!\de Q \, \big[\zeta_0(v) + \zeta_0(v_*) + \zeta_0(v') + \zeta_0(v'_*) \big] < +\infty.
\end{equation*}  
Observe that for
$(\pi, Q)\in \hat {\ms S}$ the balance equation \ref{bal} holds for
$\phi=\zeta_0$, therefore $\pi_t(\zeta_0)=\pi_0(\zeta_0)$ for every $t\in [0,T]$.

\begin{theorem}
  \label{upperbound}
  Assume $m$ satisfies condition {\rm (i)--(iii)} in Assumption \ref{ass:2}, 
  fix $(e,u) \in Z$, a sequence $(e_N,u_N) \to (e,u)$, and let
  $\nu^N_{e_N,u_N}$ be the microcanonical probabilities as in Section
  \ref{sez:1}.  The family
  $\bb P^N_{\nu^N_{e_N,u_N}} \circ (\pi^N,Q^N)^{-1}$ satisfies a large
  deviation upper bound with good rate function
  $I_{e,u}: \ms S \to [0,+\infty]$, namely $I_{e,u}$ has compact level
  sets and for each closed $C\subset \ms S$
  \begin{equation}\label{upeq}
    \varlimsup_{N\to +\infty} \frac 1N \log   \bb P^N_{\nu^N_{e_N,u_N}} \Big( (\pi^N,Q^N)\in C
    \Big) \le - \inf_{C} I_{e,u}.
  \end{equation}
  Moreover, if $m$ satisfies also
  condition {\rm (iv)} in Assumption \ref{ass:2}, then for
  each open  $O\subset \ms S$
  \begin{equation}\label{lbeq}
    \varliminf_{N\to +\infty} \frac 1N \log   \bb P^N_{\nu^N_{e_N,u_N}}
    \Big( (\pi^N,Q^N)\in O
    \Big) \ge - \inf_{O\cap \hat{\ms S}} I_{e,u}.
  \end{equation}
\end{theorem}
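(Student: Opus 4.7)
The proof decomposes into the upper bound \eqref{upeq} and the lower bound \eqref{lbeq}, both coupling the static analysis of Theorem \ref{sld} with a dynamical treatment via the exponential martingale associated to $\mc L_N$. For the upper bound, I would first establish exponential tightness of $\{\bb P^N_{\nu^N_{e_N,u_N}} \circ (\pi^N, Q^N)^{-1}\}$ in $D([0,T]; \ms P(\bb R^d)) \times \ms M$: the $\pi^N$-component is controlled by the a.s.\ inclusion $\pi^N_t \in C_{e_N, u_N}$ together with an equicontinuity estimate from the balance equation \eqref{bal}, while the $Q^N$-component follows from Assumption \ref{ass:2}(ii), which gives exponential tail bounds on collisions involving large incoming velocities. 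Next, for symmetric bounded $F$ on $[0,T]\times \bb R^{4d}$ and $\phi \in C_\mathrm{b}(\bb R^d)$, the exponential martingale tilted by $F$, multiplied by $\exp\{N \pi^N_0(\phi) - N \Lambda_{e,u}(\phi)\}$, has asymptotic mean one under $\bb P^N_{\nu^N_{e_N,u_N}}$ by Lemma \ref{lemma1}. A Chebyshev-Markov estimate then gives, for closed $C \subset \ms S$,
\begin{equation*}
\varlimsup_{N} \tfrac 1N \log \bb P^N_{\nu^N_{e_N,u_N}}((\pi^N,Q^N) \in C) \le - \inf_{(\pi,Q) \in C} \sup_{\phi, F} \bigl\{ \pi_0(\phi) - \Lambda_{e,u}(\phi) + Q(F) - Q^\pi(\ee^F - 1) \bigr\}.
\end{equation*}
Legendre duality, invoking Lemma \ref{lemma2} for the static part and the standard variational representation of relative entropy for the dynamical part, identifies the inner supremum with $I_{e,u}(\pi, Q)$; a mollification removes the boundedness hypothesis on $F$, and pairs $(\pi,Q)$ outside $\ms S^{\mathrm{ac}}_{e,u}$ are handled by exhibiting divergent test functions.

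For the lower bound, the plan is a change-of-measure argument. Fix $(\bar\pi, \bar Q) \in \hat{\ms S}$ with $I_{e,u}(\bar\pi, \bar Q) < +\infty$. After a density reduction (see below), assume $\varphi := \de \bar Q / \de Q^{\bar\pi}$ is continuous, bounded, and bounded away from zero, and that $\bar\pi$ has smooth strictly positive density with the prescribed microcanonical data. Introduce the tilted Markov chain obtained by replacing the two-body rate $B(v - v_*, \omega)$ by $B(v - v_*, \omega)\, \varphi$, with initial law $\nu^N_{e_N, u_N}$ reweighted by a smooth function driving the initial empirical measure toward $\bar\pi_0$. A propagation-of-chaos / law-of-large-numbers argument under the tilted law yields $(\pi^N, Q^N) \to (\bar\pi, \bar Q)$ in probability, while the Radon-Nikodym derivative of $\bb P^N_{\nu^N_{e_N,u_N}}$ with respect to the tilted law, evaluated on such trajectories, behaves like $\exp\{-N J_{e,u}(\bar\pi, \bar Q) + o(N)\}$. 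Combined with the static lower bound from Theorem \ref{sld}, which accounts for the $H_{e,u}(\bar\pi_0)$ term, this gives \eqref{lbeq}. Assumption \ref{ass:2}(iv) enters here to guarantee that the initial microcanonical law admits strictly positive marginal densities near $\bar\pi_0$.

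I expect the main obstacle to be the density/approximation step underpinning the lower bound: starting from a general $(\bar\pi, \bar Q) \in \hat{\ms S}$ with $I_{e,u}(\bar\pi, \bar Q) < +\infty$, one must construct approximants $(\bar\pi^\eta, \bar Q^\eta)$ that respect both the balance equation \eqref{bal} and the microcanonical constraint, with $\de \bar Q^\eta / \de Q^{\bar\pi^\eta}$ smooth and pinched between positive constants and $\bar\pi^\eta$ smooth and strictly positive, while converging to $(\bar\pi, \bar Q)$ in $\ms S$ with $I_{e,u}(\bar\pi^\eta, \bar Q^\eta) \to I_{e,u}(\bar\pi, \bar Q)$. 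A secondary technical difficulty is the exponential tightness of $Q^N$: since $B$ grows linearly in $|v - v_*|$, sharp high-velocity collision estimates are required, and these ultimately feed back into Assumption \ref{ass:2}(ii).
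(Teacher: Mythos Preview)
Your overall architecture matches the paper's: exponential tightness plus martingale tilting for the upper bound, and a perturbed-dynamics change of measure combined with an $I_{e,u}$-dense approximation for the lower bound (the paper's Theorem~\ref{the:approx} carries out exactly the three-step regularization you anticipate as the main obstacle). Two points, however, need correction.

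First, and most importantly, your upper bound has a genuine gap in how the constraint $\pi_t\in C_{e,u}$ for $t>0$ enters the rate function. Optimizing your displayed variational expression over $(\phi,F)$ yields $H_{e,u}(\pi_0)+J(\pi,Q)$ with $J$ the \emph{unconstrained} relative entropy \eqref{Jfree}, not $J_{e,u}$. Your claim that ``pairs $(\pi,Q)$ outside $\ms S^{\mathrm{ac}}_{e,u}$ are handled by exhibiting divergent test functions'' works for the absolute continuity condition $Q\ll Q^\pi$, but it cannot enforce $\pi_t(\zeta_0)\le e$ for $t>0$: the test function $\phi$ sees only $\pi_0$, and no bounded $F$ forces $Q(F)-Q^\pi(\ee^F-1)$ to diverge merely because $\pi_t(\zeta_0)>e$. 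Since $H_{e,u}(\pi_0)+J(\pi,Q)\le I_{e,u}(\pi,Q)$ with possible strict inequality, your bound does \emph{not} imply \eqref{upeq}. The paper fixes this by exploiting the a.s.\ inclusion $\pi^N\in\ms C^\delta_{e,u}$ (equation \eqref{picdelta}) to insert, for free, a Urysohn penalty $\alpha\,\psi^{\delta,\eta}_{e,u}(\imath_\ve*\pi)$ into the variational functional (Lemma~\ref{lemma:up-pf}); optimizing over $\alpha,\delta,\eta,\ve$ then recovers the full constraint in Lemma~\ref{p:vr}.

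Second, a misattribution: the exponential tightness of $Q^N$ does not come from Assumption~\ref{ass:2}(ii). In the microcanonical setting the energy of $\pi^N_t$ is deterministically $e_N$, and the paper's Lemma~\ref{lemma2'} uses only this pathwise bound to control $Q^{\pi^N}(\ee^{\bar F/2})$ uniformly; no tail hypothesis on $m$ is invoked.
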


\section{Proof of the upper bound}
\label{sez:3}
The proof follows the same strategy as in \cite{BBBO} and in
\cite{He}.  For the reader convenience we here provide the details.
The upper bound is achieved by an established pattern in large
deviation theory. We first prove the exponential tightness, which
allows us to reduce to compacts. By an exponential tilting of the
measure, we prove an upper bound for open balls and finally we use a
mini-max argument to conclude.

The basic observation is the following. 
Given a bounded measurable function
$F\colon [0,T]
\times \bb R^{4d}\to \bb R$ such that
$F(t; v,v_*, v',v_*')=F(t; v_*,v, v',v_*')=F(t; v,v_*, v'_*,v')$,
set
\begin{equation}
  \label{def:lambdaF}
  \lambda^F(t;v,v_*)=\int r(v,v_*; \de v',\de v'_*)\ee^{F(t;v,v_*;v',v'_*)}.
\end{equation}
If $F=0$ we drop it from the notation. 
Denoting by $Q^N_{[0,t]}$ the restriction of the measure $Q^N$ on
$[0,t]$, and using that $\lambda(v,v) = \lambda^F(t,v,v) = 0$
the
process
\begin{equation}\label{mart1}\begin{split}
      \bb M_t^F = & \exp\Big\{N\Big( Q_{[0,t]}^N(F)-\frac 1 2 \int_0^t \!\de s\,
    \pi_s^N\otimes \pi^N_s\big(\lambda^F -\lambda  \big)
    \Big)
    \Big\}
  \end{split}\end{equation}
is a $\bb P^N_{\bs v}$ positive martingale for each $\bs v\in \Sigma^N$,
see e.g. \cite[App.~1, Prop.~2.6]{KL}.

For any $\delta>0$ we also define the compact set 
$C^\delta_{e,u} \coloneqq
\{\mu\in \ms P(\bb \R^d):
\mu(\zeta_0)\le e + \delta,\, |\mu(\zeta) - u|\le \delta \}$.
For $\delta>0$, by the conservation of the energy and the momentum
\begin{equation}
  \label{picdelta}
  \bb P^N_{\nu^N_{e_N,u_N}}( \pi_t^N \in C^\delta_{e,u}, t \in [0,T]) = 1,
\end{equation}
definitely in $N$.
By Ascoli-Arzel\`a and Prohorov theorems,
the exponential tightness follows from the next two lemmata.

\begin{lemma}\label{lemma2'}
  Set
  $$\bar F(v,v_*,v',v'_*) = \log ( 1 + \zeta_0(v) +
  \zeta_0(v_*) +  \zeta_0(v') +  \zeta_0(v'_*)).
  $$
  Then
  \begin{equation}
    \lim_{\ell\to+\infty}
    \varlimsup_{N\to+\infty}\frac 1 N \log{
      \bb P_{\nu^N_{e_N,u_N}}^N \Big ( Q^N(\bar F ) >  \ell \Big)}
    =-\infty.
  \end{equation}
\end{lemma}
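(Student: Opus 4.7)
The plan is to use an exponential Chebyshev bound together with the positive martingale $\bb M^F_T$ from \eqref{mart1} with the tilt $F = \alpha \bar F$ for a small $\alpha \in (0,1]$ to be chosen. For any such $\alpha$,
\begin{equation*}
  \bb P^N_{\nu^N_{e_N,u_N}}\bigl(Q^N(\bar F) > \ell\bigr) \leq \ee^{-N\alpha \ell}\, \bb E^N_{\nu^N_{e_N,u_N}}\bigl[\ee^{N\alpha Q^N(\bar F)}\bigr].
\end{equation*}
Rewriting $\ee^{N\alpha Q^N(\bar F)} = \bb M^{\alpha \bar F}_T \cdot \exp\bigl(\tfrac N2 \int_0^T\!\de s\; \pi^N_s\otimes\pi^N_s(\lambda^{\alpha \bar F}-\lambda)\bigr)$ and using the martingale identity $\bb E[\bb M^{\alpha \bar F}_T] = 1$, the task reduces to establishing a uniform deterministic upper bound on $\pi^N_s \otimes \pi^N_s(\lambda^{\alpha \bar F}-\lambda)$.

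The first key observation is that the collision rules \eqref{rules} conserve kinetic energy pairwise, so on the support of $r(v,v_*;\cdot)$ one has $\zeta_0(v')+\zeta_0(v'_*) = \zeta_0(v)+\zeta_0(v_*)$. Hence $\bar F$ restricts there to a function of $\zeta_0(v),\zeta_0(v_*)$ alone and
\begin{equation*}
  \lambda^{\alpha \bar F}(t;v,v_*) = \lambda(v,v_*)\,\bigl(1+2\zeta_0(v)+2\zeta_0(v_*)\bigr)^\alpha.
\end{equation*}
Using the explicit formula $\lambda(v,v_*) = c_d\,|v-v_*| \leq c_d\bigl(\sqrt{2\zeta_0(v)}+\sqrt{2\zeta_0(v_*)}\bigr)$ together with the sub-additivity $(1+a+b)^\alpha \leq 1 + a^\alpha + b^\alpha$ valid for $\alpha \in [0,1]$, I would dominate $\lambda^{\alpha \bar F}-\lambda$ by a finite sum of monomials $\zeta_0(v)^p \zeta_0(v_*)^q$ with exponents $p,q \in \{0,\tfrac12, \alpha, \tfrac12+\alpha\}$.

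The second key input is that under $\bb P^N_{\nu^N_{e_N,u_N}}$ the total kinetic energy is conserved by the Kac dynamics, so $\pi^N_s(\zeta_0) = e_N$ holds almost surely for every $s\in[0,T]$. The natural choice $\alpha = 1/2$ makes all the exponents above lie in $[0,1]$, so the trivial bound $\zeta_0^p \leq 1 + \zeta_0$ gives $\pi^N_s(\zeta_0^p) \leq 1 + e_N$ for each relevant $p$, and the product structure yields $\pi^N_s\otimes\pi^N_s(\lambda^{\alpha \bar F}-\lambda) \leq C(1+e_N)$ uniformly in $s$ and $N$. Inserting this and using $e_N \to e$,
\begin{equation*}
  \varlimsup_{N\to\infty} \frac 1N \log \bb P^N_{\nu^N_{e_N,u_N}}\bigl(Q^N(\bar F)>\ell\bigr) \leq -\tfrac\ell2 + \tfrac{T\,C(1+e)}{2},
\end{equation*}
and sending $\ell\to\infty$ concludes. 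The main obstacle is the tension between making $\alpha$ large enough that $\ee^{-N\alpha \ell}$ decays fast in $\ell$ and small enough that the correction term stays bounded: since only the first moment of $\zeta_0$ is controlled under the microcanonical measure and $\lambda$ grows linearly in $|v-v_*|$, the value $\alpha = 1/2$ is the unique borderline choice that makes the argument close.
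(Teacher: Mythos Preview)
Your proof is correct and follows essentially the same approach as the paper: both use the exponential martingale \eqref{mart1} with the choice $F=\tfrac12\bar F$, invoke pairwise energy conservation to reduce $\ee^{\bar F/2}$ to a function of $(\zeta_0(v),\zeta_0(v_*))$, and then exploit the microcanonical constraint $\pi^N_s(\zeta_0)=e_N$ to bound the compensator uniformly; the paper phrases the Chebyshev step as $\bb E[\bb M_T(\bb M_T)^{-1}\id_{\{Q^N(\bar F)>\ell\}}]$ and packages the compensator bound as $Q^{\pi^N}(\ee^{\bar F/2})\le c$, but the substance is identical.
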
  


\begin{lemma}\label{lemma3}
  For each $\epsilon >0$ and $\phi\in C_{\rm b}(\bb R^d)$
  \begin{equation}
    \lim_{\eta\downarrow 0}
    \varlimsup_{N\to+\infty}\frac 1 N \log{ \bb P^N_{\nu^N_{e_N,u_N}}\Big(\sup_{t, s \,\in [0,T]\;:|t-s|<\eta} |\pi^N_t(\phi)-\pi^N_s(\phi)|> \epsilon  \Big)}
    =-\infty.
  \end{equation}
\end{lemma}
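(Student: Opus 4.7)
The plan is to convert the equicontinuity estimate into an exponential moment bound via the balance equation~\eqref{bal}, and then apply Doob's maximal inequality to the martingale~\eqref{mart1}. For time-independent $\phi$ the balance equation gives $\pi^N_t(\phi)-\pi^N_s(\phi)=-Q^N_{(s,t]}(G)$ with $G(v,v_*,v',v_*'):=\phi(v)+\phi(v_*)-\phi(v')-\phi(v_*')$, so it suffices to control $Q^N_{(s,t]}(G)$. Fix $a\in\bb R$ and $s_0\in[0,T]$ and choose $F(r;\cdot)=a\,G\,\mathbf 1_{(s_0,T]}(r)$ in~\eqref{mart1}; the associated positive martingale $\bb M^F$ satisfies $\bb M^F_{s_0}=1$ and, for $\tau\in[0,2\eta]$,
\begin{equation*}
\bb M^F_{s_0+\tau}=\exp\Big\{-Na\big[\pi^N_{s_0+\tau}(\phi)-\pi^N_{s_0}(\phi)\big]-\tfrac N2\int_{s_0}^{s_0+\tau}\!\pi^N_r\otimes\pi^N_r(\lambda^F-\lambda)\,\de r\Big\}.
\end{equation*}

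The crucial deterministic ingredient is an a~priori bound on the compensator. Since $|G|\le 4\|\phi\|_\infty$ one has $\lambda^F-\lambda\le(e^{4|a|\|\phi\|_\infty}-1)\lambda$; the rate $\lambda(v,v_*)=c_d|v-v_*|$ is unbounded, but microcanonical conservation $\pi^N_r(\zeta_0)\equiv e_N$ together with Cauchy--Schwarz yields the trajectory-wise bound $\pi^N_r\otimes\pi^N_r(\lambda)\le 2c_d\sqrt{2e_N}=:C(e)$, valid for every $r$ and every realisation. Applying Doob's maximal inequality to $\bb M^F$ on $[s_0,s_0+2\eta]$ with both signs $\pm a$ therefore gives, uniformly in $s_0$,
\begin{equation*}
\bb P\Big(\sup_{\tau\in[0,2\eta]}\big|\pi^N_{s_0+\tau}(\phi)-\pi^N_{s_0}(\phi)\big|>\tfrac\epsilon2\Big)\le 2\exp\Big\{-N\big[\tfrac{a\epsilon}{2}-\eta\big(e^{4a\|\phi\|_\infty}-1\big)C(e)\big]\Big\}.
\end{equation*}

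To promote this to a uniform control over all pairs $|s-t|<\eta$, I would partition $[0,T]$ by $t_k:=k\eta$, $k=0,1,\ldots,\lceil T/\eta\rceil$. Every such pair lies in some window $[t_{k-1},t_{k+1}]$ of length $2\eta$, so the triangle inequality reduces the bad event to $\bigcup_k\{\sup_{\tau\in[0,2\eta]}|\pi^N_{t_{k-1}+\tau}(\phi)-\pi^N_{t_{k-1}}(\phi)|>\epsilon/2\}$, and a union bound yields
\begin{equation*}
\bb P\Big(\sup_{|s-t|<\eta}|\pi^N_s(\phi)-\pi^N_t(\phi)|>\epsilon\Big)\le \tfrac{2T}{\eta}\exp\Big\{-N\big[\tfrac{a\epsilon}{2}-\eta\big(e^{4a\|\phi\|_\infty}-1\big)C(e)\big]\Big\}.
\end{equation*}
Taking $\tfrac1N\log$ kills the logarithmic prefactor as $N\to\infty$; sending $\eta\downarrow 0$ at fixed $a$ leaves $-a\epsilon/2$, and finally $a\to+\infty$ drives the limit to $-\infty$, as required.

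The main obstacle is precisely the uniform compensator bound $\pi^N_r\otimes\pi^N_r(\lambda)\le C(e)$. Because the hard-sphere kernel $\lambda$ grows linearly in velocities, such a bound cannot hold in general; what rescues the argument is that on the microcanonical ensemble the energy is conserved \emph{exactly} along every trajectory, so the bound is available pointwise rather than merely in mean. In the canonical setting this step has to be replaced by a delicate localisation to trajectories whose empirical energy stays bounded, a difficulty that anticipates the role of the rate function $H_{e,u}$ in the remainder of the paper.
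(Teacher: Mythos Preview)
Your proof is correct and follows the same strategy as the paper's: exponential martingale~\eqref{mart1}, compensator bounded pathwise via microcanonical energy conservation, and a union bound over a grid of mesh~$\eta$. The paper's only variation is to first reduce to controlling the total jump count $Q^N_{[t,t+\eta]}(1)$ by taking $F=\gamma\,\id_{[t,t+\eta]}$ (constant in the velocities) and exploiting the monotonicity of $t\mapsto Q^N_{[s,t]}(1)$ in place of Doob, whereas you track $Q^N_{(s,t]}(G)$ directly and invoke Doob's maximal inequality; both routes yield the same bound after optimizing the free parameter.
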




\begin{proof}[Proof of Lemma \ref{lemma2'}]
  Observe that by
  the conservation of the energy, 
  there exists a constant 
  $c >0$, depending on $e$, such that for any $N$ the bound  
  $Q^{\pi^N}(\ee^{\frac 12 \bar F}) \le c$ holds
  with $\bb P^N_{\nu^N_{e_N,u_N}}$ probability one.

  Let $M_T$ be the exponential martingale in \eqref{mart1} with  
  $F=\frac 12 \bar F$.
  Then, 
  for each $\ell >0$,  
  \begin{equation*}\begin{split}
      \bb P_{\nu^N_{e_N,u_N}}^N \Big ( Q^N(\bar F) > \ell  \Big)  
      =
      \bb E_{\nu^N_{e_N,u_N}}^N
      \Big(
      \bb M_T \,
      \big(\bb M_T\big)^{-1} \id_{Q^N(\bar F) > \ell}  \Big) 
      \leq \exp\{- N \ell/2 + cN\}.
  \end{split}\end{equation*}  
\end{proof}

\begin{proof}[Proof of Lemma \ref{lemma3}]
  In view of the balance equation \eqref{bal} 
  it is enough to show that 
  there exists a function $c\colon (0,1)\to \bb R_+$ with $c(\eta)\uparrow +\infty$ as $\eta\downarrow 0$ such that, for any $\epsilon >0$
  \begin{equation*}
    \bb P^N_{\nu^N_{e_N,u_N}}\Big(\sup_{t\in [0,T-\eta]} Q^N_{[t, t+\eta]}(1)> \epsilon
    \Big)\leq \ee^{-Nc(\eta)}.
  \end{equation*}
  By a straightforward inclusion of events, the previous bound follows from
  \begin{equation*}
    \frac 1 \eta \sup_{t\in[0,T-\eta]} \bb P^N_{\nu^N_{e_N,u_N}}\Big(Q^N_{[t, t+\eta]}(1)> \epsilon
        \Big)\leq \ee^{-Nc(\eta)}.\end{equation*}  
      Consider the super-martingale  \eqref{mart1} with $F=\gamma\,\id_{[t, t+\eta]}$, $\gamma >0$.
      Using the same argument of the previous lemma we deduce
\begin{equation*}
  \bb P^N_{\nu^N_{e_N,u_N}}\Big(Q^N_{[t, t+\eta]}(1)> \epsilon
  \Big)\leq
  \exp\Big\{-N\big[\gamma \epsilon - \eta\,(\ee^\gamma -1) C(1+e)  \big]  \Big\}.
\end{equation*}   
  The proof is concluded by choosing $\gamma= \log(1/\eta)$.
\end{proof}

\subsection*{Upper bound on compacts}
Recalling the set $C^\delta_{e,u}$ defined above \eqref{picdelta}, let $\mc C^\delta_{e,u}$ be the closed subset of $C([0,T]; \ms P(\bb R^d))$ defined as
\begin{equation}\label{C_stort}
\mc C^\delta_{e,u}\coloneqq \bigcap_{t\in [0,T]}\{\pi \colon \pi_t\in C^\delta_{e,u} \}.
\end{equation}
By Urysohn's lemma, for each $\eta >0$ there exists 
$\psi^{\delta, \eta}_{e,u}\colon C([0,T]; \ms P(\bb R^d))\to [0,1]$
continuous such that
\begin{equation*}
  \psi^{\delta,\eta}_{e,u}(\pi) =\begin{cases}
  0 \textrm { if}\quad \pi \in \mc C ^{\delta}_{e,u}\\
  1 \textrm { if} \quad  \mathrm{dist}(\pi, C^\delta_{e,u})\geq \eta,
  \end{cases}
\end{equation*}  
where $\mathrm{dist}$ is the uniform distance. Moreover, for
$\pi\in D([0,T], \ms P(\bb R^d))$, we extend it to a function defined on
$\bb R$ by setting $\pi_t=\pi_0$ if $t<0$, $\pi_t=\pi_T$ if $t>T$.
Let $\imath_\ve$ be the a smooth approximation of the $\delta$
function, and denote by $\imath_\ve * \pi$ the time convolution of
$\pi$.

\begin{lemma}
  \label{lemma:up-pf}
  Fix a measurable subset $B\subset \ms S$.
  For any $(\phi,F)\in C_{\mathrm b}(\R^d)\times C_{\mathrm b}(\R^+\times {(\R^{d})}^2\times {(\R^{d})}^2)$
  such that  and $F(t;v,v_*,v',v'_*)=F(t;v_*,v,v',v'_*)=F(t;v,v_*,v'_*,v')$,  and any $\delta, \eta, \ve, \alpha >0$,
  \begin{equation}
    \label{eq:up-pf}
    \varlimsup_{N\to \infty} \frac 1N \log\bb P^N_{\nu^N}
    \Big(
    (\pi^N,Q^N)\in B \Big)
    \le - \inf_{(\pi,Q)\in B} \big\{I_{\phi,F}(\pi,Q) +\alpha\psi^{\delta,\eta}_{e,u}\big(\imath_\ve * \pi\big)
    \big\},
  \end{equation}
  where
  \begin{equation}
    \label{eq:Ipf}
    I_{\phi,F}(\pi,Q)\coloneqq  \pi_0(\phi) - \Lambda_{e,u}(\phi)+ 
    Q(F)-\frac 1 2 \int_0^T \de t \, \pi_t\otimes \pi_t(\lambda^F-\lambda).
    \end{equation}
\end{lemma}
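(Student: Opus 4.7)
The plan is to use exponential tilting: for any measurable $\Phi \colon \ms S \to \bb R$ the Chebyshev bound
$$\bb P^N_{\nu^N}\bigl((\pi^N,Q^N) \in B\bigr) \le \exp\Bigl\{-N\inf_{(\pi,Q)\in B} \Phi(\pi,Q)\Bigr\}\, \bb E^N_{\nu^N}\bigl[\ee^{N\Phi(\pi^N,Q^N)}\bigr]$$
holds, and I would choose $\Phi := I_{\phi,F}(\pi,Q) + \alpha\,\psi^{\delta,\eta}_{e,u}(\imath_\ve * \pi)$, so that the desired estimate \eqref{eq:up-pf} reduces to the claim $\limsup_N N^{-1}\log\bb E^N_{\nu^N}[\ee^{N\Phi(\pi^N,Q^N)}] \le 0$.

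The central step is to recognise the dynamical piece of $\ee^{N\Phi}$ as the exponential martingale $\bb M_T^F$ of \eqref{mart1}. After pulling out the constant $\ee^{-N\Lambda_{e,u}(\phi)}$, one is left with
$$\ee^{-N\Lambda_{e,u}(\phi)}\, \bb E^N_{\nu^N}\Bigl[\ee^{N\pi_0^N(\phi)}\, \bb M_T^F\, \ee^{N\alpha\,\psi^{\delta,\eta}_{e,u}(\imath_\ve*\pi^N)}\Bigr].$$
The penalty factor equals $1$ almost surely: by the microcanonical conservation \eqref{picdelta} one has $\pi^N_t\in C^\delta_{e,u}$ for every $t\in[0,T]$ almost surely, and since $C^\delta_{e,u}$ is convex and $\pi^N$ is extended outside $[0,T]$ by $\pi^N_0$ and $\pi^N_T$, both in $C^\delta_{e,u}$, each $(\imath_\ve*\pi^N)_t$ is a convex combination of elements of $C^\delta_{e,u}$ and therefore itself lies in $C^\delta_{e,u}$. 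This yields $\imath_\ve*\pi^N\in\mc C^\delta_{e,u}$ and hence $\psi^{\delta,\eta}_{e,u}(\imath_\ve*\pi^N)=0$. Finally, since $\ee^{N\pi_0^N(\phi)}$ is a function of the initial configuration and $\bb M_T^F$ has conditional mean one given that configuration, conditioning reduces the expectation to $\nu^N_{e_N,u_N}(\ee^{N\pi^N(\phi)})$, whose logarithmic rate equals $\Lambda_{e,u}(\phi)$ by Lemma \ref{lemma1}; this exactly cancels the prefactor, giving $\limsup_N N^{-1}\log\bb E^N_{\nu^N}[\ee^{N\Phi}]\le 0$.

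The main technical subtlety is to justify that $\bb M_T^F$ is indeed a mean-one martingale, since the collision kernel $B(v-v_*,\omega) = \tfrac12|(v-v_*)\cdot\omega|$ is unbounded. The boundedness of $F$ makes $|\lambda^F-\lambda|$ bounded by a constant times $\lambda$, which grows only linearly in $|v-v_*|$, and its integrability against $\pi^N_t\otimes\pi^N_t$ follows from the microcanonical energy bound $\pi^N_t(\zeta_0)\le e+\delta$; this is the same ingredient that underlies the exponential tightness lemmata above. Once this is in place, the convolution/convexity verification and the invocation of Lemma \ref{lemma1} are routine. I remark also that the inclusion of the penalty term, although trivial on the support of $\bb P^N_{\nu^N}$, is crucial for the subsequent mini-max argument, where one passes to limiting paths that are only continuous and which need not respect $\pi_t\in C_{e,u}$ pointwise.
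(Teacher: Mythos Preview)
Your proof is correct and follows essentially the same route as the paper's: insert $\psi^{\delta,\eta}_{e,u}(\imath_\ve*\pi^N)=0$ for free via the microcanonical constraint, recognise the dynamical piece as the mean-one martingale $\bb M_T^F$, and reduce the static factor to Lemma~\ref{lemma1}. Your Chebyshev-bound phrasing is just a cosmetic repackaging of the paper's change-of-measure formulation, and your explicit convexity argument for $\imath_\ve*\pi^N\in\mc C^\delta_{e,u}$ fills in a detail the paper leaves implicit.
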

  \begin{proof}
    Let ${\tilde \nu}^N_{e_N,u_N}$ be the probability on $\Sigma^N$ defined by
    $$
\de{\tilde \nu}^N_{e_N,u_N}=\de {\nu}^N_{e_N,u_N}\exp\big\{N \pi^N(\phi)-\log \nu^N_{e_N, u_N}\big(\ee^{N\pi^N(\phi)}\big)\big\}.
    $$
    Recalling \eqref{picdelta} and  the definition of the martingale $\bb M^F_t$ in
    \eqref{mart1}, we write
    \begin{equation*}\begin{split}
    \bb P^N_{\nu^N_{e_N,u_N}}
    \Big(
    (\pi^N,Q^N)\in B \Big)=
     \int \nu^N_{e_N,u_N} (\de {\bf v}) \bb E^N_{\bf v}\left( \ee^{-N\alpha \psi^{\delta,\eta}_{e,u}(\imath_\ve * \pi^N)}
\id_{B}(\pi^N,Q^N)\right)\\
=\int \de {\tilde\nu}^N_{e_N,u_N}\,\frac{\de \nu^N_{e_N,u_N}}{ \de {\tilde\nu}^N_{e_N,u_N}} \, \bb E^N_{\bf v}\left(\ee^{-N\alpha \psi^{\delta,\eta}_{e,u}(\imath_\ve * \pi^N)}\bb M^F_T \big(\bb M^F_T   \big)^{-1} \id_{B}(\pi^N,Q^N)\right)
    \end{split}\end{equation*}  
    %
    We get
  \begin{equation*}
  \begin{split}
    &\bb P^N_{\nu^N_{e_N,u_N}}
    \Big(
    (\pi^N,Q^N)\in B \Big)
    \\
    &\le  \sup_{(\pi,Q)\in B}
    \exp \big\{-N \big[\pi_0(\phi)-\frac 1 N\log \nu^N_{e_N, u_N}\big(\ee^{N\pi^N(\phi)}\big)+\alpha\psi^{\delta,\eta}_{e,u}\big(\imath_\ve * \pi\big)\\
      &
\qquad \qquad \qquad +Q(F)-\frac 1 2 \int_0^T\! \de t \,\pi_t\otimes \pi_t(\lambda^F-\lambda)
\big]\big\},\\
  \end{split}
\end{equation*}
where we used that $\bb E_{{\tilde \nu}^N_{e_N, u_N}}^N ( \bb M_T^F)= 1$.
The statement follows from Lemma \ref{lemma1}. 
\end{proof}

\begin{lemma}[Variational characterization of the dynamical rate functional]
  \label{p:vr} 
  For any pair
  $(\pi,Q)\in\ms S$ such that $\pi\in C([0,T];\ms P(\bb R^d))$ 
  \begin{equation}
      J_{e,u}(\pi,Q) = \sup_{F, \alpha, \delta, \eta, \ve} \Big\{Q(F)-\frac 1 2 \int_0^T \de t \, \pi_t\otimes
      \pi_t(\lambda^F-\lambda) + \alpha\psi^{\delta,\eta}_{e,u}\big(\imath_\ve * \pi\big)\Big\},
   \end{equation}
where  the supremum is carried out  over all continuous
and bounded $F\colon [0,T]\times (\bb R^d)^2\times  (\bb R^d)^2 \to\bb R$
such that $F(t;v,v_*,v',v'_*)=F(t;v_*,v,v',v'_*)=F(t;v,v_*,v'_*,v')$, and $\alpha, \delta, \eta, \ve >0$.
\end{lemma}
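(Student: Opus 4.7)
The plan is to reduce the formula to the Legendre duality for entropy after verifying that the penalty term correctly enforces $\pi \in \ms C_{e,u}$. A direct computation using \eqref{4} and \eqref{def:lambdaF} gives
\[
  \tfrac12 \int_0^T\! \de t\, \pi_t\otimes \pi_t(\lambda^F - \lambda) = Q^\pi(\ee^F - 1),
\]
so the functional inside the supremum equals $Q(F) - Q^\pi(\ee^F - 1) + \alpha\, \psi^{\delta,\eta}_{e,u}(\imath_\ve * \pi)$. I would then split into two cases according to whether $\pi \in \ms C_{e,u}$.

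\textbf{Case 1: $\pi \notin \ms C_{e,u}$.} Here the LHS is $+\infty$ by definition. Pick $t_0\in[0,T]$ with $\pi_{t_0}\notin C_{e,u}$. Since $\bigcap_{\delta>0} C^\delta_{e,u}=C_{e,u}$, there exists $\delta_0>0$ with $\pi_{t_0}\notin C^{\delta_0}_{e,u}$, so $\pi\notin \mc C^{\delta_0}_{e,u}$. The set $\mc C^{\delta_0}_{e,u}$ is closed in the uniform metric on $C([0,T];\ms P(\bb R^d))$ (the bound on $\mu(\zeta_0)$ gives uniform integrability of first moments, preserving $|\mu(\zeta)-u|\le\delta_0$ under weak limits), whence $\mathrm{dist}(\pi,\mc C^{\delta_0}_{e,u})>0$. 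By uniform continuity of $\pi$ on $[0,T]$, $\imath_\ve*\pi\to\pi$ uniformly as $\ve\downarrow 0$, so for $\ve$ small and a suitable $\eta_0>0$ one has $\psi^{\delta_0,\eta_0}_{e,u}(\imath_\ve*\pi)=1$. Taking $F=0$ makes the first two terms vanish (since $\lambda^0=\lambda$ and $Q(0)=0$), and letting $\alpha\uparrow +\infty$ yields the supremum $+\infty$, matching the LHS.

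\textbf{Case 2: $\pi\in\ms C_{e,u}$.} Convexity of $C_{e,u}$ and the boundary extension give $(\imath_\ve*\pi)_t=\int\imath_\ve(s)\pi_{t-s}\de s\in C_{e,u}$ for every $t$, so $\imath_\ve*\pi\in\ms C_{e,u}\subset\mc C^\delta_{e,u}$, forcing $\psi^{\delta,\eta}_{e,u}(\imath_\ve*\pi)=0$ identically and eliminating the penalty. The supremum reduces to $\sup_F\{Q(F)-Q^\pi(\ee^F-1)\}$. Since $\lambda(v,v_*)\lesssim|v-v_*|$ and $\pi_t(\zeta_0)\le e$, $Q^\pi$ is a finite measure, and the Legendre identity $x\log x-x+1=\sup_{y\in\bb R}\{xy-(\ee^y-1)\}$ shows that for $Q\ll Q^\pi$ the supremum over bounded measurable $F$ equals $\int[\rho\log\rho-(\rho-1)]\de Q^\pi$ (with $\rho=\de Q/\de Q^\pi$), achieved in the limit $F=(-n)\vee\log\rho\wedge n$; if $Q\not\ll Q^\pi$ the supremum diverges. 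In both sub-cases the value matches $J_{e,u}(\pi,Q)$.

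The main obstacle is upgrading the supremum from bounded measurable $F$ to bounded, continuous, \emph{symmetric} $F$. Symmetry is free: averaging $F$ over its orbit under the group of four symmetries of $\ms M$ yields $\tilde F$ with $Q(\tilde F)=Q(F)$ by symmetry of $Q$, while Jensen's inequality combined with the symmetry of $Q^\pi$ gives $Q^\pi(\ee^{\tilde F}-1)\le Q^\pi(\ee^F-1)$; hence the symmetrized candidate is at least as good. Continuity is recovered by approximating the truncated $F_n=(-n)\vee\log\rho\wedge n$ in $L^1(Q+Q^\pi)$ by bounded continuous functions via Lusin's theorem, with the uniform bound $\ee^{|F_n|}\le\ee^n$ providing dominated convergence in the $Q^\pi(\ee^{\cdot}-1)$ term before sending $n\to\infty$. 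This density step is where some care is required, but it is essentially standard once the measure-theoretic setup is fixed.
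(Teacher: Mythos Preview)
Your proof is correct and follows essentially the same route as the paper's: separate the penalty term from the $F$-part, show the penalty enforces $\pi\in\ms C_{e,u}$, and then identify the remaining supremum with the relative-entropy formula via $\tfrac12\int\pi_t\otimes\pi_t(\lambda^F-\lambda)=Q^\pi(\ee^F-1)$. You are in fact more explicit than the paper, which disposes of the constraint by the one-line observation that $(\imath_\ve*\pi)_t\in C_{e,u}$ for all $\ve,t$ implies $\pi_t\in C_{e,u}$, and handles the Legendre duality by ``direct computation'' without discussing the symmetry and continuity restrictions that you treat carefully.
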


\begin{proof}
  By monotonicity
  \begin{equation*}
  \sup_{\alpha, \delta, \eta, \ve} \alpha\psi^{\delta,\eta}_{e,u}\big(\imath_\ve * \pi\big) =\sup_{\ve} \lim_{\alpha\uparrow +\infty}\lim_{\delta\downarrow 0}\lim_{\eta\downarrow 0}
  \alpha\psi^{\delta,\eta}_{e,u}\big(\imath_\ve * \pi\big)=\begin{cases}
  0 \textrm{ if } \pi\in\mc C_{e,u}\\
  +\infty \textrm{ otherwise},
  \end{cases}
  \end{equation*}
  where we have used that if $(\imath_\ve * \pi)_t\in C_{e,u}$, for any $t\in [0,T]$ and $\ve>0$, then $\pi_t \in C_{e,u}$ for any $t\in [0,T]$.
  To complete the proof, it remains to show that for $\pi\in \mc C_{e,u}$
  \begin{equation}
    \label{true-var}
J_{e,u}(\pi, Q)=\sup_{F} \Big\{Q(F)-\frac 1 2 \int_0^T \de t \, \pi_t\otimes
      \pi_t(\lambda^F-\lambda)\Big\}.
  \end{equation}
 Recall the definition of $Q^\pi$
  in \eqref{4} and observe that
  \begin{equation*}
   \frac 1 2 \int_0^T \de t \, \pi_t\otimes \pi_t(\lambda^F-\lambda)
    = Q^\pi \left( \ee^F - 1 \right).
  \end{equation*}
  This implies that if $\sup_F \Big[Q(F)-\frac 1 2 \int_0^T \de t \, \pi_t\otimes
  \pi_t(\lambda^F-\lambda)\Big]$ is finite, then
  $Q$ is absolutely continuous with respect to
  $Q^\pi$.
 The proof is now completed by a direct computation.
\end{proof}

\begin{proof}[Proof of Theorem \ref{upperbound}, upper bound]
  In view of \eqref{picdelta}, Lemma \ref{lemma2'} and Lemma \ref{lemma3} imply the exponential tightness of the family $\{\bb P^N_{\nu^N_{e_N, u_N}}\circ (\pi^N, Q^N)^{-1}\}$.
 Moreover,  Lemma \ref{lemma3} implies that if the large deviation upper bound rate function is finite then
$\pi \in C([0,T], \ms P(\bb R^d))$.
Therefore it is enough to show the statement
  for compacts. In view of Lemma \ref{lemma:up-pf} and the 
  mini-max argument in
  \cite[App.2, Lemma~3.2]{KL}, the statement follows from Lemma
  \ref{lemma2} and Lemma \ref{p:vr}.
\end{proof}

\section{Proof of the lower bound}
\label{sez:4}
In this section we adapt the strategy in \cite{BBBO} to the Kac model,
where the kernel $B$ is not strictly positive.
We shall first prove the lower bound for open neighborhoods of ``nice''
$(\pi, Q)$, and then use a density argument.
As in \cite{BBBO} and \cite{He} we will restrict to $Q$ with bounded
second moment, but we will not require, as in \cite{He}, that $B\ge c > 0$.

\subsection*{Perturbed Kac walks} 
We start by the following law of large numbers for a class of
perturbed Kac's walks.
Consider perturbed time dependent collision kernels
$\tilde B$ that are continuous and satisfy 
\begin{equation}\label{v'}
\sup_{t,v,v_*} 
\tilde\lambda_t(v, v_*)=
\sup_{t,v,v_*} 
\int \tilde B_t (v,v_*,\omega) \de \omega \leq C,
\end{equation}  
for some $C<+\infty$.
Fix $(e,u) \in Z$, a sequence
$(e_N,u_N) \to (e,u)$, and
let $\nu^N_{e_N,u_N}$ be the family of probabilities on $\Sigma^N$ as in
Section \ref{sez:1}, and denote by 
$\tilde{\bb P}_\nu^N$
the law of the perturbed Kac walk with initial datum $\nu^N_{e_N,u_N}$.

\begin{lemma}\label{l:perk}
  As $N\to +\infty$, the pair $(\pi^N,Q^N)$ converges,
  in $\tilde{\bb P}_{\nu^N}^N$ probability, to
  $(f \de v\,, q\, \de t \de v\de v_* \de \omega)$, where
  $q_t(v,v_*,\omega) = \frac 12 f_t(v) f_t(v_*) \tilde B_t(v,v_*,\omega)$ and
  $f\in C\big([0,T]; L^1(\bb R^d)\big)$ is the
  unique  solution   to the
  perturbed Kac's equation
  \begin{equation}
    \label{perk}
    \begin{cases}
      \vspace{3pt}
      {\displaystyle \partial_t f_t(v) =
        \iint \!\de v_* \de \omega \,
        \big[\tilde B_t(v',v'_*,\omega) f_t(v') f_t(v_*')  -
        \tilde B_t(v,v_*,\omega)  f_t(v) f_t(v_*) \big]}, \\
      f_0(\cdot) = \frac{\de m_{\mathrlap{e,u}}}{\de v}. 
    \end{cases} 
  \end{equation}  
  Here we understand that \eqref{perk} holds
  by integrating against  continuous, bounded test functions
  which are continuous differentiable in time.
\end{lemma}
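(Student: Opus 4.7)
The plan is to follow the canonical route for a law of large numbers for Kac-type jump processes: establish tightness of $(\pi^N,Q^N)$, identify the drift of $Q^N$ via a martingale computation, and conclude through uniqueness for the limiting kinetic equation.

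For tightness, I would observe that the microcanonical initial law enforces $\pi^N_t(\zeta_0)\le e_N$ for all $t\in [0,T]$, while the bound \eqref{v'} provides a uniform-in-velocities control on the instantaneous collision rate that plays the same role as the energy estimate used for the hard-sphere kernel. The proofs of Lemmata \ref{lemma2'} and \ref{lemma3} adapt verbatim (with $\lambda$ replaced by $\tilde\lambda$) and yield exponential tightness of $\{(\pi^N,Q^N)\}$ under $\tilde{\bb P}^N_{\nu^N_{e_N,u_N}}$. In particular every subsequential limit $(\pi,Q)$ lies in $C([0,T];\ms P(\bb R^d))\times\ms M$ and satisfies the balance equation \eqref{bal}.

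To identify the limit I would proceed in two pieces. First, since $H_{e,u}$ vanishes only at $\pi=m_{e,u}$, Theorem \ref{sld} gives $\pi^N_0\to m_{e,u}$ in probability. Second, for every bounded continuous symmetric $F$ on $[0,T]\times(\bb R^d)^4$, standard stochastic calculus on the Kac walk yields that $M^{N,F}_t \coloneqq Q^N_{[0,t]}(F) - \tfrac 12 \int_0^t \de s\; \pi^N_s \otimes \pi^N_s\!\big(\int F(s,\cdot,\cdot,v',v'_*)\,\tilde B_s\, \de\omega\big)$ is, up to an $O(1/N)$ diagonal correction, a $\tilde{\bb P}^N$-martingale whose predictable bracket is of order $1/N$ uniformly in $t$ thanks to \eqref{v'} and $\|F\|_\infty<\infty$. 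Hence $M^{N,F}_T\to 0$ in $L^2$, so every limit point $(\pi,Q)$ satisfies $Q=\tfrac 12 \pi_t(\de v)\,\pi_t(\de v_*)\,\tilde B_t\, \de\omega\, \de t$. Substituting this into \eqref{bal} shows that $\pi$ is a weak solution to \eqref{perk} with initial datum $m_{e,u}$.

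It remains to invoke uniqueness. Since $\tilde\lambda$ is uniformly bounded by \eqref{v'}, the right-hand side of \eqref{perk} is globally Lipschitz in total variation on $\ms P(\bb R^d)$, so Gronwall's inequality yields at most one $C([0,T];\ms P(\bb R^d))$-valued solution. Because $m_{e,u}$ has a density by Assumption~\ref{ass:2}(i), the mild form of \eqref{perk}, in which the loss term factorizes as the multiplier $\exp\{-\int_0^t \tilde\lambda_s\, \de s\}$ and the gain term inherits a density from the product structure $\pi_s\otimes \pi_s$ via the collision change of variables, exhibits the unique solution in the form $\pi_t = f_t(v)\,\de v$ with $f\in C([0,T];L^1(\bb R^d))$ solving \eqref{perk}. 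Uniqueness of the limit upgrades subsequential convergence to convergence of the full sequence. The main obstacle is precisely this absolute-continuity step, which is what allows the identification of $Q$ as the density $q_t$ appearing in the statement; it is handled cleanly by the mild formulation above, which works only because \eqref{v'} makes the gain--loss splitting globally well-defined, and every remaining step is a routine consequence of \eqref{v'} together with Theorem \ref{sld}.
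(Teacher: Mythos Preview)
Your argument is correct and contains all the necessary ingredients. The paper's own treatment of this lemma is only a one-line sketch: it asserts that the large deviation \emph{upper bound} from Section~\ref{sez:3} carries over verbatim to the perturbed walk (because $\tilde\lambda$ is bounded by \eqref{v'}), and then observes that the perturbed rate function has a unique zero by the uniqueness of \eqref{perk}; the details are deferred to \cite[Lemma~4.1]{BBBO}.

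The difference is therefore in the identification step. The paper obtains $Q=\tfrac12\,\pi\otimes\pi\,\tilde B\,\de\omega\,\de t$ indirectly, as the unique zero of the variational functional $\sup_F\{Q(F)-Q^{\pi,\tilde B}(\ee^F-1)\}$ associated with the perturbed kernel, so it needs the full exponential-martingale/mini-max machinery of Section~\ref{sez:3} transplanted to $\tilde B$. You instead identify the limit by the linear martingale $M^{N,F}$ and a direct $L^2$ estimate on its bracket, which is more elementary and avoids reproving the perturbed upper bound. What the paper's route buys is economy of presentation (nothing new needs to be written once Section~\ref{sez:3} is in place); what yours buys is that it works without ever invoking large deviations, relying only on tightness and the law of large numbers for the quadratic variation. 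Both approaches hinge on the same two facts---the bound \eqref{v'} and Theorem~\ref{sld} for the initial datum---and on the same Gronwall-type uniqueness, so they are equivalent in strength.
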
  
The proof follow from the fact the
large deviation upper bound
holds also for the perturbed Kac's walk, and
the uniqueness of the solution due \eqref{v'},
see proof of Lemma 4.1 in \cite{BBBO} for the details.

The following specifies the collection of ``nice'' $(\pi, Q)$.
Recall
${\mc S}^{\textrm{ac}}_{e,u}$ in Definition \ref{def:sac}.
\begin{definition}\label{def:B}
Let  $\tilde{\ms S}_{e,u}$ be the collection of elements
$(\pi, Q)\in \ms S^{\mathrm{ac}}_{e,u}$ whose densities $(f, q)$ are
continuous and such
  that
\begin{equation}\label{def:B1}
\sup_{t,v,v_*,\omega} 
\frac {q_t(v,\,v_*,\omega)}
  {f_t(v) f_t(v_*)}< +\infty,
\end{equation}
and
\begin{equation}\label{def:B2}
  \sup_{t,v,v_*,\omega}
    \frac {q_t(v,\,v_*, \omega)}
  {f_t(v) f_t(v_*)B(v,v_*,\omega)}< +\infty.
\end{equation}
\end{definition}

Given $(\pi, Q)\in \tilde{\ms S}_{e,u}$, denote by $\tilde B_t$ the time
dependent perturbed kernel defined by 
\begin{equation}\label{def:Bt}
\tilde  B_t(v, v_*, \omega) = 2\frac {q_t(v,\,v_*,\omega)}{f_t(v) f_t(v_*)},
\end{equation}
that  meets  \eqref{v'}.

The next statement provides
the large deviation lower bound for  neighborhood of elements in $\tilde{\ms S}_{e,u}$.

\begin{proposition}\label{l:lb}
  Let  $(\pi, Q)\in \tilde{\ms S}_{e,u}$. Assume that $\pi_0$ satisfies items
  (iii) in Assumption \ref{ass:2}, and suppose $\pi_0(\de v)=\ee^\phi
  m(\de v)/m(\ee^\phi)$ for some $\phi$ bounded and continuous.
  Fix a sequence
  $(e_N,u_N) \to (e,u)$, and denote by
  $\tilde \nu^N_{e_N,u_N}$ the regular version of the probability
  $\pi_0^{\otimes N}$ conditioned to
  $(\frac 1 N\sum_{i=1}^N \frac 1 2 { |v_i|^2}, \,\frac 1 N
  \sum_{i=1}^Nv_i)$ evaluated at  $(e_N,u_N)$. 
  Then 
  $$
  \varlimsup_{N\to\infty}
  \frac 1 N \Ent\Big(\tilde{\mathbb P}^N_{{\tilde \nu}^N_{e_N,u_N}}\vert
  \mathbb P^N_{\nu^N_{e_N,u_N}} \Big) =
  I_{e,u}(\pi, Q).
  $$
\end{proposition}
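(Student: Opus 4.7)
The plan is to factorize the path-space Radon--Nikodym derivative via the chain rule into a static (initial-time) contribution and a dynamical contribution, identify each limit separately using Lemma~\ref{lemma1} and Lemma~\ref{l:perk}, and then sum them to obtain $I_{e,u}(\pi,Q)$.

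Writing $\bb Q^N:=\tilde{\bb P}^N_{\tilde\nu^N_{e_N,u_N}}$ and $\bb P^N:=\bb P^N_{\nu^N_{e_N,u_N}}$, by the Markov property
\begin{equation*}
  \frac{d\bb Q^N}{d\bb P^N}(\bs v(\cdot))=\frac{d\tilde\nu^N_{e_N,u_N}}{d\nu^N_{e_N,u_N}}(\bs v(0))\cdot\frac{d\tilde{\bb P}^N_{\bs v(0)}}{d\bb P^N_{\bs v(0)}}(\bs v(\cdot)),
\end{equation*}
so that, taking logarithms and $\bb Q^N$-expectations,
\begin{equation*}
  \Ent(\bb Q^N\vert\bb P^N)=\Ent(\tilde\nu^N_{e_N,u_N}\vert\nu^N_{e_N,u_N})+\bb E_{\bb Q^N}\bigg[\log\frac{d\tilde{\bb P}^N_{\bs v(0)}}{d\bb P^N_{\bs v(0)}}\bigg].
\end{equation*}
I handle the two pieces separately.

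For the static piece, a direct computation of the conditioned densities (using $d\pi_0/dm=e^\phi/m(e^\phi)$ and the disintegration of both $\pi_0^{\otimes N}$ and $m^{\otimes N}$ with respect to the empirical energy-momentum) yields
\begin{equation*}
\frac{d\tilde\nu^N_{e_N,u_N}}{d\nu^N_{e_N,u_N}}(\bs v)=\frac{e^{N\pi^N(\phi)}}{\nu^N_{e_N,u_N}(e^{N\pi^N(\phi)})},
\end{equation*}
so that $N^{-1}\Ent(\tilde\nu^N\vert\nu^N)=\bb E_{\tilde\nu^N}[\pi^N(\phi)]-N^{-1}\log\nu^N(e^{N\pi^N(\phi)})$. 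The second term converges to $\Lambda_{e,u}(\phi)$ by Lemma~\ref{lemma1}; the first is controlled by the Gibbs conditioning principle (the analog of Theorem~\ref{sld} for $\pi_0$, justified by Assumption~\ref{ass:2}(iii)), identifying the limit of the single-particle marginal of $\tilde\nu^N$. Combining these with the Legendre-dual identity $H_{e,u}(\pi_0)=\Lambda_{e,u}^*(\pi_0)$ from Lemma~\ref{lemma2} identifies the static limit as $H_{e,u}(\pi_0)$.

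For the dynamical piece, the Girsanov formula for the perturbed Kac walk (namely the exponential martingale \eqref{mart1} with $F(t;v,v_*,v',v_*')=\log(\tilde B_t(v,v_*,\omega)/B(v-v_*,\omega))$ and $\tilde B$ as in \eqref{def:Bt}) gives
\begin{equation*}
  \log\frac{d\tilde{\bb P}^N_{\bs v(0)}}{d\bb P^N_{\bs v(0)}}=N\bigg(Q^N(F)-\frac 12\int_0^T\!\pi_t^N\otimes\pi_t^N(\tilde\lambda-\lambda)\,dt\bigg).
\end{equation*}
Under $\bb Q^N$, Lemma~\ref{l:perk} (in its version adapted to the initial law $\tilde\nu^N$) yields $(\pi^N,Q^N)\to(\pi,Q)$ in probability, and the bounds \eqref{def:B1}--\eqref{def:B2} supply the uniform integrability needed to pass the expectation through the limit. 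The limiting value equals
\begin{equation*}
  Q(F)-\tfrac 12\int_0^T\pi_t\otimes\pi_t(\tilde\lambda-\lambda)\,dt=\int dQ\,\log\tfrac{dQ}{dQ^\pi}-\int dQ+\int dQ^\pi=J_{e,u}(\pi,Q),
\end{equation*}
where the first equality uses $F=\log(dQ/dQ^\pi)$ (from \eqref{def:Bt} and \eqref{4}) and the second is \eqref{5}. Summing the two contributions gives $I_{e,u}(\pi,Q)$ by \eqref{I}.

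The principal technical obstacle is controlling the interchange of limit and expectation in the dynamical piece. Condition \eqref{def:B1} makes $\tilde\lambda$ uniformly bounded and thus tames the compensator; \eqref{def:B2} bounds $\tilde B/B$ from above, keeping $F$ bounded above, while $F=-\infty$ only on $\{\tilde B=0\}$, a set carrying no jumps under $\bb Q^N$, so $Q^N(F)$ is $\bb Q^N$-a.s.\ finite. Together with the exponential tightness that Lemma~\ref{l:perk} entails, these controls yield the $L^1$-convergence required for the identification. A secondary but nontrivial subtlety is the algebraic matching of the static limit to $H_{e,u}(\pi_0)$, which depends on the precise structure of the optimizer in the Legendre transform of Lemma~\ref{lemma2} and on the local CLT used already in Lemma~\ref{lemma1}.
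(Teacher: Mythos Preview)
Your proof follows the same two-step decomposition as the paper (static plus dynamical via the chain rule for relative entropy, Girsanov identification of the dynamical Radon--Nikodym derivative with $F=\log(\tilde B/B)$, and passage to the limit through Lemma~\ref{l:perk}), so the overall architecture is correct and matches.

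Two points where the paper is more precise than your sketch. First, for the static part the paper simply invokes Theorem~\ref{sld} to conclude $\tfrac1N\Ent(\tilde\nu^N_{e_N,u_N}\vert\nu^N_{e_N,u_N})\to H_{e,u}(\pi_0)$, rather than re-deriving this via Lemmata~\ref{lemma1}--\ref{lemma2} and a separate Gibbs conditioning argument; your route is viable but circuitous. Second, and more substantively, the uniform integrability needed in the dynamical step is not supplied by ``exponential tightness that Lemma~\ref{l:perk} entails'' (a law of large numbers does not yield $L^1$ convergence of an unbounded functional). The paper obtains UI of $Q^N(F)$ from the $L^2$ bound in Lemma~\ref{lemmanext}, which applies because \eqref{def:B1}--\eqref{def:B2} make $F$ continuous and bounded on the support of the jumps under $\tilde{\bb P}^N$; and it obtains UI of $\lambda$ with respect to $\de s\,\pi^N_s\otimes\pi^N_s$ from the microcanonical energy conservation $\pi^N_t(\zeta_0)\le e_N$. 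These two concrete ingredients are what close the argument; your appeal to \eqref{def:B1}--\eqref{def:B2} is on the right track for bounding $F$ and $\tilde\lambda$, but you should replace the exponential-tightness remark by the second-moment control of Lemma~\ref{lemmanext} and the energy bound.
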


We premise the following Lemma.

\begin{lemma}
  \label{lemmanext}
  If $F\in C_{\mathrm b}([0,T]\times \bb R^{d}\times \bb R^d\times S_{d-1})$,
  then 
  \begin{equation*}
    \limsup_{N\to\infty}
    \tilde {\bb E}^N_{{\tilde \nu}^N_{e_N,u_N}}\big( Q^N(F)^2 \big) < +\infty.
\end{equation*}
  
\end{lemma}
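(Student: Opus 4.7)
The plan is to control $Q^N(F)$ via its Doob--Meyer decomposition under the perturbed dynamics, exploiting the uniform bound \eqref{v'} on the total jump rate, which is the reason one restricts to $(\pi,Q)\in \tilde{\ms S}_{e,u}$ in the first place.

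First, by Dynkin's formula (applied to the time-augmented Markov process) together with the L\'evy system for the perturbed Kac walk, one writes
\begin{equation*}
Q^N_{[0,T]}(F) = M^F_T + A^F_T,
\end{equation*}
where $M^F$ is a mean-zero $\tilde{\bb P}^N$-martingale and the compensator is
\begin{equation*}
A^F_T = \frac{1}{N^2}\sum_{\{i,j\}} \int_0^T \!ds \int \!d\omega\, \tilde B_s\bigl(v_i(s),v_j(s),\omega\bigr)\, F\bigl(s, v_i(s),v_j(s),\omega\bigr).
\end{equation*}
The factor $1/N^2$ comes from combining the $1/N$ in the generator $\mc L_N$ with the $1/N$ in the definition \eqref{2} of $Q^N$.

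Second, using \eqref{v'} pathwise one gets
\begin{equation*}
|A^F_T| \;\le\; \frac{\|F\|_\infty}{N^2}\sum_{\{i,j\}} \int_0^T \tilde\lambda_s(v_i,v_j)\,ds \;\le\; \frac{\|F\|_\infty C T}{N^2}\cdot \binom{N}{2} \;\le\; \tfrac{1}{2}CT\|F\|_\infty,
\end{equation*}
a bound that is deterministic and uniform in $N$. For the martingale part, the predictable quadratic variation has jumps of size $F/N$ occurring at rate $(1/N)\tilde B_s\, d\omega$ per pair, so
\begin{equation*}
\langle M^F\rangle_T = \frac{1}{N^3}\sum_{\{i,j\}} \int_0^T \!ds \int \!d\omega\, \tilde B_s\, F^2 \;\le\; \frac{CT\|F\|_\infty^2}{2N},
\end{equation*}
again by \eqref{v'}. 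Hence $\tilde{\bb E}^N\bigl[(M^F_T)^2\bigr] = \tilde{\bb E}^N\bigl[\langle M^F\rangle_T\bigr] = O(1/N)$.

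Finally, $(a+b)^2 \le 2a^2 + 2b^2$ gives
\begin{equation*}
\tilde{\bb E}^N\bigl[Q^N(F)^2\bigr] \le 2\tilde{\bb E}^N\bigl[(M^F_T)^2\bigr] + 2\|A^F_T\|_\infty^2 \;\le\; \frac{CT\|F\|_\infty^2}{N} + \tfrac{1}{2}C^2T^2\|F\|_\infty^2,
\end{equation*}
which is bounded uniformly in $N$. The only real subtlety is bookkeeping: correctly locating the $1/N$ factors so that the Kac normalization produces a compensator which is $O(1)$ (rather than $O(N)$) and a quadratic variation which is $O(1/N)$. Everything else follows from $\|F\|_\infty<\infty$ and the pointwise bound \eqref{v'} on $\tilde\lambda$, which is guaranteed by the definition \eqref{def:Bt} combined with \eqref{def:B1} and the compactness of $S_{d-1}$.
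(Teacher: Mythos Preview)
Your proof is correct and follows essentially the same approach as the paper: decompose $Q^N(F)$ into a martingale plus compensator, then bound both using the uniform bound \eqref{v'} on $\tilde\lambda$. Your version is in fact more explicit (the paper only writes down the quadratic variation and says ``which implies the statement''), and your $1/N^3$ in $\langle M^F\rangle_T$ is the correct power, whereas the paper records $1/N^2$; either way the quadratic variation is $O(1)$ and the conclusion is unaffected.
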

\begin{proof}
    Set
  \begin{equation*}
    \tilde{M}^{N}_t\coloneqq  Q^N_{[0,t]}(F)-\frac 1 {N^2}\sum_{\{i,j\}}
    \int_0^t \!\de s\,\int\! \de \omega\, \tilde B_s (v_i,v_j, \omega)
    F_s(v_i,v_j,\omega),
  \end{equation*}
  that it is a $\tilde {\bb P}_{{\tilde \nu}^N_{e_N,u_N}}$ martingale
  with predictable quadratic variation 
  \begin{equation*}
    \langle \tilde{M}^{N}  \rangle_t = \frac 1 {N^2}\sum_{\{i,j\}}
   \int_0^t \!\de s\,\int\! \de \omega\, \tilde B_s (v_i,v_j, \omega)
    F_s(v_i,v_j,\omega)^2.
   \end{equation*} 
   In view of \eqref{v'}, the random variable
   $\langle \tilde{M}^{N}  \rangle_T $ is uniformly bounded in $N$,
   which implies the statement.
\end{proof}

\begin{proof}[Proof of Proposition~\ref{l:lb}]
  By using Theorem \ref{sld}, it is enough to show that 
  \begin{equation}\label{HJ}
    \varlimsup_{N\to\infty}
  \frac 1 N \Ent\Big(\tilde{\mathbb P}^N_{{\tilde \nu}^N_{e_N,u_N}}\vert
  \mathbb P^N_{{\tilde \nu}^N_{e_N,u_N}} \Big) =
  J_{e,u}(\pi, Q).
  \end{equation}
  In view of the assumptions on $\tilde B$, the
  value at time $T$ of the 
  martingale
  defined in \eqref{mart1} with $F_t=\log(\tilde B/B)$ is
  the Radon-Nykodim
  derivative of $\tilde{\bb P} ^N_{{\tilde\nu}^N_{e_N,u_N}}$
  with respect to
  $\bb P ^N_{{\tilde\nu}^N_{e_N,u_N}}$.
  Since $\lambda_t^F=\tilde \lambda_t$,
  \begin{equation*}
    \begin{split}
      &\frac 1 N \Ent \Big(\tilde{\mathbb P}^N_{{\tilde \nu}^N_{e_N,u_N}}\vert
      \mathbb P^N_{{\tilde \nu}^N_{e_N,u_N}} \Big) \\
      &=\tilde {\bb E}_{{\tilde \nu}^N_{e_N,u_N}}
      \Big( Q_{[0,T]}^N(F)-\frac 1 2 \int_0^T \!\de s\,
    \pi_s^N\otimes \pi^N_s\big(\tilde \lambda_s -\lambda  \big)
    \Big). 
  \end{split}\end{equation*}
Now observe that, by Lemma \ref{l:perk}, $(\pi^N,Q^N)$ converges to
$(\pi,Q)$ in $\tilde{\mathbb P}^N_{{\tilde \nu}^N_{e_N,u_N}}$
probability.
By definition of $\tilde S_{e,u}$, $F$ satisfies the assumption of
Lemma~\ref{lemmanext}, then the sequence $Q_{[0,T]}^N(F)$ is uniformly
integrable with respect to $\tilde{\mathbb P}^N_{{\tilde \nu}^N_{e_N,u_N}}$.
By \eqref{v'}, 
$\pi^N_s\otimes \pi^N_s(\tilde\lambda_s)$ converges to
$\pi_s\otimes\pi_s(\tilde\lambda_s)$ for almost all $s\in[0,T]$. 
Moreover, by conservation of energy, 
$\lambda$ is uniformly integrable with respect to 
$\de s\,\pi^N_s\otimes \pi^N_s$.
Therefore \eqref{HJ} follows.
\end{proof}

\subsection*{Approximating paths}

Recall that  the set $\hat{\ms S}$ has been defined above Theorem
\ref{upperbound}.
\begin{theorem}\label{the:approx}
  For each  $(\pi, Q)\in \hat{\ms S}$ such that  $I_{e,u}(\pi,Q) < +\infty$
  there exists a sequence $\{(\pi_n,Q_n)\}\subset
  \tilde{\ms S}_{e,u}\cap \hat{\ms S}$ satisfying
  $(\pi_n,Q_n)\to (\pi, Q)$ and $I_{e,u}(\pi_n,Q_n)\to I(\pi, Q)$.
\end{theorem}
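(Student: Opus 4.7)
\noindent Given $(\pi,Q)\in\hat{\ms S}$ with $I_{e,u}(\pi,Q)<\infty$, the plan is to construct an approximating sequence in three stages---convex combination with the Maxwellian equilibrium, truncation of the effective collision kernel, and mollification of the initial datum---closing with a diagonal extraction. The guiding observation is that the absolute continuity $Q\ll Q^\pi$ implicit in $J_{e,u}(\pi,Q)<\infty$ lets one regard $(\pi,Q)$, once a density of $\pi_t$ is available, as a weak solution of a perturbed Boltzmann equation with effective kernel $\tilde B\coloneqq 2q/(ff_*)$, where $f_t=d\pi_t/dv$ and $q$ is the density of $Q$ (with $f_*\coloneqq f(v_*)$). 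Membership in $\tilde{\ms S}_{e,u}$ is exactly the boundedness of both $\tilde B$ and $\tilde B/B$, so one must produce perturbed Kac flows whose effective kernels satisfy these bounds.

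\emph{Stage A (convex combination with equilibrium).} Let $\bar\pi_t\equiv m_{e,u}$ and $\bar Q\coloneqq Q^{\bar\pi}$, whose density is $\tfrac12\bar f(v)\bar f(v_*)B(v,v_*,\omega)$ with $\bar f\coloneqq dm_{e,u}/dv$. The Maxwellian identity $\bar f(v)\bar f(v_*)=\bar f(v')\bar f(v'_*)$ on the collision manifold implies that $(\bar\pi,\bar Q)$ satisfies the balance equation, lies in $\ms S^{\mathrm{ac}}_{e,u}\cap\hat{\ms S}$, and has $I_{e,u}(\bar\pi,\bar Q)=0$. For $\lambda\in(0,1)$ set $(\pi^\lambda,Q^\lambda)\coloneqq(1-\lambda)(\pi,Q)+\lambda(\bar\pi,\bar Q)$. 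Joint convexity of $H_{e,u}$ (from Theorem~\ref{sld}) and of $J_{e,u}$ as a relative entropy, combined with lower semicontinuity, give $I_{e,u}(\pi^\lambda,Q^\lambda)\to I_{e,u}(\pi,Q)$ as $\lambda\downarrow 0$. Modulo a further regularization of any residual singular component of $\pi_t$ (carried out jointly with Stage~C below), one may assume $\pi^\lambda_t$ has density $f^\lambda_t\ge\lambda\bar f>0$, so that the effective kernel $\tilde B^\lambda\coloneqq 2q^\lambda/(f^\lambda f^\lambda_*)$ is pointwise defined.

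\emph{Stages B and C (kernel truncation and initial mollification).} For $M>0$ set $\tilde B^{\lambda,M}\coloneqq\tilde B^\lambda\wedge M\wedge(MB)$, which is bounded (meeting \eqref{v'}) and satisfies $\tilde B^{\lambda,M}/B\le M$. Let $f^{\lambda,M}$ be the unique energy-conserving solution of the perturbed Kac equation \eqref{perk} with kernel $\tilde B^{\lambda,M}$ and initial datum $f^\lambda_0$, and set $q^{\lambda,M}\coloneqq\tfrac12 f^{\lambda,M}f^{\lambda,M}_*\tilde B^{\lambda,M}$; then $(\pi^{\lambda,M},Q^{\lambda,M})\in\ms S^{\mathrm{ac}}_{e,u}\cap\hat{\ms S}$ satisfies \eqref{def:B1}--\eqref{def:B2}. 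As $M\to\infty$, $\tilde B^{\lambda,M}\uparrow\tilde B^\lambda$; a compactness-uniqueness argument (weak precompactness of $\{f^{\lambda,M}\}$ from energy conservation, limit identification via uniqueness of energy-conserving solutions of the Boltzmann equation with kernel $\tilde B^\lambda$ from $f^\lambda_0$) gives $(\pi^{\lambda,M},Q^{\lambda,M})\to(\pi^\lambda,Q^\lambda)$; splitting the entropy integral over $\{\tilde B^\lambda/B\le 1\}$ and its complement and invoking dominated/monotone convergence yields $J_{e,u}(\pi^{\lambda,M},Q^{\lambda,M})\to J_{e,u}(\pi^\lambda,Q^\lambda)$, while $H_{e,u}$ is unchanged. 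Stage~C mollifies $f^\lambda_0$ in $v$ by convolution, applied to the remainder $f^\lambda_0-\bar f$ so as to preserve $(\pi_0^\lambda(\zeta_0),\pi_0^\lambda(\zeta))$; the bounded kernel then yields continuous $f^{\lambda,M,\epsilon}$ and $q^{\lambda,M,\epsilon}$ by standard Kac theory. A diagonal extraction in $(\lambda,M,\epsilon)$ furnishes the required sequence.

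\emph{Main obstacle.} The delicate step is the $M\to\infty$ identification in Stage~B: the Boltzmann equation with the possibly unbounded kernel $\tilde B^\lambda$ could \emph{a priori} admit several energy-conserving weak solutions from the same initial data---this is precisely the Lu--Wennberg phenomenon \cite{LuW} for the genuine hard-sphere kernel. One must couple the finite-entropy control from $J_{e,u}(\pi^\lambda,Q^\lambda)<\infty$ with the lower bound $f^\lambda\ge\lambda\bar f$ to derive a Mischler--Wennberg-type uniqueness criterion identifying the limit as $f^\lambda$. The remaining ingredients---convexity and lower semicontinuity for the rate function, continuity of $H_{e,u}$ under moment-preserving mollifications, and stability of the Kac semigroup with bounded kernel---are comparatively standard.
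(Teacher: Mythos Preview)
Your Stage~A hinges on joint convexity of $J_{e,u}$ in $(\pi,Q)$, which fails for the hard-sphere kernel. Although $(Q,P)\mapsto\int\de Q\log\frac{\de Q}{\de P}-Q(1)+P(1)$ is jointly convex, here $P=Q^\pi$ depends \emph{quadratically} on $\pi$, and composition with a quadratic map need not preserve convexity. Concretely, for $Q=0$ one has $J_{e,u}(\pi,0)=Q^\pi(1)=\tfrac12\int_0^T\!\de t\,\pi_t\!\otimes\!\pi_t(\lambda)$ with total cross section $\lambda(v,v_*)=c_d|v-v_*|$; since $|v-v_*|$ is a conditionally negative-definite kernel, $\pi\mapsto Q^\pi(1)$ is \emph{concave} on probabilities (check e.g.\ the interpolation between $\pi_t\equiv\delta_0$ and $\pi_t\equiv\tfrac12(\delta_{-e_1}+\delta_{e_1})$, both of which satisfy the balance equation with $Q=0$). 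Hence your $\limsup$ bound $I_{e,u}(\pi^\lambda,Q^\lambda)\le(1-\lambda)I_{e,u}(\pi,Q)$ is unjustified.

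Stage~B follows a route genuinely different from the paper's and carries the gap you yourself flag as the ``main obstacle.'' The paper never solves a PDE nor invokes uniqueness: it first convolves $(f,q)$ in velocity (this is what produces densities for $\pi_t$, $t>0$---finiteness of $I_{e,u}$ alone does not), then in time, and finally truncates $q$ directly by a cutoff $\chi^\ell$ while \emph{explicitly} rebuilding $f$ so that the balance equation, the inequality $\tilde f^\ell\ge f$, and the bounds \eqref{def:B1}--\eqref{def:B2} hold by construction. Your approach instead re-solves the perturbed Kac equation with truncated kernel $\tilde B^{\lambda,M}$, so identifying the $M\to\infty$ limit with $(\pi^\lambda,Q^\lambda)$ requires a uniqueness theorem for the Boltzmann-type equation with the unbounded, time-inhomogeneous kernel $\tilde B^\lambda$; the Mischler--Wennberg theory treats only the true hard-sphere kernel and does not supply this. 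In addition, defining $\tilde B^\lambda=2q^\lambda/(f^\lambda f^\lambda_*)$ presupposes that $\pi^\lambda_t$ has a density for every $t$, which Stage~C (mollifying only the initial datum) does not provide.
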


\begin{proof}
  The proof is achieved by combining the following
  three steps
  and a standard diagonal argument.
  In particular, in Step 1 we construct
  positive regular approximating probability paths,
  in Step 2 we regularize in time, 
  in Step 3 we perform
  a truncation argument as in \cite{BBBO},
  adapted to the hard-sphere kernel.

  \smallskip\noindent\emph{Step 1. Velocity convolution.}
  Since $I_{e,u}(\pi,Q)<+\infty$ and $(\pi,Q)\in \hat {\ms S}$,
  $\pi_t(\zeta ) = u$, $\pi_t(\zeta_0) = \pi_0 (\zeta_0)
  = U + |u|^2/2  \in (0,e]$,
  where $U = \frac 12 \int \pi_t(\de v) |v-u|^2$ is the internal energy.

  Let $(f,q)$ be the densities of $(\pi, Q)$.
  Given $0<\delta<1$, let $g_\delta$ be the Gaussian kernel on
  $\bb R^d$ with variance $\delta$ and define
  \begin{equation}
    \label{def-fdelta}\begin{split}
      & f^\delta_t(v)=
      \alpha (g_\delta * f_t)(\alpha (v -u )+u)\\
      & q_t^\delta(v,v_*,\omega) =
      \alpha^2 (g_\delta\otimes g_\delta \otimes \opid * 
      q)(\alpha (v-u) + u , \alpha (v_*-u) +u , \omega)
    \end{split} 
  \end{equation}
  where $\opid$ is the identity function and 
  $\alpha = \alpha(\delta) > 0$ is chosen such that
  $\int \de v f^\delta_t(v) |v-u|^2/2= U$. Observe that
  for any $\alpha>0$, 
  $\int \de v f_t^\delta (v)  v = u $.

  Let $(\pi^\delta,Q^\delta)$
  be the 
  pair with densities $(f_t^\delta,q_t^\delta)$, which
  satisfies the balance equation.
  In order to prove the convergence of the rate function, we first
  observe
  that, 
  by item (ii) in
  Assumption \ref{ass:2}, we can write 
  $$
  \Ent(\pi^\delta_0| m_{e,u})=
  \int f^\delta_0\log f^\delta_0 + \int f^\delta_0 \log \frac {1} {m_{e,u}}.
  $$
  Since $\alpha(\delta) \to 1$ as $\delta \to 0$,
  by Jensen inequality and item
  (ii) in
  Assumption \ref{ass:2},
  \begin{equation*}
    \varlimsup_{\delta \to 0} \Ent( \pi^\delta_0| m_{e,u})
    \le \Ent(\pi_0|m_{e,u}).
  \end{equation*}
  By the choice of $\alpha$, $f^\delta_0$ has the same energy
  as $f_0$. Therefore 
  \begin{equation*}
    \varlimsup_{\delta \to 0} H_{e,u}( \pi^\delta_0)
    \le H_{e,u}(\pi_0).
  \end{equation*}
  
  We will conclude the proof showing that
  $\varlimsup J_{e,u} (f^\delta,q^\delta) \le J_{e,u} (f,q)$.
  We first observe that by
  a straightforward approximation argument we can 
  choose $F=\log 1/B$ in \eqref{true-var}, and deduce 
  \begin{equation}
    \label{eq:qlog1b}
  Q\big(\log \frac 1B\big) \le  J_{e,u} (f,q) + 
  \frac 12 \int_0^T \de t \int \de v \de v_* \de \omega
  f f_* B \big(\frac 1B - 1\big) < \infty.
  \end{equation}
  We prove in Appendix \ref{appendixa} that
  $Q^\delta(\log 1/B)$
  is bounded and 
  converges to $Q(\log 1/B)$ as $\delta \to 0$. Therefore
  $$J_{e,u}(\pi^\delta, Q^\delta)
  = \int_0^T \de t \int \de v \de v_* \de \omega\,
  q^\delta \log \frac {2 q^\delta}{f^\delta f^\delta_*}
  + Q^\delta \big( \log \frac 1B \big) - Q^\delta(1) +
  Q^{\pi^\delta}(1).$$
  Since  the map $[ 0, +\infty)^2 \ni (a,b)\mapsto a\log(a/b)$
  is one-homogeneous and convex, by \eqref{def-fdelta} 
  and Jensen's inequality
  the first term on the r.h.s. is bounded by
  $Q(\log \frac {2q}{ff_*})$.
  Moreover, $Q^\delta(1) = Q(1)$, while,
  since $B= \frac 12 |(v-v_*)\cdot \omega|$, 
  $Q^{\pi^\delta} (1) = \frac 1\alpha Q^\pi(1)$.
  
  \smallskip\noindent\emph{Step 2. Time convolution.}
  Consider $(\pi,Q)\in \hat {\ms S}$ such that $I_{e,u}(\pi,Q) < +\infty$,
  and denote with $(f,q)$ their densities. Assume
  that $f$ and $q$ are smooth in the velocities, and $f>0$.
  Observe that approximating path constructed in Step 1 meets
  these requirements.

  Extend $[0,T] \ni t\mapsto (f_t,q_t)$ to a function defined on
  $(-\infty,T]$ by setting
  $(f_t,q_t) = (f_0,0)$ if $t<0$.
  Let $\imath_\ve$ be the a smooth approximation of the
  $\delta$ function, with support in $(-\ve,0)$,
  and denote by $(\pi^\ve,Q^\ve)$ the path
  with densities $(f^\ve,q^\ve) = \imath_\ve * (f,q)$; here we
  understand the convolution in time.
  The pair 
  $(\pi^\ve,Q^\ve)$ converges to $(\pi,Q)$ and satisfies the balance equation
  \eqref{bal}.
  Observe that $f^\eps_0 = f_0$ and,  since $(\pi,Q)\in \hat {\ms S}$,
  $\pi_t(\zeta) = \pi_0(\zeta)$ for any $t\in [0,T]$, 
  so that $\pi^\ve_t(\zeta) = \pi_0(\zeta)$
  for any $t\in [0,T]$,

  We claim that
  $\lim_{\ve \to 0} I_{e,u}(\pi^\ve,Q^\ve) = I_{e,u} (\pi,Q)$.
  To this hand, as $H_{e,u}(\pi^\ve_0) =  H_{e,u}(\pi_0)$,   
  by lower semi-continuity it is enough to show that
  $\varlimsup_{\ve \to 0} J_{e,u}(\pi^\ve,Q^\ve) \le J_{e,u} (\pi,Q)$.

  Let $g_1$ be the standard Gaussian density on $\bb R^d$.
  We observe that, by standard approximation argument,
  we can choose $F= \log g_1 / f$ in the variational formula
  \eqref{true-var}, and deduce that
  $\int q \log \frac 1f< + \infty$ is finite.
  Since $J_{e,u}(\pi,Q)$ is bounded, using \eqref{eq:qlog1b},
  we then deduce that $\int q \log q < +\infty$.

  By Jensen inequality
  $\int q^\ve  \log q^\ve  \le  \int q \log q < +\infty$.
  On the other hand, by convexity, the maps
  $q \mapsto \int q \log q$ is lower semi-continuous, 
  therefore we conclude that
  $$\lim_{\ve \to 0}   \int q^\ve  \log q^\ve  =  \int q \log q.$$

  We write
  \begin{equation*}
  \begin{split}
    J_{e,u} (\pi^\ve, Q^\ve) =
    &-\int q^\ve \log 2q^\ve +
    \int q^\ve \log \frac {2q^\ve}{f^{\ve}} +
    \int q^\ve \log \frac {2q^\ve}{f_*^{\ve}} \\
    &+
    \int q^\ve \big( \log \frac 1B -1\big)
    + \int f^\ve f^\ve_* B.
  \end{split}
  \end{equation*}
  As already stated, the first term on the right-hand-side converges.
  By Jensen inequality the second term is  bounded by
  $\int q \log (2q/f)$ 
  and the  third by $\int q \log (2q/f_*)$.
  Moreover,  the fourth does not depend on $\ve$.
  The convergence of the last term follows from the fact that,
  since the energy is uniformly bounded
  and $\pi\in C([0,T],\ms P(\bb R^d)$, the map
  $[0,T]^2\ni (s,s') \mapsto \int \de v \de v_* \de \omega
  f_s(v) f_s(v_*) B(v-v_*,\omega)$ is continuous.

  \smallskip\noindent\emph{Step 3. Truncation.}
  Consider $(\pi,Q)\in \hat {\ms S}$ with
  $I_{e,u}(\pi, Q)<+\infty$, with densities $(f,q)$.
  We denote
  by $q_t^{(i)}$, $i=1,\ldots ,4$ the marginal of $q_t$
  respectively on $v,v_*,v',v_*'$. Then
  $q_t^{(1)}= q_t^{(2)}$,
  $q_t^{(3)}= q_t^{(4)}$, and the balance equation is the weak version of
  the identity
  $$\partial_t f_t = 2\big( q_t^{(3)} - q_t^{(1)}\big).$$
  In the sequel we
  assume  $(f,q)$ smooth, 
  $f$ strictly
  positive, 
  and $q_t^{(3)}\in L^2([0,T]\times \bb R^d)$.
  Observe that the approximating path defined by
  applying sequentially Step 1 and 2
  meets the above conditions. Indeed,
  the last condition above follows by  
  Young inequality for convolutions.

  Given $\ell>0$,
  let  $\chi^\ell(v,v_*,\omega)\in [0,1]$ be a continuous
  function  such that
  $$\chi^\ell (v,v_*,\omega) =
  \begin{cases}
    1 & \text{ if } |v|^2 + |v|^2_* < \ell \text { and }
    |(v-v_*)\cdot \omega| > 1/\ell \\
    0 & \text{ if } |v|^2 + |v|^2_* \ge (\ell+1) \text { or }
    |(v-v_*)\cdot \omega| \le 1/(\ell+1) 
  \end{cases}
  $$
  We define $(\tilde f^\ell, \tilde q^\ell)$ by
  \begin{equation}
    \label{tildeell}
    \begin{aligned}
      &\tilde q^\ell(v,v_*,\omega) = q(v,v_*,\omega) \chi^\ell(v,v_*,\omega)\\
    &\tilde f^\ell_t = f_0 +
    2
      \int_0^t \de s
      \left(\tilde q^{\ell,(3)}_s-
        \tilde q^{\ell,(1)}_s
      \right)
      + 2\int_0^T  \de s
      \left(q^{(3)}_s-
        \tilde q^{\ell,(3)}_s
      \right)
  \end{aligned}
\end{equation}
Observe that $\tilde q_t^\ell \le q_t$. Moreover
$\tilde f^\ell_t \ge  f_t $, since
\begin{equation}
  \label{variqtilde}
  \begin{aligned}
    &\int_0^t \de s
    \left(\tilde q^{\ell,(3)}_s-
      \tilde q^{\ell,(1)}_s
    \right)
  + \int_0^T  \de s
  \left(q^{(3)}_s-
    \tilde q^{\ell,(3)}_s
  \right)\\
  &=
  \int_0^t \de s
  \left( q^{(3)}_s-
    \tilde q^{\ell,(1)}_s\right)
    +
    \int_t^T  \de s
    \left(q^{(3)}_s-
      \tilde q^{\ell,(3)}_s
    \right).
    \end{aligned}
\end{equation}
Set
$$c_\ell^{-1} = 1 +  2\int_0^T  \de s \int \de v
\left(q^{(3)}_s-
  \tilde q^{\ell,(3)}_s
\right), 
$$
and denote by $(e_\ell,u_\ell)$  the energy and momentum
of the probability $c_\ell {\tilde f}_t^\ell \de v$. Note that
$(e_\ell,u_\ell)$ does not depend on $t$ since $(\pi,Q)\in \hat{\ms S}$.
We define
$(f^\ell, q^\ell)$ by:
$$f^\ell(v) = \alpha  c_\ell \tilde f^\ell( \alpha (v-u)+u_\ell), \ \
\ q^\ell(v,v_*,\omega) = \alpha^2 c_\ell \tilde q^\ell
( \alpha (v-u)+u_\ell, \alpha (v-u)+u_\ell)$$
where $\alpha=\alpha_\ell>0$ is 
chosen such that
$\int \de v f_0^\ell (v) \bs \zeta(v)=\int \de v f_0 (v) \bs \zeta(v)$.
Observe that
the pair $(f^\ell, q^\ell)$ satisfies the balance
equation. 
As $\ell\to +\infty$, $c_\ell \to 1$, $u_\ell\to u$,
$\alpha_\ell\to 1$,
therefore $(f^\ell, q^\ell)$ converges to $(f,q)$.

We claim that
$$\varlimsup_{\ell \to +\infty} I_{e,u}(\pi^\ell,Q^\ell) \le
I_{e,u} (\pi,Q).$$
We start by proving that
\begin{equation}\label{dario3}
\varlimsup_{\ell \to +\infty}
H_{e,u}(\pi_0^\ell) \le H_{e,u}(\pi_0).
\end{equation}
Let $m^\ell$ be the probability measure satisfying
$$\int m^\ell (\de v) \varphi(v) =
\int m_{e,u} (\de v) \alpha \varphi(\alpha (v-u) + u_\ell),$$
for any $\varphi \in C_{\mathrm b}(\bb R^d)$, 
and let $\rho^\ell$ be its density.
By a change of variable 
\begin{equation}
  \label{entconvex}
  \Ent(\pi^\ell_0|m_{e,u}) = \Ent(c_\ell \tilde f^\ell_0 \de v| m^\ell).
\end{equation}
By \eqref{tildeell}, 
$$c_\ell \tilde f_0^\ell = c_\ell f_0 + (1-c_\ell) \bar h^\ell,$$
where
$h^\ell = 
2\int_0^T  \de s
\left(q^{(3)}_s-
  \tilde q^{\ell,(3)}_s
\right)
$ and $\bar h^\ell =  h^\ell / \int h^\ell$.
By convexity
$$\Ent(c_\ell \tilde f^\ell_0 \de v| m^\ell)
\le c_\ell \Ent (\pi_0| m^\ell) + (1-c_\ell) \Ent( \bar h^\ell\de v
|m^\ell).$$
Since $c_\ell\to 1$, $\alpha_\ell\to 1$, $u_\ell \to u$,
in view of item (iv) in Assumption \ref{ass:2},
by dominated convergence the first term on the right-hand-side
of \eqref{entconvex} converges to 
$\Ent(\pi_0|m)$.

We now show that the second term vanishes. 
Observe that
$$
(1-c_\ell) \Ent(\bar h^\ell\de v|m^\ell)
= c_\ell \int h^\ell \log h^\ell +(1-c_\ell)\log \frac{c_\ell}{1-c_\ell}
- c_\ell \int h^\ell \log \rho^\ell
$$
Since, by assumption on $q^{(3)}$, $h^\ell \in L^2$ and
it converges to zero pointwise,
the first term vanishes. The second term vanishes since $c_\ell \to 1$.
Finally, using item (iv)
of Assumption \ref{ass:2},
the last term vanishes
by dominated convergence.
Since $\pi_0^\ell(\bs \zeta) = \pi_0(\bs \zeta)$, \eqref{dario3} follows.

We conclude the proof by   showing that
$$\lim_{\ell \to +\infty} J_{e,u}(\pi^\ell,Q^\ell) = J_{e,u} (\pi,Q).
$$
By a change of variables, 
\begin{equation*}
  \begin{split}
    J_{e,u}(\pi^\ell,Q^\ell) =
    c_\ell \int  \tilde q^\ell \log \frac { 2 \tilde q^\ell }{
      c_\ell \tilde f^\ell \tilde f^\ell_* B}
    + c_\ell \log \alpha \int \tilde q^\ell 
    - c_\ell \int \tilde q^\ell + \frac {c_\ell^2}\alpha
    \int \tilde f^\ell  \tilde f^\ell_* B. 
      \end{split}
\end{equation*}
Since $\tilde q^\ell \le q$,  $\tilde f^\ell \ge f$,
and $c_\ell \to 1$, by dominated
convergence the first term on the right-hand-side converges to
$\int q \log (2q/ff_* B)$.
Since $\int q^\ell \to \int q$ and $\alpha \to 1$,
the second term tends to $0$, and the third converges to $Q(1)$.
Finally, since $\int q^{(3)} \zeta_0 < +\infty$,
$B$ is uniformly integrable with respect to $\tilde f^\ell
\tilde f^\ell_*$, therefore the last term converges to $Q^\pi(1)$.
\end{proof}


\section{Large deviations for Kac model with
  canonical initial data}
\label{sez:6}

In this section we consider
the Kac model with canonical initial data, namely when
the initial velocities are i.i.d.\
sampled from a given probability $m$.
In view of the abstract Proposition \ref{t:ldtp}, the large deviation
principle for the pair empirical measure and flow
can be deduced from  the large deviation
principle of the Kac model with microcanonical initial data.

The canonical rate function is given by
\begin{equation}
  \label{eq:Ican}
  I(\pi,Q) = \inf_{(e,u)\in Z} \big( A(e,u) + I_{e,u} (\pi,Q) \big),
\end{equation}
where $A$, as defined in \eqref{cra}, is the rate function
relative to the sum of i.i.d.\ random variables
given by Cram\'er's theorem.

In order to compare this rate function with the one
in \cite{Le,He}, 
consider the dynamical function as in \eqref{5}, but without the
microcanonical constraint, namely 
\begin{equation}
  \label{Jfree}
  J(\pi,Q)\coloneqq 
    \int  \de Q^\pi \Big[ 
    \, \frac{\de Q\phantom{^\pi}}{\de Q^\pi} \log \frac{\de Q\phantom{^\pi}}{\de Q^\pi} -
    \Big( \frac{\de Q\phantom{^\pi}}{\de Q^\pi}  -1\Big)\Big].
\end{equation}
Then functional in \cite{Le,He} reads
$$\mc I(\pi,Q) = \Ent(\pi_0|m) + J(\pi,Q).$$
By Remark \ref{remark1}, for any $(\pi,Q)\in \ms S$ we have
$\mc I (\pi,Q) \le I(\pi,Q)$.
For some path $(\pi,Q)$ this inequality is strict because,
as discussed in detail in the next section,
$\mc I$ vanishes on  Lu and Wennberg solutions,
while $I$ is strictly positive.

\begin{theorem}
  \label{canonico}
  
  Let $m$ by a probability measure in $\bb R^d$ 
  and set
  $\mu^N = m^{\otimes N}$.
  If $m$ satisfies item (i)--(iii)
  in Assumption \ref{ass:2}
  then the family
  $\bb P^N_{\mu^N} \circ (\pi^N,Q^N)^{-1}$ satisfies a large deviation
  upper bound with good rate function $I: \ms S \to [0,+\infty]$,
  namely $I$ has compact level sets and for each closed $C\subset \ms S$
  \begin{equation}\label{upeqc}
    \varlimsup_{N\to +\infty} \frac 1N \log   \bb P^N_{\mu^N}
    \Big( (\pi^N,Q^N)\in C
    \Big) \le - \inf_{C} I.
  \end{equation}
  Moreover, if $m$ satisfies also
  condition {\rm (iv)} in Assumption \ref{ass:2}, then for
 each open  $O\subset \ms S$
  \begin{equation}\label{lbeqc}
    \varliminf_{N\to +\infty} \frac 1N \log   \bb P^N_{\mu^N}
    \Big( (\pi^N,Q^N)\in O
    \Big) \ge - \inf_{O\cap \hat{\ms S}} I.
  \end{equation}
\end{theorem}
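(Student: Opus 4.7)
The plan is to apply the abstract Proposition \ref{t:ldtp} to the disintegration
$$
\bb P^N_{\mu^N} = \int p_N\big(\de(e,u)\big)\, \bb P^N_{\nu^N_{e,u}},
$$
where $p_N$ is the law of $\frac 1N \sum_i \bs\zeta(v_i)$ under $\mu^N = m^{\otimes N}$. With $\ms X$ the Skorokhod space for $(\pi^N,Q^N)$ and $\ms Y = Z$, the rate function produced by \eqref{I=} is exactly \eqref{eq:Ican}. So the task reduces to checking the three hypotheses of Proposition \ref{t:ldtp}.

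First I would verify (i): the exponential tightness and large deviation principle for $\{p_N\}$ with good rate $A$ given by \eqref{cra} is the multidimensional Cram\'er theorem. As already discussed in Section \ref{sez:1}, the steepness of $\bs\gamma \mapsto \log m(\ee^{\bs\gamma\cdot\bs\zeta})$ follows from item (ii) of Assumption \ref{ass:2}, so the theorem applies. For (iii), the upper bound half of Theorem \ref{upperbound} gives the large deviation upper bound for $\{\bb P^N_{\nu^N_{e_N,u_N}}\circ(\pi^N,Q^N)^{-1}\}$ along any sequence $(e_N,u_N)\to(e,u)$, with good rate $I_{e,u}$; under condition (iv) of Assumption \ref{ass:2}, the lower bound over $\hat{\ms S}$ is also available.

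The main technical step is (ii): uniform exponential tightness of $\{\bb P^N_{\nu^N_{e,u}}\circ(\pi^N,Q^N)^{-1}\}$ for $(e,u)$ ranging in a compact $K \subset Z$. I would revisit Lemmas \ref{lemma2'} and \ref{lemma3}: in both estimates the constants depend on the prescribed energy only through $\sup_{(e,u)\in K} e$, so the bounds extend uniformly over $K$. Analogously, the initial tightness furnished by \eqref{picdelta} is uniform on $K$ because every $\nu^N_{e,u}$ with $(e,u)\in K$ is supported on the corresponding energy shell, which sits in a common compact $C^\delta_{e',u'}$ with $e' = \sup_K e + \delta$ and $|u'-u|\le \delta$.

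Once these three hypotheses are verified, Proposition \ref{t:ldtp} immediately yields both the exponential tightness of $\{\bb P^N_{\mu^N}\circ(\pi^N,Q^N)^{-1}\}$ and the upper bound \eqref{upeqc}. For the lower bound, the proof of Step 3 of Proposition \ref{t:ldtp} uses the lower bound for $\{\nu^N_{y'_n}\}$; in our setting this is only available on subsets of $\hat{\ms S}$. Tracing that argument gives
$$
\varliminf_N \tfrac 1N \log \bb P^N_{\mu^N}\big((\pi^N,Q^N)\in O\big)
\ge - \inf_{(e,u)\in Z}\Big\{ A(e,u) + \inf_{(\pi,Q)\in O\cap\hat{\ms S}} I_{e,u}(\pi,Q)\Big\},
$$
and exchanging the two infima and recalling \eqref{eq:Ican} produces exactly $-\inf_{O\cap \hat{\ms S}} I$, as required in \eqref{lbeqc}. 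The main obstacle I anticipate is the uniformity in (ii); the rest amounts to bookkeeping through the abstract scheme.
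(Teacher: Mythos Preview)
Your plan is essentially the paper's, and the lower-bound part is handled correctly. There is, however, a genuine gap in your treatment of the upper bound.

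You write that once hypotheses (i)--(iii) of Proposition~\ref{t:ldtp} are checked, the upper bound \eqref{upeqc} follows ``immediately''. But hypothesis~(iii) of Proposition~\ref{t:ldtp} asks for a \emph{full} large deviation principle for $\{\nu^n_{y_n}\}$, whereas Theorem~\ref{upperbound} only supplies the upper bound in general (the lower bound being restricted to $\hat{\ms S}$). Look at how the abstract upper bound (Step~4 of Proposition~\ref{t:ldtp}) is proved: it uses the joint lower semicontinuity of $(x,y)\mapsto F_y(x)$, established in Step~2. And Step~2 uses \emph{both} bounds for $\{\nu^n_{y_k}\}$: the lower bound to get $\varliminf_k\varliminf_n \tfrac1n\log\nu^n_{y_k}(\mc N)\ge -\lim_k\inf_{\mc N}F_{y_k}$, and then the upper bound along a diagonal subsequence to bound this from above by $-F_y(x)+\delta$. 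Without the full microcanonical lower bound, Step~2 is unavailable, so Step~4 does not go through as a black box.

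The paper handles this exactly where you say ``the rest amounts to bookkeeping'': it replaces Step~2 by a direct argument that $(e,u,\pi,Q)\mapsto I_{e,u}(\pi,Q)$ is jointly lower semicontinuous. For $J_{e,u}$ this is done via the variational formula \eqref{true-var} together with the observation that the constraint set $\{(e,u,\pi,Q):\pi\in\ms C_{e,u}\}$ is closed (which requires a short argument since $\pi\mapsto\pi(\zeta_0)$ is only lower semicontinuous, while the momentum constraint is an equality). For $H_{e,u}$ one can invoke Step~2 legitimately, because Theorem~\ref{sld} does give a full LDP at the static level. Once this joint lower semicontinuity is in hand, Step~4 applies. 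So the obstacle you flag in (ii) is routine; the real work you are missing is this replacement for Step~2.
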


\begin{proof}
  By the definition of the microcanonical ensemble $\nu^N_{e,u}$
  given below equation \eqref{sig}, we have 
  $$\bb P^N_{\mu^N} = \int p_N(\de (e,u)) \bb P^N_{\nu^N_{e,u}},$$
  where $p_N$ is the law of $\frac 1N \sum_i \bs \zeta(v_i)$ with
  $\bs v$ sampled according to $\mu^N$.
  By Cram\'er's theorem, as discussed before remark \ref{remark1},
  $p_N$ satisfies a large deviation principle with rate function $A$.
  The proof is thus essentially achieved by combining
  Theorem~\ref{upperbound} with the abstract Proposition~\ref{t:ldtp}.
  However, since in the large deviation result with microcanonical
  initial data the upper and lower bound rate function may differ, we need
  a replacement for Step 2 in the  the proof of Proposition~\ref{t:ldtp}.

  \noindent
  {\it Upper bound}.
  The argument in Step 4 in the proof of Proposition~\ref{t:ldtp}
  applies, provided we show that the map
  $Z\times \ms S\ni (e,u,\pi,Q) \mapsto
  I_{e,u}(\pi,Q)$
  is lower semicontinuous.

  Recall the set $\ms C_{e,u}$ defined in \eqref{ceut},
  and let $\mf C$  be the subset of $Z\times \ms S$ defined by
  $$\mf C \coloneqq \{(e,u,\pi,Q):\, \pi \in \ms C_{e,u}\}.$$
  By the lower semicontinuity of the map $\pi \mapsto
  \pi(\zeta_0)$, and the continuity of the map
  $\pi \mapsto \pi(\zeta)$ when the energy of $\pi$ is uniformly bounded,
  we deduce that $\mf C$ is closed.
  By the variational representation \eqref{true-var}, this
  implies the joint lower semicontinuity of $J_{e,u}(\pi,Q)$.

  By Theorem~\ref{sld} and Step 2 in the proof of Proposition~\ref{t:ldtp},
  we also deduce the joint lower semicontinuity of $H_{e,u}(\pi_0)$,
  that conclude the proof.

  \noindent
  {\it Lower bound}.
  Fix $(\pi,Q)\in \hat {\ms S}$.
  By Step 3  in the proof of proposition \ref{t:ldtp},
  we deduce that for any open neighborhood $\mc N$ of $(\pi,Q)$
  we have
  $$\varliminf_{N\to +\infty} \frac 1N \log   \bb P^N_{\mu^N}
  \Big( (\pi^N,Q^N)\in \mc N
  \Big) \ge -  I(\pi,Q),$$
  that implies the statement.
\end{proof}

\section{Asymptotic probability of
  Lu and Wennberg solutions}
\label{sez:5}

We start by observing that
the balance equation \eqref{bal}
for a pair $(\pi,Q)$ with 
$Q=Q^\pi$ is equivalent to the statement that
$\pi$ is a weak solution \eqref{eq:hb}.
Recalling that the functional $J$,
as defined in \eqref{Jfree}, vanishes
if and only if $Q=Q^\pi$, then
we deduce that the zero level set of $J$ are
the weak solutions to the homogeneous Boltzmann equation
\eqref{eq:hb}.
As we next state, the zero level set of both the
functional $I_{e,u}$ and $I$ respectively defined
in \eqref{I}, \eqref{eq:Ican} is a singleton. As a consequence
the large deviation upper bound stated in theorems
\eqref{upperbound} and \eqref{canonico} implies the
convergence of the empirical measure
to the unique energy solution to the
homogeneous Boltzmann equation \eqref{eq:hb}
with an exponential bound on the error.
\begin{theorem}
  \label{th:sol+funz}
  $~$
  \begin{itemize}
  \item[(i)]
    $I_{e,u} (\pi,Q)= 0$
    if and only if 
    $\pi = f\de v$, $Q=Q^\pi$ and  
    $f$ is the unique energy conserving
    solution to the Cauchy problem associated to \eqref{eq:hb}
    with initial datum $\frac  {\de m_{\mathrlap{e,u}}}{\de v}$\ \ \ 
    as defined in \eqref{meu}.
  \item[(ii)]
    $I(\pi,Q)= 0$
    if and only if 
    $\pi = f\de v$,  $Q=Q^\pi$ and 
    $f$ is the unique energy conserving
    solution to the Cauchy problem associated to \eqref{eq:hb}
    with initial datum $\frac {\de m}{\de v}$.
  \end{itemize}
\end{theorem}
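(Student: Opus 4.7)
The plan is to exploit the additive structure of both rate functions. For part (i), since $I_{e,u}(\pi,Q) = H_{e,u}(\pi_0) + J_{e,u}(\pi,Q)$ is a sum of non-negative functionals, vanishing of $I_{e,u}$ reduces to separately vanishing of $H_{e,u}(\pi_0)$ and $J_{e,u}(\pi,Q)$.

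For $H_{e,u}$, I would read \eqref{Heu} on $C_{e,u}$ as a sum of two non-negative terms: the relative entropy $\Ent(\pi_0\vert m_{e,u})$ and the penalty $\big[\gamma_0^*-\gamma_0(e,u)\big]\big[e-\pi_0(\zeta_0)\big]$, the coefficient being strictly positive since $\gamma_0(e,u)<\gamma_0^*$ by construction. The entropy vanishes iff $\pi_0=m_{e,u}$, and in that case $\pi_0(\zeta_0)=m_{e,u}(\zeta_0)=e$ automatically, so the penalty vanishes too. Hence $H_{e,u}(\pi_0)=0\Leftrightarrow\pi_0=m_{e,u}$. For $J_{e,u}$, inspection of \eqref{5} uses strict convexity of $x\mapsto x\log x-(x-1)$ at $x=1$: finiteness forces $(\pi,Q)\in\ms S^{\mathrm{ac}}_{e,u}$, and vanishing forces $\de Q/\de Q^\pi\equiv 1$, i.e.\ $Q=Q^\pi$. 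Plugging $Q=Q^\pi$ into the balance equation \eqref{bal} exactly reproduces the weak formulation of \eqref{eq:hb}, so $\pi_t=f_t\,\de v$ with $f$ a weak solution of the homogeneous Boltzmann equation starting from $\de m_{e,u}/\de v$.

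The remaining step — the main obstacle — is to identify $f$ as the unique energy-conserving solution. The constraint $\pi\in\ms C_{e,u}$ bounds $\pi_t(\zeta_0)\le e$, already ruling out Lu--Wennberg type paths with energy exceeding $e$. To exclude weak solutions with strictly decreasing energy, I would invoke propagation of moments: since $m_{e,u}$ has all exponential moments in $\zeta_0$ by Assumption~\ref{ass:2}(ii), Povzner-type estimates for the hard sphere kernel yield $\pi_t(\zeta_0)=\pi_0(\zeta_0)=e$ for all $t\in[0,T]$. Once energy conservation is established, the Mischler--Wennberg uniqueness result \cite{MW} identifies $\pi$ with the energy-conserving Cauchy solution from $\de m_{e,u}/\de v$. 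The converse direction is immediate: for this $f$, $H_{e,u}(f_0\,\de v)=0$ by construction of $m_{e,u}$, and $J_{e,u}(f\,\de v,Q^{f\,\de v})=0$ trivially.

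For part (ii), the representation \eqref{eq:Ican} expresses $I$ as an infimum of non-negative quantities $A(e,u)+I_{e,u}(\pi,Q)$, so $I(\pi,Q)=0$ iff there exists $(e,u)$ at which both summands vanish. By Cram\'er's theorem applied to \eqref{cra}, the strictly convex and lower semicontinuous rate function $A$ has a unique zero at the expected value $(e,u)=(m(\zeta_0),m(\zeta))$; at this point the tilt $\bs\gamma(e,u)=0$, hence $m_{e,u}=m$. Substituting into part (i) with this specific $(e,u)$ yields the claimed characterization, completing the proof.
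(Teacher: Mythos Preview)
Your argument tracks the paper's closely: split $I_{e,u}=H_{e,u}+J_{e,u}$, identify the zero of each, deduce that $\pi$ is a weak solution of \eqref{eq:hb} from $m_{e,u}$ with $\pi_t(\zeta_0)\le e$, and then pin it down as the unique energy-conserving solution. The one substantive difference is how you rule out energy \emph{decrease}. The paper simply invokes the known fact (cited as \cite{Lu,MW}) that for any weak solution of the homogeneous Boltzmann equation the energy is non-decreasing; combined with the constraint $\pi_t(\zeta_0)\le e=\pi_0(\zeta_0)$ from $\ms C_{e,u}$ this forces $\pi_t(\zeta_0)\equiv e$ immediately. Your route through Povzner-type moment propagation is more roundabout: Povzner inequalities give higher-moment bounds rather than energy conservation per se, and you would still need to argue that these bounds justify testing the balance equation \eqref{bal} against the unbounded function $\zeta_0$ (equivalently, that $(\pi,Q^\pi)\in\hat{\ms S}$), after which conservation follows from the collision invariants. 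This can be made to work, but the direct citation is cleaner and avoids having to verify Povzner estimates at the level of weak solutions.

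For (ii) there is a small gap: $I(\pi,Q)=\inf_{(e,u)}\big[A(e,u)+I_{e,u}(\pi,Q)\big]=0$ does not \emph{a priori} give a point $(e,u)$ where both summands vanish---the infimum must be attained. This follows from the goodness of $A$ (compact sublevel sets) together with the joint lower semicontinuity of $(e,u)\mapsto I_{e,u}(\pi,Q)$ established in the proof of Theorem~\ref{canonico}; you should say so. (The paper itself writes ``We prove only the first statement'', leaving (ii) implicit.)
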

\begin{proof}
  We prove only the first statement.
  By definition of $I_{e,u}$ if $f$ is an energy conserving
  solution to the Cauchy problem associated to \eqref{eq:hb} with
  initial datum $\frac {\de m_{\mathrlap{e,u}}}{\de v}$,\ \  then $\pi = f\de v$
  and $Q=Q^\pi$ belong to the zero level set of $I_{e,u}$.
  To prove the converse, we observe that, 
  by the very definition \eqref{5}, $J_{e,u}(\pi,Q) = 0$
  implies that $Q=Q^\pi$ and $\pi_t(\zeta_0)\le e$
  for any $t\in [0,T]$. Since $H_{e,u}(\pi_0) = 0$
  implies that $\pi_0 = m_{e,u}$ we deduce
  $\pi_t = f_t \de v$ where $f$ is 
  a weak solution to the Cauchy problem associated to \eqref{eq:hb}
  with initial datum $\frac {\de m_{\mathrlap{e,u}}}{\de v}$\ \ \ and
  non increasing energy.
  Since for any weak solution to \eqref{eq:hb}
  the energy can not decrease in time
  (see \cite{Lu,MW}), $f_t$ is the unique
  energy conserving solution.
\end{proof}
Fix a non-decreasing piecewise constant, left-continuous profile
$\mathcal E:[0,T]\to \bb R_+$, with finite, non zero, number of jumps.
\begin{definition}
  A Lu and Wennberg solution to the Cauchy
  problem associated to the homogeneous Boltzmann equation with
  initial datum $f_0$ and 
  \emph{energy profile} $\mc E$ is a measurable function
  $f:[0,T]\times \bb R^d \to [0,+\infty)$ such that
    \begin{itemize}
    \item[(i)] the map $t\mapsto f_t(v)\de v
      \eqqcolon \pi_t$ in $C([0,T]; \ms P(\bb R^d))$;
    \item[(ii)] $f$ is a weak solution to the homogeneous Boltzmann equation;
    \item[(iii)] $\pi_t(\zeta_0)=\mc E(t)$, $t\in [0,T]$.
    \end{itemize}  
\end{definition}
Observe that 
for any $e\geq \mc E(T)$, for $\pi = f\de v$, with $f$ a Lu and Wennberg
solution,
$J_{e,u}(\pi, Q^\pi)=0$. Hence
\begin{equation*}
  I_{e,u}(\pi, Q^\pi)=\Ent(\pi_0|m_{e,u}) + \big[\gamma_0^*-\gamma_0\big]
  \big[e-\mathcal E(0)\big],
\end{equation*}
namely the Lu and Wennberg solutions contribute to the rate function
only at time zero. We remark that these pairs $(\pi, Q^\pi)$ do not
belong to the set $\hat {\ms S}$ for which the upper and lower bound
in Theorem~\ref{upperbound} is proven to match. In the next theorem we
will show they actually match also for a suitable class of Lu and Wennberg
solutions.
\begin{theorem}
  \label{ld-lws}
  Fix $(e,u)\in Z$
  and a sequence $(e_N, u_N)\to (e,u)$.
  For each energy profile $\mc E$ with
  $\mc E(T) < e$ and
  each $f_0$ with energy $\mc E(0)$,
  there exists a Lu
  and Wennberg solution $f$ with energy profile $\mc E$ such that for
  every open neighborhood $A$ of $(\pi,Q^\pi)$, $\pi = f \de v$, 
  \begin{equation}
    \label{eq:vli-lw}
    \varliminf_{N\to+\infty}\frac 1 N \log\bb P^N_{\nu^N_{e_N, u_N}}
    \Big(
    (\pi^N,Q^N)\in A \Big) \geq - I_{e,u}(\pi, Q^\pi).
  \end{equation}  
\end{theorem}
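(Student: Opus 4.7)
The plan is to exhibit the Lu and Wennberg solution $f$ as a weak limit of classical (energy--conserving) Boltzmann solutions $f^n$ starting from initial data $f_0^n$ that satisfy the microcanonical constraint $\pi_0^n(\bs\zeta) = (e,u)$ and converge weakly to $f_0$ while carrying ``extra'' energy $e-\mc E(0)$ escaping to infinity. Since $f^n$ conserves energy, the pair $(\pi^n,Q^n)\coloneqq (f^n\de v,Q^{f^n\de v})$ belongs to $\hat{\ms S}$ (provided $f_0^n$ has a finite third moment, which is propagated by the hard--sphere Boltzmann flow via the Povzner estimate), and $Q^n=Q^{\pi^n}$ forces $J_{e,u}(\pi^n,Q^n)=0$. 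Hence $I_{e,u}(\pi^n,Q^n)=H_{e,u}(\pi_0^n)=\Ent(\pi_0^n\vert m_{e,u})$, the penalty term vanishing because $\pi_0^n(\zeta_0)=e$. If we prove that $(\pi^n,Q^n)\to(\pi,Q^\pi)$ and $\Ent(\pi_0^n\vert m_{e,u})\to H_{e,u}(\pi_0)$, then for any open neighborhood $A$ of $(\pi,Q^\pi)$ the pair $(\pi^n,Q^n)$ lies in $A\cap\hat{\ms S}$ for $n$ large, and the lower bound \eqref{lbeq} of Theorem~\ref{upperbound} gives
\[
\varliminf_{N}\frac 1 N\log \bb P^N_{\nu^N_{e_N,u_N}}\big((\pi^N,Q^N)\in A\big)\ge -\inf_{A\cap\hat{\ms S}}I_{e,u}\ge -I_{e,u}(\pi^n,Q^n)\xrightarrow{n\to\infty} -I_{e,u}(\pi,Q^\pi).
\]

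\textbf{Single--jump case and the entropy computation.} Assume first $\mc E$ has a single jump at $t=0$ with $\mc E(t)=\mc E_1\le e$ for $t\in(0,T]$. Set $f_0^n=(1-\alpha_n)f_0+\alpha_n g_n$, where $g_n$ is a radially symmetric density (centered at $u$) concentrated in a thin shell of radius $\beta_n\to+\infty$, with $\int g_n\zeta=u$ and $\alpha_n=2(\mc E_1-\mc E(0))/\beta_n^2$. Then $\int f_0^n\bs\zeta=(\mc E_1,u)$ and $f_0^n\to f_0$ weakly. To obtain energy $e$ rather than $\mc E_1$, add a second permanent reservoir carrying energy $e-\mc E_1$ at scale $\tilde\beta_n\to\infty$ (its mass and collision rate with the bulk tend to zero, so it does not affect the limiting dynamics). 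Writing $\log m_{e,u}=\bs\gamma(e,u)\cdot\bs\zeta-\log m(\ee^{\bs\gamma\cdot\bs\zeta})+\log(\de m/\de v)$ and using Assumption~\ref{ass:2}(ii) (which forces $\log(\de m/\de v)\sim -\gamma_0^*\zeta_0$ on the support of $g_n$), a direct computation gives
\[
\Ent(f_0^n\vert m_{e,u})=\Ent(f_0\vert m_{e,u}) +\big[\gamma_0^*-\gamma_0(e,u)\big]\big[e-\mc E(0)\big]+o(1)=H_{e,u}(\pi_0)+o(1),
\]
since the ``mixing'' entropy contribution $\alpha_n\int g_n\log g_n=O(\alpha_n\log\beta_n)$ vanishes for a suitable shell profile.

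\textbf{Passage to the limit and multi--jump case.} For the single--jump case, the unique energy--conserving Boltzmann flow $f^n$ converges in $C([0,T];\ms P(\bb R^d))$ to a weak solution $f$ of \eqref{eq:hb} with initial datum $f_0$ by the stability theory for the hard--sphere equation (Mischler--Wennberg): this limit $f$ is a Lu and Wennberg solution with the prescribed profile. The convergence $Q^n\to Q^\pi$ in the weak topology of $\ms M$ holds because cross-- and shell--shell terms contribute $O(\alpha_n\beta_n)=O(1/\beta_n)\to 0$ against bounded test functions. For a profile with jumps at $t_1<\ldots<t_k$ one inserts $k$ reservoirs $g_{i,n}$ at scales $\beta_{i,n}$ tuned so that each one is absorbed into the bulk in a window shrinking around $t_i$; the balance equation and the telescoping of the energy contributions then force the limit $f$ to have energy jumps exactly of size $\mc E(t_i^+)-\mc E(t_i)$ at each $t_i$, and the entropy computation above extends additively to yield $H_{e,u}(\pi_0)$.

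\textbf{Main obstacle.} The delicate point is the entropy balance: one has to produce precisely the coefficient $\gamma_0^*-\gamma_0(e,u)$ multiplying $e-\mc E(0)$, which is dictated by the sharp exponential tail of $m$ through Assumption~\ref{ass:2}(ii). Equally delicate is the time--staggered construction of the reservoirs in the multi--jump case: one must ensure that each $g_{i,n}$ effectively interacts with the bulk only in the vicinity of $t_i$, exploiting the $|v-v_*|$ factor of the hard--sphere cross section to calibrate the absorption time as a function of $\beta_{i,n}$ and $\alpha_{i,n}$. If $f_0$ does not already have a finite third moment, a preliminary regularisation followed by a diagonal argument is needed to ensure $(\pi^n,Q^n)\in\hat{\ms S}$, which is required in order to apply \eqref{lbeq}.
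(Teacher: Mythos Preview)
Your strategy has a genuine gap that makes it fail for the very energy profiles the theorem requires. You propose to run the \emph{full} hard--sphere Boltzmann flow on the composite datum $f_0^n$ (bulk plus high--velocity reservoirs) and to keep one reservoir ``permanent'' so that the limiting profile satisfies $\mc E(T)<e$, and in the multi--jump case to ``calibrate the absorption time'' of each reservoir. But the Povzner moment creation estimate (Lemma~\ref{lemma:mom}(i) in the paper) states that for each $t>0$ and $p>2$ the $p$--moment of $\mc U_t(h)$ is bounded by a constant depending only on $p$, $t$ and the initial energy. Since all your $f_0^n$ have the same energy $e$, this gives $\sup_n\int f_t^n\,|v|^3\le C(t,e)$ for every $t>0$, hence $\zeta_0$ is uniformly integrable under $f_t^n$ and the weak limit $f_t$ has energy exactly $e$ for all $t>0$. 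In other words, with the full flow every reservoir is absorbed immediately; there is no ``permanent'' reservoir and no way to delay absorption to a prescribed time $t_i>0$. Your construction can therefore produce only the single profile that jumps from $\mc E(0)$ to $e$ at time $0$, not an arbitrary $\mc E$ with $\mc E(T)<e$ or with interior jumps. Note also that the $|v-v_*|$ factor in the hard--sphere cross section makes high--velocity particles collide \emph{more} often, which accelerates absorption rather than allowing calibration.

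The paper circumvents this by \emph{not} running the full flow on the composite datum. On each interval $(t_i,t_{i+1}]$ only the bulk $h_i^n$ is evolved by $\mc U$, while the remaining reservoirs $g_j^n$ ($j>i$) are kept frozen; at time $t_i$ one reservoir is released into the bulk. The flux $q^n$ is taken to be the bulk--bulk collision flux only, so $(\pi^n,Q^n)$ still satisfies the balance equation (frozen reservoirs need no flux) but $Q^n\neq Q^{\pi^n}$ and $J_{e,u}(\pi^n,Q^n)\neq 0$; one then shows $J_{e,u}(\pi^n,Q^n)\to 0$. For the static cost, the reservoirs are taken to be the exponential tilts $g_n=m_{n(e_1-e_0),u}$ of $m$ itself, for which $\frac1n\Ent(g_n|m_{e,u})\to(\gamma_0^*-\gamma_0(e,u))(e_1-e_0)$ follows directly (Lemma~\ref{lemmag}), without the heuristic tail assumption $\log(\de m/\de v)\sim-\gamma_0^*\zeta_0$ that you invoke.
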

Observe that, by the upper bound in Theorem~\ref{upperbound}
\begin{equation*}
  \varlimsup_{N\to+\infty}\frac 1 N \log\bb P^N_{\nu^N_{e_N, u_N}}
  \Big(
  (\pi^N,Q^N)\in \bar A \Big) \geq - \inf_{\bar A}I_{e,u},
\end{equation*}  
which, together 
with \eqref{eq:vli-lw},
identifies the asymptotic probability of Lu and Wennberg solutions.

As in \cite{LuW}, the Lu and Wennberg solutions will be constructed as a
limit of a suitable sequence. In particular we will consider a
sequence $f^n$ which conserve the energy and such that
$t\mapsto f^n_t(v)\de v\in \ms P(\bb R^d)$ is continuous.
We start with the  static result. 
\begin{lemma}\label{lemmag}
  Consider $\rho\in \ms P(\bb R^d)$ such that $H_{e,u}(\rho)$ is finite and  $e_0\coloneqq  \rho(\zeta_0) < e$. Given $e_1\in (e_0, e]$ and   $n\in \bb N$, let $g_n=m_{n(e_1 - e_0), u}$ be  the exponential tilt of $m$ with energy $n(e_1-e_0)$ and momentum $u$. Set $\rho_n= (1-\frac 1 n)\rho + \frac 1 n g_n$, so that $\rho_n(\zeta_0)=e_1-e_0$, then 
  \begin{equation}
\lim_{n\to\infty} H_{e,u}(\rho_n)= H_{e,u}(\rho).
  \end{equation}  
\end{lemma}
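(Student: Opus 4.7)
The plan is to split the equality into the two matching inequalities. The easy direction will come from weak convergence $\rho_n\to \rho$ together with lower semicontinuity of $H_{e,u}$; the matching upper bound will come from convexity of the relative entropy in the first argument, with the main work reducing to the asymptotic slope of the Cram\'er rate function $A$ at infinity.

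First I would verify that $\rho_n\to \rho$ in $\ms P(\bb R^d)$: for any $\phi\in C_{\mathrm b}(\bb R^d)$, $|\rho_n(\phi)-\rho(\phi)|\le 2\|\phi\|_\infty/n$. A direct computation gives $\rho_n(\zeta)=u$ and $\rho_n(\zeta_0)=(1-1/n)e_0+(e_1-e_0)\to e_1\le e$, so $\rho_n\in C_{e,u}$ eventually. Since $H_{e,u}$ is the lower semicontinuous rate function of Theorem~\ref{sld}, I conclude $\liminf_n H_{e,u}(\rho_n)\ge H_{e,u}(\rho)$. For the matching $\limsup$, I would apply convexity of $\pi\mapsto \Ent(\pi|m_{e,u})$ to obtain
\[
\Ent(\rho_n|m_{e,u})\le \Big(1-\tfrac1n\Big)\Ent(\rho|m_{e,u})+\tfrac1n \Ent(g_n|m_{e,u}),
\]
and then use the explicit form \eqref{meu} of both $g_n$ and $m_{e,u}$ to compute
\[
\Ent(g_n|m_{e,u})=A\big(n(e_1-e_0),u\big)-\bs\gamma(e,u)\cdot (n(e_1-e_0),u)+\log m\big(\ee^{\bs\gamma(e,u)\cdot \bs\zeta}\big).
\]

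The hard part is the asymptotic $\lim_{e'\to +\infty} A(e',u)/e'=\gamma_0^*$ with $u$ fixed, which I would obtain from convex duality combined with item (ii) of Assumption~\ref{ass:2}. For the lower bound, testing $\bs\gamma=(\gamma_0,0)$ in the variational formula \eqref{cra} gives $A(e',u)/e'\ge \gamma_0-\log m(\ee^{\gamma_0\zeta_0})/e'$ for every $\gamma_0<\gamma_0^*$, and letting $e'\to +\infty$ and then $\gamma_0\uparrow \gamma_0^*$ yields $\liminf A(e',u)/e'\ge \gamma_0^*$; the matching upper bound uses the steepness of $\log m(\ee^{\bs\gamma\cdot\bs\zeta})$ at the boundary $\gamma_0^*$ of its domain, so that along the maximizer $\bs\gamma(e',u)$ one has $\gamma_0(e',u)\uparrow \gamma_0^*$ and the remaining terms are $o(e')$. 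Plugging this back into the displayed identity yields $\tfrac1n \Ent(g_n|m_{e,u})\to [\gamma_0^*-\gamma_0(e,u)](e_1-e_0)$, while the energy penalization in \eqref{Heu} satisfies $[\gamma_0^*-\gamma_0(e,u)][e-\rho_n(\zeta_0)]\to [\gamma_0^*-\gamma_0(e,u)](e-e_1)$. Summing the two contributions gives
\[
\limsup_n H_{e,u}(\rho_n)\le \Ent(\rho|m_{e,u})+[\gamma_0^*-\gamma_0(e,u)](e-e_0)=H_{e,u}(\rho),
\]
which together with the liminf bound yields the claim.
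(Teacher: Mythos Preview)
Your proof is correct and follows essentially the same route as the paper: lower semicontinuity for the $\liminf$, convexity of the relative entropy for the $\limsup$, and an estimate showing $\tfrac1n\Ent(g_n|m_{e,u})\to[\gamma_0^*-\gamma_0(e,u)](e_1-e_0)$. The only difference is cosmetic: the paper bounds $\Ent(g_n|m_{e,u})$ directly via the tilt parameter $\lambda_0^n\uparrow\gamma_0^*$ (and a Jensen bound on the partition function), whereas you route through the Cram\'er function $A$ and its asymptotic slope $A(e',u)/e'\to\gamma_0^*$; these are two phrasings of the same computation. Your sentence ``the remaining terms are $o(e')$'' is the only spot that could use one more line---the clean way is to note that by convexity $\partial_{e'}A(e',u)=\gamma_0(e',u)<\gamma_0^*$, so $A(e',u)\le A(e_0,u)+\gamma_0^*(e'-e_0)$ for any fixed $e_0$, which immediately gives $\limsup_{e'\to\infty}A(e',u)/e'\le\gamma_0^*$.
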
  

\begin{proof}
  By the lower semicontinuity of $H_{e,u}$, it is enough to show that
  $\varlimsup H_{e,u}(\rho_n)\leq H_{e,u}(\rho)$. By the convexity of
  $H_{e,u}$ and Jensen inequality
  \begin{equation*}
H_{e,u}(\rho_n) \leq \big (1-\frac 1 n \big)H_{e,u}(\rho) + \frac 1 n H_{e,u}(g_n).
\end{equation*}
  Let $\bs{\lambda}^n $ such that
  $$g_n(\de v)= \frac { \ee^{\bs{\lambda}_n \cdot\bs\zeta } m(\de v) }{m \big(\ee^{\bs{\lambda}_n \cdot\bs\zeta} \big)}= \frac{\ee^{(\bs{\lambda}_n -\bs{\gamma}(e,u))\cdot\bs\zeta }}{m_{e,u} \big(\ee^{(\bs{\lambda}_n -\bs{\gamma}(e,u))\cdot\bs\zeta }\big)} m_{e,u}(\de v),$$
where we used \eqref{meu}. Observe that $\lambda^n_0 \uparrow \gamma_0^*$  as $n\to +\infty$. Since $g^n$ has energy $n(e_1-e_0)$ we get
\begin{equation*}
\varlimsup_{n\to +\infty} \frac 1 n\Ent (g^n | m_{e,u}) \leq \varlimsup_{n\to +\infty} (\lambda_0^n -\gamma_0(e,u)) (e_1-e_0)=  (\gamma_0^* -\gamma_0(e,u)) (e_1-e_0),
\end{equation*}
which concludes the proof.
\end{proof}  

For any probability density $h$ with finite energy
let $\mc U_t(h), t\geq 0$, be the unique energy conserving solution
to the Cauchy problem associated to the homogeneous Boltzmann
equation with initial datum $h$.
In the following statement we collect the result on moment estimate
in \cite{MW,W}.
\begin{lemma}\label{lemma:mom}
  Let $h$ be a probability density on $\bb R^d$ with finite energy and
  entropy. Then
  \begin{itemize}
  \item [(i)] For each $p>2$ and $t>0$ there exists a real  $C>0$ depending only on $p$, $t$ and the initial energy, such that
    $$ \int \de v\, \mc U_t(h)(v) |v|^p\leq C.$$
  \item [(ii)] For each $p>2$, if $\int \de v h(v) |v|^p <+\infty$, then
    $$
    \sup_{t\in [0,T]}\int \de v\, \mc U_t(h)(v) |v|^p < +\infty. 
    $$
  \end{itemize}
\end{lemma}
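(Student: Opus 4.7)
My plan is to reduce both assertions to a single differential inequality for the moment $M_p(t)\coloneqq \int \de v\, \mc U_t(h)(v)\,|v|^p$ obtained from the standard Povzner inequality for the hard-sphere cross-section. Concretely, using $B(v-v_*,\omega)=\tfrac12|(v-v_*)\cdot\omega|$ and the symmetry of the collision rules, one derives for any $p>2$ the weak identity
\begin{equation*}
\frac{\de}{\de t} M_p(t) = \frac12 \iint f_t(v) f_t(v_*) \int_{S_{d-1}} B(v-v_*,\omega)\bigl[|v'|^p+|v'_*|^p-|v|^p-|v_*|^p\bigr]\,\de\omega\,\de v\,\de v_*.
\end{equation*}
The Povzner lemma (see \cite{MW,W}) states that the inner $\omega$-integral is bounded above by $-\kappa_p (|v|^p+|v_*|^p)|v-v_*| + K_p\,(|v|\,|v_*|^{p-1}+|v|^{p-1}|v_*|)|v-v_*|$ for some constants $\kappa_p,K_p>0$; integrating against $f_tf_*$ and applying $|v-v_*|\le |v|+|v_*|$ on the gain term and $|v-v_*|\ge |v|-|v_*|$ on the loss term, one arrives at the closed differential inequality
\begin{equation*}
\frac{\de}{\de t} M_p(t) \le -c_1\,M_{p+1}(t) + c_2\,M_2(t)\,M_{p-1}(t),
\end{equation*}
where $c_1,c_2$ depend only on $p$. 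Since energy is conserved, $M_2(t)\equiv M_2(0)$ is a constant, so the cross term is controlled by $M_{p-1}$.

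For part (ii), assume $M_p(0)<+\infty$. By H\"older's inequality with $M_0\equiv 1$ and $M_2$ fixed one has $M_{p-1}\le M_p^{(p-3)/(p-2)}M_2^{1/(p-2)}$ and $M_p\le M_2^{1/(p-1)}M_{p+1}^{(p-2)/(p-1)}$, so the above inequality can be rewritten in the form
\begin{equation*}
\frac{\de}{\de t} M_p(t) \le -c_1'\,M_p(t)^{(p-1)/(p-2)} + c_2'\,M_p(t)^{(p-3)/(p-2)},
\end{equation*}
with $c_1',c_2'$ depending only on $p$ and $M_2(0)$. Since the negative term dominates at infinity, any solution starting from $M_p(0)<\infty$ stays uniformly bounded on $[0,T]$, which gives (ii).

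For part (i), the task is to show moment generation even when $M_p(0)=+\infty$. The idea is to first establish (i) for $p$ slightly larger than $2$ and then bootstrap. Starting from $M_2$ fixed and finite entropy, one argues that the inequality above already forces $M_p(t)\le C t^{-(p-2)}$ for small $t$: indeed, by dropping the gain term one has $\tfrac{\de}{\de t}M_p\le -c_1'M_p^{1+1/(p-2)}$ in the regime where $M_p$ is large, and integration of this Bernoulli-type inequality gives the $t^{-(p-2)}$ decay independently of the initial value. Once $M_p(t)$ is finite for some $t_0>0$, applying (ii) on $[t_0,T]$ extends the bound to all later times; a monotone-class argument in $p$ upgrades the bound to every $p>2$. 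The principal technical obstacle is the rigorous justification of the weak moment identity and of the Povzner bound for the hard-sphere kernel at high velocities; we follow the truncation-and-passage-to-the-limit procedure of \cite{MW,W}, in which one first proves the inequality for the cutoff kernel $B\wedge n$ (where $M_p$ is a priori finite) and then removes the cutoff using the conservation of energy and the uniform $L\log L$ bound provided by the finite entropy of $h$.
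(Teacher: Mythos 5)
The paper does not prove this lemma at all: it is stated as a collection of known moment estimates and is simply cited from \cite{MW,W} (``we collect the result on moment estimate in \cite{MW,W}''). Your sketch reconstructs precisely the classical Povzner-inequality argument on which those references are based, so your approach is the right one and consistent with what the paper relies on. Two small inaccuracies are worth flagging, though neither is fatal. First, the closed differential inequality $\tfrac{\de}{\de t}M_p\le -c_1 M_{p+1}+c_2 M_2 M_{p-1}$ is not quite what the elementary bounds give: expanding the gain term $(|v|\,|v_*|^{p-1}+|v|^{p-1}|v_*|)(|v|+|v_*|)$ also produces $M_1 M_p$, and the crude lower bound $|v-v_*|\ge \big||v|-|v_*|\big|$ on the loss term likewise only yields $\iint f f_*(|v|^p+|v_*|^p)|v-v_*|\ge 2(M_{p+1}-M_1M_p)$; so the correct inequality carries an additional term $+\,c\,M_1 M_p$ with $M_1\le \sqrt{M_2}$ controlled by the energy. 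This extra linear term does not change either conclusion (the superlinear negative term still dominates for large $M_p$, giving both propagation on $[0,T]$ and the Bernoulli-type generation bound $M_p(t)\le C(1+t^{-(p-2)})$), but as written your inequality is not what the stated bounds deliver. Second, the interpolation $M_{p-1}\le M_p^{(p-3)/(p-2)}M_2^{1/(p-2)}$ fails for $2<p<3$ (the exponent is negative); in that range one should instead bound $M_{p-1}\le \max(1,M_2)$ directly, after which the argument proceeds unchanged, and the ``monotone-class argument in $p$'' is then unnecessary since the generation argument works for each $p>2$ separately once the truncation/passage-to-the-limit step you mention is carried out.
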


Fix an energy profile $\mathcal E:[0,T]\to \bb R_+$
and denote by $0\leq t_1 < ..<t_k < T$ the discontinuity set of $\mathcal E$.
Given $f_0$ with finite entropy and energy $\mc E(0)$,
let $h_0^n$ be a sequence weakly convergent to $f_0$ satisfying the
following requirements.
The energy of $h_0^n$ is independent on $n$ and equal to $\mc E(0)$,
its entropy converges the entropy of $f_0$, and 
it has finite ($n$-dependent) $p-$moment for some $p\ge 3$.
For $n\geq 1$ and $i=1,..,k$, set
$e_{n,i}=n k[\mc E(t_i^+)-\mc E(t_i)]$
and define $g^n_i$ as the density of the tilted probability
$m_{e_{n,i}, u}$.
Define
\begin{equation}\label{aprLW}
  f^n_t =\begin{cases}
 \big( 1-\frac 1 n \big)\mc U_t (h_0^n) +\frac 1 {n k}\sum_{i=1}^k
 g^n_i  & t\in [0, t_1]\\
 \big( 1-\frac {k-1} {n k} \big)\mc U_{t-t_1}
 \Big( h_1^n\Big) +\frac 1 {n k}\sum_{i=2}^k  g^n_i  & t\in (t_1, t_2]\\
  ... &...\\
  \big( 1-\frac 1 {n k} \big)\mc U_{t-t_{k-1}} \Big( h_{k-1}^n\Big) +
  \frac 1 {n k}  g^n_k  & t\in (t_{k-1}, t_k]\\
  U_{t-t_k} (h^n_k) & t\in (t_k, T],\\
  \end{cases}
\end{equation}  
where $h^n_i$ are recursively defined so that $t\mapsto f^n_t(v)\de v$
is continuous, namely
$$h^n_i =  \tfrac 1 {1 -\frac{k-i}{ n k}}\Big[f^n_{t_i} -\frac 1 {n k}\sum_{j=i+1}^k g^n_j\Big].$$
Let also $q^n_t(v,v_*,\omega)$ be such that, for
$t\in (t_i, t_{i+1}]$,
  \begin{equation*}
q^n_t(v,v_*,\omega) =\Big( 1-\frac {k-i} {n k} \Big)\mc U_{t-t_i}\big( h_i^n\big)(v)\mc U_{t-t_i}\big( h_i^n\big)(v_*) B(v, v_*, \omega).
\end{equation*}
Here $i=0,..,k$, with $t_0=0$ and $t_{k+1}=T$.  Observe that, by
construction, the pair $(\pi^n, Q^n)$ with densities $(f^n, q^n)$
satisfies the balance equation \eqref{bal}.  Furthermore, by definition of
$h_0^n$ and item (ii) in Lemma \ref{lemma:mom}, for each $n$ the pair
$(\pi^n, Q^n)\in \hat{\ms S}$.
  
\begin{lemma}
  The sequence $\{(\pi^n, Q^n)\}$ is relatively compact in $\ms S$.
  Any cluster point $(\pi, Q)$ is such that $Q=Q^\pi$, 
  $\pi=f\de v$, where $f$ is
  a Lu and Wennberg solution with initial datum $f_0$ and
  energy profile $\mc E$.
  Moreover
  \begin{equation}\label{ILW}
    \lim_{n\to\infty}I_{e,u}(\pi^n, Q^n)=H_{e,u}(f_0\de v).
  \end{equation}
\end{lemma}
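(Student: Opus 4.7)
The plan proceeds in three steps. Throughout, the key mechanism is that the tilted Gaussians $g^n_i$ of diverging energy $nk[\mc E(t_i^+)-\mc E(t_i)]$ but vanishing weight $1/(nk)$ disperse to infinity, providing the ``evaporation'' of a fixed amount of energy while contributing zero mass in the weak limit. Bookkeeping on the $(f^n,q^n)$ defined above yields, for $t\in(t_i,t_{i+1}]$, the identity $\pi^n_t(\zeta_0)=\mc E(T)-\mc E(0)/n$, uniformly in $t$, and $\pi^n_t(\zeta)=u$.

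\emph{Relative compactness.} The uniform energy bound gives tightness of $\{\pi^n_t\}_{n,t}$ in $\ms P(\bb R^d)$, and the elementary bound $\int B(v-v_*,\omega)\,\de\omega\le C(1+|v|^2+|v_*|^2)$ then yields $\sup_n Q^n([0,T]\times\bb R^{4d})<\infty$. Equicontinuity of $\pi^n$ in the uniform topology of $C([0,T];\ms P(\bb R^d))$ follows from the balance equation \eqref{bal} tested against smooth functions; tightness of $\{Q^n\}$ in $\ms M$ additionally uses Lemma~\ref{lemma:mom}(i) to control the $p$-moments of $\mc U_{t-t_i}(h_i^n)$ uniformly for $t$ bounded away from the jump times. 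Extracting a cluster point $(\pi,Q)$, the stability of the balance equation under these convergences ensures $(\pi,Q)\in\ms S$.

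\emph{Identification of the cluster point.} For $t\in(t_i,t_{i+1}]$ the weights $(1-(k-i)/(nk))$ tend to $1$ and $\frac{1}{nk}g^n_j\rightharpoonup 0$, hence $\pi^n_t\rightharpoonup \lim_n \mc U_{t-t_i}(h_i^n)\,\de v$ and $q^n_t$ converges to the collision kernel of this limit, which yields $Q=Q^\pi$. The limiting balance equation then shows $\pi=f\,\de v$ for a weak solution $f$ of \eqref{eq:hb}. The energy profile is verified by combining Fatou's lemma (giving $\pi_t(\zeta_0)\le\mc E(t)$) with the energy conservation of each $\mc U_{t-t_i}(h_i^n)$ at level $\mc E(t_i^+)$, which provides the matching lower bound; the initial datum $\pi_0=f_0\,\de v$ follows directly from the assumed weak convergence $h_0^n\rightharpoonup f_0$.

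\emph{Convergence of the rate function.} For the static term, decompose $\pi^n_0=(1-1/n)h_0^n+\frac{1}{nk}\sum_i g^n_i$ and iterate the argument of Lemma~\ref{lemmag} on each Gaussian contribution; combined with the entropy convergence $\Ent(h_0^n\,\de v|m_{e,u})\to\Ent(f_0\,\de v|m_{e,u})$ built into the choice of $h_0^n$, this yields $H_{e,u}(\pi^n_0)\to H_{e,u}(f_0\,\de v)$. For the dynamical term, the explicit form $q^n_t\propto \mc U\,\mc U\, B$ and $\pi^n_t\sim\mc U+(\text{vanishing Gaussians})$ allows a direct computation via the variational formula of Lemma~\ref{p:vr} that gives $J_{e,u}(\pi^n,Q^n)\to 0$. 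The main obstacle is the passage to the limit in $q^n\to q^\pi$ with the unbounded hard-sphere kernel $B$: near the jump times $t_i$ the $p$-moments of $\mc U_{t-t_i}(h_i^n)$ need not be uniform in $n$, which is handled by combining both parts of Lemma~\ref{lemma:mom} with the moment assumption on $h_0^n$ in order to propagate tail control across the jumps and justify the uniform integrability needed both for the convergence of $q^n$ and for the vanishing of $J_{e,u}(\pi^n,Q^n)$.
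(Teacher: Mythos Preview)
Your overall strategy is sound, but you have inverted the paper's logical order, and this inversion creates the very obstacle you flag at the end. The paper proves \eqref{ILW} \emph{first}, in particular establishing $J_{e,u}(\pi^n,Q^n)\to 0$ by the elementary pointwise bound $f^n_t\ge (1-\tfrac{k-i}{nk})\,\mc U_{t-t_i}(h_i^n)$, which forces $\de Q^n/\de Q^{\pi^n}\le (1-\tfrac{k-i}{nk})^{-1}$ and makes the entropic term vanish, while $Q^{\pi^n}(1)-Q^n(1)\to 0$ follows from the uniform energy bound and $B\le C(1+|v|+|v_*|)$. From this, relative compactness is immediate by goodness of $I_{e,u}$, and for any cluster point the lower semicontinuity of $J_{e,u}$ yields $J_{e,u}(\pi,Q)=0$, hence $Q=Q^\pi$ and $\pi$ solves \eqref{eq:hb}, with no need to argue convergence of $q^n$ against the unbounded kernel $B$. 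Your route instead requires proving $Q^n\rightharpoonup Q^\pi$ directly, and your remarks about non-uniform $p$-moments near the jump times and about ``propagating tail control across the jumps'' show that this is where the work piles up; the paper simply sidesteps it.

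There is also a genuine gap in your identification step. You write that energy conservation of $\mc U_{t-t_i}(h_i^n)$ at level $\mc E(t_i^+)$ ``provides the matching lower bound'' to the Fatou upper bound $\pi_t(\zeta_0)\le \mc E(t)$. It does not: energy conservation only says $\int \mc U_{t-t_i}(h_i^n)\,\zeta_0=\mc E(t_i^+)$ for each $n$, and weak convergence can strictly lose second moment. What is needed (and what the paper does) is to invoke Lemma~\ref{lemma:mom}(i) at a fixed $t\in(t_i,t_{i+1}]$: the $p$-moment bound depends only on $t-t_i>0$ and the initial energy $\mc E(t_i^+)$, hence is uniform in $n$, and uniform integrability of $\zeta_0$ then gives $\int f_t\,\zeta_0=\lim_n\int \mc U_{t-t_i}(h_i^n)\,\zeta_0=\mc E(t)$. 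You cite this lemma elsewhere, but in step~2 the stated mechanism is wrong.
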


\begin{proof}
  We start by proving \eqref{ILW}. Observe that
  $\int \de v\, h^n_i \zeta_0 =\mc E(t_i^+)$, for $i,1..,k$. Then by
  Lemma \ref{lemmag} and Jensen inequality,
  \begin{equation*}\begin{split}
      \lim_{n\to\infty} H_{e,u}(\pi^n_0) = &
      \Ent(f_0\de v|m_{e,u})+ (\gamma_0^* -\gamma(e,u))\Big[e-\mc E(T) +
      \sum_{i=1}^k (\mc E(t_i^+)-\mc E(t_i))\Big]\\
        = & H_{e,u}(f_0 \de v).
    \end{split}\end{equation*}
    We now show that
    \begin{equation}\label{Jvan}
\lim_{n\to\infty }J_{e,u}(\pi^n, Q^n)=0.
\end{equation}
By definition, for 
$t\in (t_i, t_{i+1}]$, $i=0,..,n$, we have
\begin{equation*}
f^n_t \geq \big( 1- \frac {k-i} n \big)\mc U_{t-t_i}(h^n_i).
\end{equation*}  
Hence the the contribution to $J_{e,u}(\pi^n, Q^n)$ in the time window $(t_i, t_{i+1}]$ is bounded by
  \begin{equation*}\begin{split}
\int_{t_i}^{t_{i+1}}\de t\int \de v \de v_* \de \omega \, \big\{ q_t^n(v, v_*, \omega)\log \big( 1- \frac {k-i} n \big)^{-1} \\-q_t^n(v, v_*, \omega) + f_t^n(v)f_t^n(v_*) B(v,v_*,\omega)\big\}.
  \end{split}\end{equation*}  
Since the energy of $f_t^n$ is $\mc E(T)$, the mass of $q^n$ is
bounded uniformly in $n$, therefore the first term vanishes as
$n\to\infty$.  The same bound, together with the fact that the energy
of $\frac 1 n g^n_i $ is bounded uniformly in $n$ and
$B\leq C(1 + |v|+ |v_*|)$, implies that the second line also
vanishes. Hence \eqref{Jvan} follows.

Equation \eqref{ILW} and the goodness of $I_{e,u}$ imply that the
sequence $(\pi^n, Q^n)$ is relatively compact. Let $(\pi, Q)$ a
cluster point. By the lower semicontinuity of $J_{e,u}$ and
\eqref{Jvan} we deduce that $J_{e,u}(\pi, Q)=0$, hence
$Q=Q^\pi$, $\pi = f \de v$,
where $f$ is a solution to the Cauchy problem
associated to \eqref{eq:hb} and initial datum $f_0$.
It remains to show that $f$ has energy profile $\mc E$.
For any $i=0,..,k$, the energy of $h_i^n$ is uniformly bounded.  Fix
$i$ and $t\in (t_i, t_{i+1}]$. By item (i) in \ref{lemma:mom}, the
$p$-moment of $\mc U_{t-t_i}(h_i^n)$, $p>2$, is bounded uniformly in
$n$, therefore $\zeta_0$ is uniformly integrable with respect to
$\mc U_{t-t_i}(h_i^n)$, then
$$
\int \de v\, f_t \zeta_0 =\lim_{n\to+\infty} \int \de v\, \mc U_{t-t_i}(h_i^n)\,\zeta_0 =\mathcal E(T) -\sum_{j=i+1 }^k \big (\mc E(t_j^+)-\mc E(t_j)  \big) =\mc E(t).
$$
\end{proof}

\begin{theorem}
  \label{ld-lws-can}
  Let $m$ be a probability measure satisfying Assumption \ref{ass:2},
  and set $\mu^N \coloneqq m^{\otimes N}$.
  For each energy profile $\mathcal E$ with
  $\mc E(0) = m(\zeta_0)$
  there exists a Lu
  and Wennberg solution $f$ with $f_0 = m$ and energy profile $\mc E$ such
  that for  every open neighborhood $A$ of $(\pi,Q^\pi)$, $\pi = f \de v$, 
  \begin{equation}
    \label{eq:vli-lw-can}
    \varliminf_{N\to+\infty}\frac 1 N \log\bb P^N_{\mu^N}
    \Big(
    (\pi^N,Q^N)\in A \Big) \geq - I(\pi, Q^\pi) = \gamma_0^*
    (\mc E(T) - \mc E(0)).
  \end{equation}
\end{theorem}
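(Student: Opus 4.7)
The plan is to lift the microcanonical lower bound of Theorem~\ref{ld-lws} to the canonical setting using the disintegration
$\bb P^N_{\mu^N}=\int p_N(\de (e,u))\,\bb P^N_{\nu^N_{e,u}}$ and Cram\'er's theorem for $p_N$, in the same spirit as Step~3 of the proof of Proposition~\ref{t:ldtp}. Set $u^*\coloneqq m(\zeta)$ and, for $\epsilon>0$, $e_\epsilon\coloneqq \mc E(T)+\epsilon$, so that the strict inequality $\mc E(T)<e_\epsilon$ required by Theorem~\ref{ld-lws} is met. Since the initial density $f_0\coloneqq m$ has energy $\mc E(0)=m(\zeta_0)$ by hypothesis, Theorem~\ref{ld-lws} produces, for each $\epsilon>0$, a Lu and Wennberg solution $f$ with energy profile $\mc E$ and initial datum $m$; inspection of the approximating sequence $\{f^n\}$ built in the preceding lemma shows that this sequence depends only on $\mc E$ and $m$, so a single limit point $f$ serves for every $\epsilon>0$ and will be the Lu and Wennberg solution claimed in the statement.

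Given any open neighborhood $A$ of $(\pi,Q^\pi)$, $\pi=f\de v$, the disintegration yields, for each $\delta>0$,
\begin{equation*}
  \bb P^N_{\mu^N}\bigl((\pi^N,Q^N)\in A\bigr)
  \ge p_N\bigl(B_\delta(e_\epsilon,u^*)\bigr)
  \inf_{(e,u)\in B_\delta(e_\epsilon,u^*)}\bb P^N_{\nu^N_{e,u}}\bigl((\pi^N,Q^N)\in A\bigr).
\end{equation*}
A diagonal extraction selects $\delta_N\downarrow 0$ and $(e_N,u_N)\in B_{\delta_N}(e_\epsilon,u^*)$ with $(e_N,u_N)\to(e_\epsilon,u^*)$ so that, simultaneously, Cram\'er's theorem gives $\varliminf_N \tfrac1N\log p_N\bigl(B_{\delta_N}(e_\epsilon,u^*)\bigr)\ge -A(e_\epsilon,u^*)$ and Theorem~\ref{ld-lws} applied to $(e_N,u_N)$ yields $\varliminf_N \tfrac1N\log\bb P^N_{\nu^N_{e_N,u_N}}\bigl((\pi^N,Q^N)\in A\bigr)\ge -I_{e_\epsilon,u^*}(\pi,Q^\pi)$. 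Combining these two inequalities,
\begin{equation*}
  \varliminf_{N\to+\infty}\frac 1 N \log\bb P^N_{\mu^N}\bigl((\pi^N,Q^N)\in A\bigr)
  \ge -\bigl[A(e_\epsilon,u^*)+I_{e_\epsilon,u^*}(\pi,Q^\pi)\bigr].
\end{equation*}

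It remains to compute the right-hand side. Since $\pi_t(\zeta_0)=\mc E(t)\le \mc E(T)<e_\epsilon$ and $Q=Q^\pi$, we have $J_{e_\epsilon,u^*}(\pi,Q^\pi)=0$, hence $I_{e_\epsilon,u^*}(\pi,Q^\pi)=H_{e_\epsilon,u^*}(m)$. A direct calculation from \eqref{cra}, \eqref{meu}, and \eqref{Heu} gives
\begin{equation*}
  A(e,u^*)+H_{e,u^*}(m)
  =\bs\gamma(e,u^*)\cdot\bigl((e,u^*)-m(\bs\zeta)\bigr)
  +\bigl(\gamma_0^*-\gamma_0(e,u^*)\bigr)\bigl(e-\mc E(0)\bigr)
  =\gamma_0^*\bigl(e-\mc E(0)\bigr),
\end{equation*}
since $u^*=m(\zeta)$ eliminates the momentum component and the two $\gamma_0(e,u^*)$-contributions cancel. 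Letting $\epsilon\downarrow 0$ gives the bound $\gamma_0^*(\mc E(T)-\mc E(0))$, which by the same calculation equals $I(\pi,Q^\pi)$, concluding the proof. The main technical obstacle is the strict inequality $\mc E(T)<e$ required by Theorem~\ref{ld-lws}, which forces the $\epsilon$-approximation and makes it necessary to verify that a single Lu and Wennberg solution $f$ can be used for all $\epsilon>0$; this is precisely resolved by the $e$-independence of the approximating sequence in the preceding lemma, allowing us to send $\epsilon\downarrow 0$ while keeping $f$ fixed.
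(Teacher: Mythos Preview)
Your proof is correct, and the computation of $A(e,u^*)+H_{e,u^*}(m)=\gamma_0^*(e-\mc E(0))$ matches the paper's exactly. The route to the inequality differs slightly: the paper simply says the argument of Theorem~\ref{ld-lws} is repeated in the canonical setting, i.e.\ one takes the same approximating sequence $(\pi^n,Q^n)\in\hat{\ms S}$ and applies the canonical lower bound of Theorem~\ref{canonico} directly, then passes to the limit using $\lim_n I(\pi^n,Q^n)=I(\pi,Q^\pi)$. You instead lift the microcanonical statement through the disintegration $\bb P^N_{\mu^N}=\int p_N(\de(e,u))\,\bb P^N_{\nu^N_{e,u}}$ via Step~3 of Proposition~\ref{t:ldtp}, which forces you to introduce the auxiliary parameter $e_\epsilon=\mc E(T)+\epsilon$ to meet the strict inequality hypothesis of Theorem~\ref{ld-lws}, and then verify that the approximating sequence \eqref{aprLW} --- and hence the cluster point $f$ --- depends only on $(\mc E,u^*,f_0)$ and not on $e$, so a single Lu and Wennberg solution serves for all $\epsilon$. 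Your approach is more modular in that it reuses Theorem~\ref{ld-lws} as a black box; the paper's is slightly shorter because it avoids the $\epsilon$-limit altogether by working directly with the canonical lower bound on $\hat{\ms S}$.
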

\begin{proof}
  The proof of the inequality in \eqref{eq:vli-lw-can}
  follows the same arguments of the proof of Theorem~\ref{ld-lws}.
  We here discuss
  the equality.
  Since $\pi=f \de v $ is a weak solution to \eqref{eq:hb},
  $J_{e,u}(\pi,Q) = 0$ if $e\ge \mc E(T)$, otherwise
  is infinity.
  Then,
  by definition \eqref{eq:Ican} and Theorem~\ref{sld}
  we have that
  $$I(\pi,Q) = \inf_{e\ge \mc E(T)} ( A(e,u)+ \Ent(m|m_{e,u}) +
  (\gamma_0^*-\gamma_0(e,u)(e - \mc E(0))$$
  where $u = m(\zeta)$.
  The supremum in the  definition \eqref{cra} of $A_{e,u}$ is achieved
  in $\bs \gamma = \bs \gamma(e,u)$. By definition of
  the relative entropy
  $$\Ent(m|m_{e,u}) = -\bs \gamma(e,u) \cdot m(\bs \zeta) +
  \log m( \ee^{\bs \gamma(e,u) \cdot \bs \zeta}).$$
  Then, by direct computation,
  $I(\pi,Q) = \inf_{e\ge \mc E(T)} \gamma_0^*(e-\mc E(0)) =
  \gamma_0^*(\mc E(T)-\mc E(0))$.
\end{proof}
\appendix

\section{}
\label{appendixa}

  It is sufficient to 
  prove  that   $Q^\delta\big(\big[\log 1/B\big]^+\big)$
  converges to
  $Q\big(\big[\log 1/B\big]^+\big)$
  as $\delta \to 0$,
  since 
  the result for the negative part easily follows from the
  fact that $\left| \big[\log 1/B\big]^-\right|$ is sublinear in $|v-v_*|$,
  and $(\pi,Q)\in \hat {\ms S}$.
  We indicate with $g^1_\delta$ the Gaussian kernel in one dimension, 
  and note that
  $$(g_\delta \otimes g_\delta \otimes \opid) * \bigg[\log
  \frac 1{2B} \bigg]^+
  (v,v_*,\omega)
  = \int_{\bb R} g^1_\delta( w\cdot \omega -y )
  \bigg[\log \frac 1{|y|}\bigg]^+\de y,$$
  where $w = (v-v_*)/\sqrt{2}$.
  We now prove that there exist some constants $c_1,c_2>0$ such
  that
  $$\int_{\bb R} g^1_\delta( x-y )
  \bigg[\log \frac 1{|y|}\bigg]^+\de y \le c_1
  \bigg[\log \frac 1{|x|} \bigg]^+ + c_2,$$
  which implies that
  $$(g_\delta \otimes g_\delta \otimes \opid) *
  \bigg[\log \frac 1B\bigg]^+ (v,v_*,\omega)
  \le c_1  \bigg[\log \frac 1B\bigg]^+ (v,v_*,\omega) + c_2.$$
  Using this fact and that 
  $Q\big(\big[\log 1/B\big]^+\big) < + \infty$, 
  we achieves the convergence result 
  by using Fubini-Tonelli theorem and dominate convergence.

  We denote by 
  $z$ a standard Gaussian stochastic variable
  and note that  
  $$
  \int_{\bb R} g^1_\delta( x-y )
  \bigg[\log \frac 1{|y|}\bigg]^+ \de y  =
  \bb E \bigg(  \bigg[\log \frac 1{|x-\delta z|}\bigg]^+\bigg)
  \le \log \frac 1{\delta} +   \bb E \bigg(
  \bigg[\log \frac 1{|x/\delta-z|}\bigg]^+\bigg).
  $$
  Since $\big[\log 1/|y|\big]^+$ is summable,
  by the Young's inequality the second term is uniformly bounded, 
  so that,
  if $|x|\le \sqrt{\delta}$ we have
  $$\bb E \bigg(  \bigg[\log \frac 1{|x-\delta z|}\bigg]^+\bigg)
  \le  2 \log \frac 1{|x|} + c.$$
  To handle the case $|x|\ge \sqrt{\delta}$, we use
  the Jensen inequality:
  $$\bb E \bigg(  \bigg[\log \frac 1{|x-\delta z|}\bigg]^+\bigg)
  = 
  2\log \ee^ {\bb E \big( \big[ \log 1/\sqrt{|x-\delta z|}\big]^+\big)}
  \le
  2 \log \bb E \left( \frac 1{\sqrt {|x-\delta z| \wedge 1}}\right)
  $$
  We estimate 
  $$\bb E \left( \frac 1{\sqrt {|x-\delta z| \wedge 1}}\right) =
  \int_{\bb R} g^1_\delta(y)
  \frac 1{\sqrt {|x-y| \wedge 1}}\de y
  $$
  by noticing that 
  in the region  $|y|<|x|/2$ or $|y|>2|x|$
  we have $1/\sqrt {|x-y| \wedge 1}\le  \sqrt {2} / \sqrt{|x|\wedge 1}$.
  Therefore
  $$\bb E \left( \frac 1{\sqrt {|x-\delta z| \wedge 1}}\right)
  \le c \frac 1{\sqrt{|x|\wedge 1}} + g^1_\delta(|x|/2) \int_{|x|/2}^{2|x|}
  \frac 1{\sqrt {|x-y| \wedge 1}}\de y.$$
  We conclude the proof observing that
  the last term is estimate by $c\ee^{-1/8\delta} (1+1/\delta)$, which is
  uniformly bounded in $\delta$.


\begin{thebibliography}{99}

  




\bibitem{BBBC} Basile G., Benedetto D., Bertini L., Caglioti C.;
  {\it Large deviations for a binary collision model: energy evaporation},
  Mathematics in Engineering, 5(1): 1–-12 
  DOI:10.3934/mine.2023001 (2023)

  

\bibitem{BBBO} Basile G., Benedetto D., Bertini L., Orrieri C.;
  {\it Large Deviations for Kac-Like Walks},
  J. Stat. Phys. 184, 10 https://doi.org/10.1007/s10955-021-02794-2
  (2021)
  
  

  
  
\bibitem{BGSS2}
  T. Bodineau, I. Gallagher, L. Saint–Raymond and S. Simonella;
  {\it Statistical dynamics of a hard sphere gas: fluctuating Boltzmann
    equation and large deviations}, preprint, arXiv:2008.10403 (2020).
  
  
  
  
  
  

\bibitem{CCL-RLV}
  Carlen E. A.,  Carvalho M. C.,  Le-Roux J.,  Loss M., Villani C.;
  \emph{Entropy and chaos in the Kac model} Kinet. Relat. Models 3 
  no. 1, 85--122 (2010).

\bibitem{DZ} Dembo A., Zeitouni O.;
  \emph{Large Deviations Techniques and Applications}
  volume 38 of Applications of Mathematics,
  Springer-Verlag, New York,
  ISBN 0-387-98406-2,
  second edition (1998). 
    


\bibitem{Er2} Erbar M.;
  {\it A gradient flow approach to the Boltzmann equation},
  arXiv:1603.00540v2 (2017).   



\bibitem{He} Heydecker D.;
  {\it
    Large Deviations of Kac's Conservative Particle System and Energy Non-Conserving Solutions to the Boltzmann Equation: A Counterexample to the Predicted Rate Function}  arXiv:2103.14550 (2021).
  
  
    
\bibitem{KR}  Kim S.S., K. Ramanan K.; 
  {\it A conditional limit theorem for high-dimensional $\ell^p$ spheres},
  J. Appl. Probab. 55 no. 4
  1060--1077  (2018).
  
\bibitem{KL} Kipnis C., Landim C.;  {\it Scaling Limits of Interacting Particle Systems}, volume 320 of Grundlehren der
  Mathematischen Wissenschaften [Fundamental Principles of
  Mathematical Sciences], Springer-Verlag, Berlin (1999). 
    
\bibitem{Le}
  L\'eonard, C.; {\it On large deviations for
    particle systems associated with spatially
    homogeneous Boltzmann type equations} Probab. Th. Rel. Fields 101,
  1--44 
  https://doi.org/10.1007/BF01192194 (1995).

\bibitem{Lu} Lu X.;
  {\it Conservation of energy, entropy
    identity, and local stability for the spatially homogeneous
    Boltzmann equation} J. Statist. Phys. 96(3), 765--796
  (1999).
  
\bibitem{LuW} Lu X., Wennberg B.,
  {\it Solutions with increasing
    energy for the spatially homogeneous Boltzmann equation}
  Nonlinear Analysis: Real World Applications
  (3) 2 pp. 243-–258 https://doi.org/10.1016/S1468-1218(01)00026-8 (2002)


  
 %
%
  


  


\bibitem{MW} Mischler S., Wennberg B.;
  {\it On the spatially homogeneous Boltzmann equation}
  Ann. de l'I.H.P. Analyse non linéaire, Tome 16 (1999) no. 4, pp. 467--501. 
  
\bibitem{W} Wennberg B.;
  \emph{Entropy dissipation and moment production for the Boltzmann equation}
  J. Statist. Phys. 86(5/6) 105--1066 (1997).

\bibitem{Pe} Petrov V.V.;
  \emph{Sums of independent random variables}
  Springer-Verlag, New York, ISBN 978-3-540-06635-4 (1975).


\bibitem{Nam} Nam K.; {\it  Large deviations
    and localization of the microcanonical
ensembles given by multiple constraints}
Ann. Probab. 48(5) 2525--2564
DOI: 10.1214/20-AOP1430 (2020).

\bibitem{Re}  Rezakhanlou F.; {\it Large deviations from a kinetic limit}
  Annals of Prob. 26(3) 1259–-1340 (1998).









\end{thebibliography}
\end{document}